\documentclass[11pt]{article}
\usepackage{fullpage}
\usepackage{amsthm,amsfonts}
\usepackage{amsmath}
\allowdisplaybreaks[4]
\usepackage{amsthm}
\usepackage{listings} % 引入代码列表包
\usepackage{xcolor}   % 引入颜色包
\usepackage{caption}
\usepackage{tikz}
\usetikzlibrary{arrows.meta}

\usepackage{algorithm}
\usepackage{algpseudocode}
\usepackage{amssymb}
\usepackage{color}
\usepackage{mathrsfs}
\usepackage{enumitem}
\usepackage{bm}
\usepackage{multirow}
\usepackage{booktabs}
\usepackage{makecell}
\usepackage{graphicx}
\usepackage{comment}
\usepackage{cases}
\usepackage{appendix}
\usepackage{tikz}
\usetikzlibrary{arrows,shapes}
\usepackage{marginnote}
\usepackage{tcolorbox}
\usepackage[qm]{qcircuit}
\usepackage{braket}

\colorlet{color1}{blue}
\colorlet{color2}{red!50!yellow}

\usepackage{hyperref}[6.83]
\hypersetup{
  colorlinks=true,
  frenchlinks=false,
  pdfborder={0 0 0},
  naturalnames=false,
  hypertexnames=false,
  breaklinks,
  allcolors = color1,
  urlcolor = color2,
}

\usepackage[capitalise]{cleveref}

\numberwithin{equation}{section}

\usepackage{graphicx}
\usepackage{anyfontsize}
\usepackage{amsmath}
\usepackage{xcolor}
\usepackage{subcaption}
\usepackage{caption}
\usepackage{graphicx}
\usepackage{booktabs}
\usepackage{pifont}
\usepackage{amssymb}
\usepackage{algorithm}
\usepackage{algpseudocode}
\usepackage{amsfonts}
\usepackage{mathrsfs}

\newtheorem{theorem}{Theorem}
\newtheorem{lemma}[theorem]{Lemma}
\newtheorem{proposition}[theorem]{Proposition}

\newtheorem{definition}[theorem]{Definition}
\newtheorem{result}[theorem]{Result}

\allowdisplaybreaks
\title{Contour-integral and Fourier transform based multivariable quantum eigenvalue transformation for commuting matrices}
\author{Shan Jiang$^{1}$, \quad Dong An$^{2}$ \\ 
\footnotesize $^{1}$ School of Mathematical Sciences, Peking University, Beijing, China\\
\footnotesize $^{2}$ Beijing International Center for Mathematical Research, Peking University, Beijing, China\\
}

\date{ }

\begin{document}

\maketitle

\begin{abstract}
We study the problem of implementing multivariable matrix-valued functions on a quantum computer and propose two quantum algorithms for multivariable matrix eigenvalue transformations acting on tuples of pairwise commuting matrices. 
The first algorithm is based on multivariable contour integrals and applies to arbitrary holomorphic functions on $\mathbb{C}^n$. 
The second algorithm is based on high-dimensional Fourier transforms and designed for smooth functions of commuting Hermitian matrices. 
For both algorithms, we discuss and analyze the complexity of their quantum implementation leveraging quantum singular value transformation, compression gadgets, and linear combination of unitaries. 
We additionally study their hybrid quantum--classical variants that reduce the number of required ancilla qubits to logarithmic dependence on the number of variables, at the cost of increased query complexity. As an application, we show that multivariable matrix polynomials can be implemented with no additional explicit degree dependence, depending only on global properties of the polynomials.
\end{abstract}

\tableofcontents
\section{Introduction}
Computing matrix-valued functions is a crucial problem in scientific computing. Many practical problems can be directly represented as matrix-valued functions, such as matrix inversion, Hamiltonian simulation and the simulation of dynamical systems through matrix exponentials. 
Furthermore, the matrix-valued function problem can also act as a subproblem to solve more complex problems. 
For example, matrix $\varphi$-function is a fundamental subroutine for nonlinear time integration, 
two-matrix geometric means arise naturally in Wasserstein barycenter computation, and matrix square roots and inverse square roots are widely used in covariance estimation and statistical computations. 
Despite its wide applications, implementing matrix-valued functions, especially for high-dimensional matrices, generally remains computationally expensive on classical computers. 

In recent years, quantum computers have demonstrated the potential to accelerate matrix-valued function problems through quantum superposition and entanglement.
For this task, the quantum singular value transformation (QSVT)~\cite{GilyenSuLowEtAl2018} is a remarkable framework. 
Based on quantum signal processing (QSP)~\cite{LowChuang2017}, QSVT provides a unified method for problems that can be represented as \emph{singular value transformation} of matrices both time- and space-efficiently, and has been demonstrated to unify major quantum algorithms including search, phase estimation and Hamiltonian simulation~\cite{MartynRossiTanEtAl2021}. 
Nevertheless, although certain scientific computing problems such as matrix inversion can be represented as singular value transformation, many other problems, including matrix polynomials and solving differential equations, are naturally \emph{eigenvalue transformation} instead of singular value transformation. 
This inspires recent rapid progresses on quantum algorithms for matrix eigenvalue transformation, such as quantum eigenvalue processing~\cite{LowSu2024}, Laplace transform based quantum eigenvalue transformation~\cite{AnChildsLinYing2024}, contour integral approach~\cite{TakahiraOhashiSogabeEtAl2020,TakahiraOhashiSogabeEtAl2021,JiangAn2026}, and Poisson summation approach~\cite{WangZhuangDouEtAl2026}. 
For either singular value transformation or eigenvalue transformation, these existing algorithms mainly focus on computing single-variable matrix functions, i.e., functions of a single matrix. 

In matrix-valued function problem, the computation of multivariable matrix functions, i.e., functions of multiple matrices is another important task with broad applications. 
For example, digital simulation of time-dependent evolutions can be reduced to multivariable matrix-valued function problems, such as the time-dependent Hamiltonian simulation or ordinary differential equation evolution.
In multigraph filtering, different edge types are represented by multiple graph shift matrices, and the resulting filtering operation is described by a multivariable matrix polynomial. 
Despite its wide applications, a general quantum algorithm framework for multivariable matrix functions is still lacking, and it can be challenging to generalize algorithms for single-variable matrix functions to the multivariable case. 
In QSP/QSVT framework, several recent works~\cite{RossiChuang2022,NemethKoverEtAl2023, MoriMizutaEtAl2024, LaneveWolf2024, Laneve2025} have attempted to extend QSP to multivariable case, whose success may lead to a multivariable matrix singular value transformation algorithm. 
However, compared to the completeness of univariate QSP, the conditions required for multivariable approaches remain more stringent, making it difficult to handle general multivariable QSP and QSVT. 
Additionally, for general eigenvalue transformations, efficient quantum algorithms for the multivariable case are missing. 

In this work, we make a step towards the realization of general multivariable matrix-valued functions by designing two efficient quantum multivariable eigenvalue transformation algorithms for \emph{mutually commuting matrices}. 
The first approach is based on contour integral, which applies any holomorphic function on $\mathbb{C}^n$ to a set of mutually commuting matrices, and the second is based on Fourier transform, which applies any function $f$ satisfying $f(\boldsymbol{x})\in H^s(\mathbb{R}^n)\cap L^1(\mathbb{R}^n)$, $s>\frac{n}{2}$ and $x_jf(\boldsymbol{x})\in L_1(\mathbb{R}^n)$ to a set of mutually commuting Hermitian matrices. 
Our quantum algorithms are realized from these two representations by the linear combination of unitaries (LCU) subroutine~\cite{ChildsWiebe2012}. 
Additionally, we combine these approaches and the recently proposed sampling-based LCU~\cite{Chakraborty2024,WangMcArdleBerta2024} to propose qubit-efficient multivariable quantum eigenvalue transformation algorithms. 

We carefully analyze the general complexities of our algorithms, and study their efficiency for computing matrix polynomials as a concrete application. 
Specifically, the complexities of our algorithms for polynomials depend on analytical properties such as the polynomial's maximum value and Lipschitz continuity, rather than the polynomial's degree or the coefficients of each term in the polynomial. 
This is highly advantageous compared to a straightforward LCU when implementing certain polynomials such as Chebyshev polynomials, which are well-behaved but with exponentially large $1$-norm of the coefficients.

\subsection{Problem statement}
Let $\mathcal{A} = \mathbb{C}^{m \times m}$ denote the space of complex $m \times m$ matrices. Let $|\psi\rangle$ be an $m$-dimensional quantum state, and let
\[
\boldsymbol{A} = (A_1, A_2, \ldots, A_n) \in \mathcal{A}^n
\]
be an $n$-tuple of pairwise commuting matrices. Our goal is to design quantum algorithms that approximate the action of a multivariable matrix eigenvalue transformation
\[
f(\boldsymbol{A}) = f(A_1, A_2, \ldots, A_n)
\]
on the input state $|\psi\rangle$, where $f : \mathbb{C}^n \to \mathbb{C}$ is a complex-valued function. Here we define this problem as below \cite{HartmannLesch2024}:

\begin{definition}[Multivariable matrix eigenvalue transformation]
\label{definition:Multi-matrix_function}
Let $\mathbb{F}\in\{\mathbb{C},\mathbb{R}\}$ be a scalar field. For an open set $\Omega\in\mathbb{F}^n$, let $f:\mathbb{F}^n\rightarrow\mathbb{F}$ be a multivariable function on $\Omega$. Let $\mathcal{A}=\mathbb{C}^{m\times m}$ be the matrix space, and $\boldsymbol{A}=\left(A_{1}, \cdots, A_{\mathrm{n}}\right) \in \mathcal{A}^{\mathrm{n}}$ be an $n$-tuple of commuting matrices with its joint spectrum $\operatorname{spec}_{\mathcal{A}}\left(\boldsymbol{A}\right)\subset\Omega\subset \mathbb{F}^n$. Then $f(\boldsymbol{A})$ is a multi-matrix eigenvalue transformation (or simply called multi-matrix function in this work) if the following proposition holds.
\begin{enumerate}
\item For a polynomial $f \in \mathbb{F}\left[\mathrm{z}_{1}, \cdots, \mathrm{z}_{\mathrm{n}}\right]$, of the form
$$
f(\boldsymbol{z})=f\left(z_{1}, \cdots, z_n\right)=\sum_{\alpha \in \mathbb{N}^n} \lambda_{\alpha} \cdot z_{1}^{\alpha_{1}} \cdots z_n^{\alpha_n}=: \sum_{\alpha \in \mathbb{N}^n} \lambda_{\alpha} \cdot z^{\alpha}, 
$$
we have
$$
f(\boldsymbol{A})=f\left(A_{1}, \cdots, A_{n}\right)=\sum_{\alpha \in \mathbb{N}^{n}} \lambda_{\alpha} \cdot A_{1}^{\alpha_{1}} \cdots A_{n}^{\alpha_{n}}=: \sum_{\alpha \in \mathbb{N}^{n}} \lambda_{\alpha} \cdot A^{\alpha}.
$$
\item For $f:\mathbb{F}^n\rightarrow\mathbb{F}$, if $r>n$, $A_{\mathrm{n}+1}, \cdots, A_{\mathrm{r}} \in \mathcal{A}$ and $\widetilde{f}:\mathbb{F}^r\rightarrow\mathbb{F}$ is the trivial extension of $f$ defined by $\widetilde{f}\left(z_{1}, \cdots, z_{\mathrm{r}}\right)=f\left(z_{1}, \cdots, z_{\mathrm{n}}\right)$. Then denoting $\widetilde{\boldsymbol{A}}:=\left(A_{1}, \cdots, A_{\mathrm{r}}\right)$, we have
$$
\widetilde{f}(\widetilde{\boldsymbol{A}})=f(\boldsymbol{A}).
$$
\item If $\left\{f_k\right\}_{k \in \mathbb{N}}$ is a sequence of functions which converges locally uniformly to $f \in \mathbb{F}^n$, then $f_{k}(\boldsymbol{A})\rightarrow f(\boldsymbol{A})$.
\end{enumerate}

\end{definition}

When these commuting matrices can be diagonalized by the same set of invertible matrices, let $A_j=QD_jQ^{-1}$ for $j=1,\cdots,n$, where $Q$ is an invertible matrix and $\boldsymbol{D}=\left(D_{1}, \cdots, D_{\mathrm{n}}\right)$ is an $n$-tuple of diagonal matrices. Suppose $D_j=\operatorname{diag}(d_{j1},\cdots,d_{jm})$, then it is straightforward to show that $f(\boldsymbol{A})=Qf(\boldsymbol{D})Q^{-1}$ is a multi-matrix function that satisfies \cref{definition:Multi-matrix_function}, where $f(\boldsymbol{D})=\operatorname{diag}(f(\boldsymbol d_1),\cdots,f(\boldsymbol d_m))$ and $\boldsymbol{d}_k=(d_{1k},\cdots,d_{nk})$. 
Therefore, \cref{definition:Multi-matrix_function} is a self-consistent way of defining eigenvalue transformation for mutually commuting matrices. 

For each $j = 1, \ldots, n$, assume that $\alpha_j \ge \|A_j\|$ and that there exists an $(\alpha_j, a, 0)$ block-encoding of $A_j$, denoted by $U_{A_j}$, acting on $a$ ancilla qubits, such that
\[
U_{A_j} =
\begin{pmatrix}
A_j / \alpha_j & \cdot \\
\cdot & \cdot
\end{pmatrix}.
\]
Furthermore, assume access to a state-preparation oracle $U_\psi$ satisfying
\[
U_\psi |0\rangle = |\psi\rangle .
\]
Using the oracles $\{U_{A_j}\}_{j=1}^n$ and $U_\psi$, we formalize the following computational tasks.

\begin{definition}[Quantum Matrix Function Problem]
\label{definition:main_problem}
Let $A_1, \ldots, A_n \in \mathbb{C}^{m \times m}$ be mutually commuting matrices with block-encodings $\{U_{A_j}\}_{j=1}^n$, and let $|\psi\rangle$ be a quantum state representing an $m$-dimensional complex vector. For a matrix function $f(\boldsymbol{A})$, define the normalized target state
\[
|f\rangle := \frac{f(\boldsymbol{A}) |\psi\rangle}{\| f(\boldsymbol{A}) |\psi\rangle \|}.
\]
Given an error parameter $0 \le \epsilon \le \tfrac{1}{2}$, the goal is to output a quantum state $|\widetilde{f}\rangle$ such that
\[
\left\|\, |f\rangle - |\widetilde{f}\rangle \,\right\| \le \epsilon
\]
with probability at least $1/2$.
\end{definition}

In certain applications, it is unnecessary to explicitly prepare the quantum state $|f\rangle$. Instead, one is interested only in estimating its expectation value with respect to a given observable $O$. This leads to the following estimation problem.

\begin{definition}[Quantum Matrix Function Estimation Problem]
\label{definition:estimation_problem}
Let $A_1, \ldots, A_n \in \mathbb{C}^{m \times m}$ be matrices with block-encodings $\{U_{A_j}\}_{j=1}^n$, and let $|\psi\rangle$ be a quantum state. Given a matrix function $f(\boldsymbol{A})$, an observable $O$, and accuracy parameters $\epsilon > 0$ and $\delta \in (0,1)$, the task is to output a real number $\mu$ such that
\[
\left| \mu - \langle \psi | f^\dagger(\boldsymbol{A})\, O\, f(\boldsymbol{A}) | \psi \rangle \right| \le \epsilon
\]
with probability at least $1 - \delta$.
\end{definition}

\subsection{Main results}

\subsubsection{Quantum multivariable contour integral algorithm}
In this method we consider $\mathbb{F}=\mathbb{C}$ and $f:\mathbb{C}^n\rightarrow\mathbb{C}$ is a holomorphic function on $\Omega$.
The quantum multivariable contour integral algorithm is based on the integral formula:
\begin{equation}
f(\boldsymbol{A})=f\left(A_{1}, \cdots, A_{n}\right):=\frac{1}{(2 \pi \mathrm{i})^n}\int_{\Gamma_{1}} \cdots \int_{\Gamma_{n}} f(\boldsymbol{z}) \cdot \prod_{j=1}^{n}\left(z_{j}I_m-A_{j}\right)^{-1} \mathrm{~d} \boldsymbol{z},
\end{equation}
here $\boldsymbol{z}=(z_{1}, \cdots, z_{n})$ and $\mathrm{d} \boldsymbol{z}=\mathrm{d} z_{1} \cdots \mathrm{~d} z_{n}$, and $\Gamma_j, 1 \leq \mathrm{j} \leq \mathrm{n}$, are integration contours that encircle $\operatorname{spec}_{\mathcal{A}}\left(A_j\right)$. This formula has been proved in \cite[Theorem 2.4]{HartmannLesch2024} to satisfy \cref{definition:Multi-matrix_function}. Similar to \cite{TakahiraOhashiSogabeEtAl2021,JiangAn2026}, we implement the matrix inverse $(z_jI_m-A_j)^{-1}$ by QSVT, use a compression gadget to multiply them, and  use quantum LCU~\cite{ChildsWiebe2012} to compute the finite sum from the integral discretization to approximate the matrix function in our target.

Our first main result is the complexity analysis of this algorithm.

\begin{result}[Informal version of \cref{theorem:contour_integral_complexity}]
Ignoring the $n$-th power of some constant, the quantum multivariable contour integral algorithm can solve the problem defined by \cref{definition:main_problem} using 
$\widetilde{\mathcal{O}}\left(\frac{B\prod_{j=1}^n\gamma_j l_j}{\left\|f(\boldsymbol A)\ket{\psi}\right\|}\cdot\gamma_j\alpha_j\mathrm{log}\frac{1}{\epsilon}\right)$ queries to the input model of the matrix $A_j$, $\mathcal{O}\left(\frac{B\prod_{j=1}^n\gamma_j l_j}{\left\|f(\boldsymbol A)\ket{\psi}\right\|}\right)$ queries to the input model of $\ket{\psi}$ and $\mathcal{O}\left(\sum_{j=1}^n\log\frac{n(B\gamma_j + BK_{\Gamma_j}+L)\prod_{j=1}^n l_j\gamma_j}{\|f(\boldsymbol A)\ket{\psi}\|\epsilon}\right)$ additional ancilla qubits. Here $\alpha_j \geq \|A_j\|$, $B=\max_{\boldsymbol z\in\prod_{j=1}^n\Gamma_j}|f(\boldsymbol z)|$, $L$ is the Lipschitz constant of $f$ on the holomorphic area of $f$, $l_j$ is the length of $\Gamma_j$, $K_{\Gamma_j}$ is an upper bound on the tangent variation rate of $\Gamma_j$, and $\gamma_j \geq \|(z_jI_m-A_j)^{-1}\|$ for all $z \in \Gamma_j$.
\label{res:contour}
\end{result}

This result shows that the number of ancilla used in quantum multivariable contour integral algorithm depends logarithmically on many technical parameters. Since these parameters involve $n$-term product, they ultimately exhibit an approximately quadratic dependence on $n$. For the observable estimation problem defined in Definition~\ref{definition:estimation_problem}, we develop a hybrid quantum-classical algorithm whose ancilla-qubit requirement scales only logarithmically with $n$ and is independent of all other technical parameters. The algorithm combines contour integral representation with the sampling-based LCU technique introduced in~\cite{Chakraborty2024}. Related ideas have previously been employed to design qubit-efficient hybrid quantum–classical algorithms for a variety of tasks~\cite{WanBertaCampbell2022,DongLinTong2022,AnLiuLin2023,WangMcArdleBerta2024,WatsonWatkins2024}.
Specifically, as the matrix function can be approximated by the linear combination of the products of matrix inverses, the observable $\bra{\psi}f^\dagger(\boldsymbol A) Of(\boldsymbol A)\ket{\psi}$ can likewise be approximated by a linear combination of observables involving only $O$ and two products of matrix inverses with form $ \prod_{j=1}^{n}\left(z_{j}I_m-A_{j}\right)^{-1} $ with different choices of the parameter vector $\boldsymbol z$. 
We then employ classical sampling to draw values of $\boldsymbol z$ according to a probability distribution determined by the coefficients in the linear combination. For each sampled term, the corresponding observable is independently estimated on a quantum computer using QSVT and compression gadget to implement matrix inverses and the Hadamard test to evaluate the expectation value of $O$. The final estimate is obtained by classically aggregating the sampled outcomes.

The complexity of our algorithm is given in the following result, which shows that in our algorithm the amount of ancilla qubits has logarithmic dependence on $n$, while the overall query complexity has been worse compared to the purely quantum contour integral algorithm. 

\begin{result}[Informal version of \cref{theorem:Ancilla-efficient_complexity_contour}]
Ignoring the $n$-th power of some constant, the quantum multivariable contour integral algorithm can solve the problem defined by \cref{definition:estimation_problem} using
$\widetilde{\mathcal{O}} \left(\frac{\|O\|^2 \gamma_j\alpha_j(B\prod_{j=1}^n\gamma_j l_j)^4}{\epsilon^2 }\right)
$ queries to the input model of the matrix $A_j$, $\mathcal{O} \left(\frac{\|O\|^2 (B\prod_{j=1}^n\gamma_j l_j)^4}{\epsilon^2 }\right)
$queries to the input model of $\ket{\psi}$ and $\mathcal{O}\left(\log n\right)$ additional ancilla qubits. Here
the parameters are defined the same as in \cref{res:contour}.
\end{result}

\subsubsection{Quantum multivariable Fourier transform algorithm}
In this method we consider $\mathbb{F}=\mathbb{R}$ and $f:\mathbb{R}^n\rightarrow\mathbb{R}$ with some smoothness requirements.
When the pairwise commuting matrices are all Hermitian, write $\boldsymbol{H}=(H_1,\cdots,H_n)$, we prove that we can design the quantum multivariable Fourier transform algorithm based on the high-dimensional Fourier transform formula:

\begin{equation}
    f(\boldsymbol{H})=f(H_1,\cdots,H_n)=\lim _{M \rightarrow+\infty} \frac{1}{(2 \pi)^{\frac{n}{2}}} \int_{|\boldsymbol{\lambda}|_\infty \leq M} \widehat{f}(\boldsymbol{\lambda}) \mathrm{e}^{\mathrm{i} \boldsymbol{\lambda} \cdot \boldsymbol{H}} \mathrm{~d} \boldsymbol{\lambda}
\end{equation}

Here $\boldsymbol{\lambda}=(\lambda_1,\lambda_2,\cdots,\lambda_n)\in\mathbf{R}^n$ and $\widehat{f}(\boldsymbol{\lambda})$ is the Fourier transformation of $f(\boldsymbol{x})$. Following the previous idea, we implements the Hamiltonian simulation $\mathrm{e}^{\mathrm{i}\lambda_j H_j}$ by QSVT, use a compression gadget to multiply them, and  use quantum LCU~\cite{ChildsWiebe2012} to compute the finite sum from the integral discretization to approximate the matrix function in our target.

Similarly, we give the complexity analysis of this algorithm as a main result.

\begin{theorem}[Informal version of \cref{theorem:fourier_transform_complexity}]
Let $U_{H_j}$ be an $(\alpha_j,a,0)$-block-encoding of $H_j$ for $j=1,\cdots,n$. Under the assumption that $f(\boldsymbol{x})\in H^s(\mathbb{R}^n)\cap L^1(\mathbb{R}^n)$, $s>k+\frac{n}{2}$ and $x_jf(\boldsymbol{x})\in L_1(\mathbb{R}^n)$ for $j=1,\cdots,n$, let $|f(\boldsymbol{x})|\leq B$, we can solve the problem defined by Definition \ref{definition:main_problem} using a quantum circuit with
$$\widetilde{\mathcal{O}}\left(\left(\alpha_jM+\log (1 / \epsilon^\prime)\right)\cdot \frac{\left(
\pi^{n/2}
\frac{\Gamma\!\left(s-\frac{n}{2}\right)}
{\Gamma(s)}
\right)^{1/2}
\|f\|_{H^s(\mathbb{R}^n)}}{(2 \pi)^{\frac{n}{2}}\|f(\boldsymbol H)\ket{\psi}\|}\right)$$ queries to the input model of the matrix $H_j$, 
$$\mathcal{O}\left(\frac{\left(
\pi^{n/2}
\frac{\Gamma\!\left(s-\frac{n}{2}\right)}
{\Gamma(s)}
\right)^{1/2}
\|f\|_{H^s(\mathbb{R}^n)}}{(2 \pi)^{\frac{n}{2}}\|f(\boldsymbol H)\ket{\psi}\|}\right)$$ queries to the input model of $\ket{\psi}$ and $n\log N+\log n+a+4$ ancilla qubits, where $\epsilon^\prime,M,N$ is given by: 
\begin{equation}
    M\geq\left(\frac{1}{(2 \pi)^{\frac{n}{2}}}\cdot\frac{8}{\|f(\boldsymbol H)\ket{\psi}\|\epsilon}\cdot\sum_{j=1}^n\left\|\left(\frac{\partial^kf}{\partial x_j^k}\right)^{\wedge}\right\|_{L^1(\mathbb{R}^n)}\right)^\frac{1}{k},
\end{equation}
\begin{equation}
    N\geq\frac{(2M)^{n+1}}{(2 \pi)^{n}}\cdot\left(\left\|f\right\|_{L_1} \sum_{j=1}^n\|H_j\|+ \sum_{j=1}^n\left\|x_jf\right\|_{L_1}\right)\cdot\frac{8}{\|f(\boldsymbol H)\ket{\psi}\|\epsilon}.
\end{equation}
\begin{equation}
    \epsilon^\prime\leq\frac{ (2 \pi)^{\frac{n}{2}}\|f(\boldsymbol H)\ket{\psi}\|\epsilon}{2n\left(
\pi^{n/2}
\frac{\Gamma\!\left(s-\frac{n}{2}\right)}
{\Gamma(s)}
\right)^{1/2}
\|f\|_{H^s(\mathbb{R}^n)}}
\end{equation}
\label{res:Fourier}
\end{theorem}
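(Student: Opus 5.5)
The plan is to split the total error $\epsilon$ into three parts: truncation of the Fourier integral, discretization by a Riemann sum, and the Hamiltonian simulation error inside each term. The LCU normalization, divided by $\|f(\boldsymbol H)\ket{\psi}\|$, will set the number of amplitude amplification rounds.

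First I will justify the representation. For commuting Hermitian $H_j$ there is a common unitary eigenbasis, so $f(\boldsymbol H)=Uf(\boldsymbol D)U^\dagger$, and it suffices to check the formula pointwise on the joint spectrum. The hypothesis $f\in H^s$ with $s>n/2$ implies $\widehat f\in L^1$, via Cauchy--Schwarz with weight $(1+|\boldsymbol\lambda|^2)^{s}$:
\[
\|\widehat f\|_{L^1}\le\Big(\int(1+|\boldsymbol\lambda|^2)^{-s}d\boldsymbol\lambda\Big)^{1/2}\|f\|_{H^s}=\Big(\pi^{n/2}\tfrac{\Gamma(s-n/2)}{\Gamma(s)}\Big)^{1/2}\|f\|_{H^s}.
\]
This is exactly the constant in the statement. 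Fourier inversion then holds pointwise, since $f$ is continuous.

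Next, the truncation to $|\boldsymbol\lambda|_\infty\le M$. On the complement of the cube, some $|\lambda_j|>M$, so $1\le\sum_j|\lambda_j|^k/M^k$. Since $|\lambda_j|^k|\widehat f|=|(\partial_j^kf)^\wedge|$, and these transforms are in $L^1$ because $s>k+n/2$, the tail in operator norm is at most $(2\pi)^{-n/2}M^{-k}\sum_j\|(\partial_j^k f)^\wedge\|_{L^1}$. Requiring this to be at most $\|f(\boldsymbol H)\ket\psi\|\epsilon/8$ gives the bound on $M$.

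Then I discretize the cube with $N$ points per axis (step $h=2M/N$) by a midpoint or left-point rule, which needs $n\log N$ index qubits. The error is bounded by $h\cdot\mathrm{vol}\cdot\sup|\nabla_{\boldsymbol\lambda}(\widehat f(\boldsymbol\lambda)e^{i\boldsymbol\lambda\cdot\boldsymbol H})|$. Differentiating in $\lambda_j$ gives two terms: $\widehat f\cdot iH_j e^{i\boldsymbol\lambda\cdot\boldsymbol H}$, bounded by $\|f\|_{L^1}\|H_j\|$, and $\partial_j\widehat f=(-ix_jf)^\wedge$, bounded by $\|x_jf\|_{L^1}$. The $(2\pi)^{-n/2}$ factors combine to give the $(2\pi)^{-n}$ and $(2M)^{n+1}$ factors, and setting the total to $\|f(\boldsymbol H)\ket\psi\|\epsilon/8$ yields the bound on $N$.

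For the implementation, each $e^{i\lambda_jH_j}$ is realized by QSVT Hamiltonian simulation with $\mathcal O(\alpha_jM+\log(1/\epsilon'))$ queries. The $n$ factors are multiplied with a compression gadget, costing $\log n+\mathcal O(1)$ qubits, so the product has error at most $n\epsilon'$. The LCU over grid points has coefficient $1$-norm $\le(2\pi)^{-n/2}h^n\sum|\widehat f|\approx(2\pi)^{-n/2}\|\widehat f\|_{L^1}$, which is bounded by the Sobolev constant above. The LCU error is therefore $n\epsilon'$ times this $1$-norm, and the choice of $\epsilon'$ keeps it within budget. Amplitude amplification then needs (normalization)$/\|f(\boldsymbol H)\ket\psi\|$ rounds, which gives the stated query counts for $H_j$ and for $\ket\psi$. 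Finally I combine the three errors and use the standard bound relating unnormalized to normalized state error (a factor of 2).

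I expect the discretization step to be the main obstacle. Two points need care there. First, the Riemann sum $1$-norm must be controlled by $\|\widehat f\|_{L^1}$ rather than by $\sup|\widehat f|\cdot(2M)^n$; I will handle this by choosing the weights as cell integrals of $\widehat f$, or by absorbing the discrepancy into the same discretization bound. Second, the gradient bound must be uniform, using $\|\partial_j\widehat f\|_\infty\le(2\pi)^{-n/2}\|x_jf\|_{L^1}$.
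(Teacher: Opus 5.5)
Your proposal is correct and follows the paper's proof essentially step for step. The shared steps are:
\begin{itemize}
\item common diagonalization for the representation;
\item the weighted tail bound $\int_{|\boldsymbol\lambda|_\infty>M}|\widehat f|\le M^{-k}\sum_j\|(\partial_j^kf)^\wedge\|_{L^1}$ for truncation;
\item a left-point Riemann sum controlled by $\|f\|_{L_1}\|H_j\|$ and $\|x_jf\|_{L_1}$;
\item QSVT Hamiltonian simulation combined by a compression gadget and LCU;
\item the Cauchy--Schwarz Sobolev bound on $\|\widehat f\|_{L^1}$;
\item amplitude amplification with the factor-$2$ normalization inequality.
\end{itemize}

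Your handling of $\|\boldsymbol c\|_1$ is more careful than the paper's, which simply asserts $\|\boldsymbol c\|_1\lesssim(2\pi)^{-n/2}\|\widehat f\|_{L^1}$ ``for sufficiently high precision''. Your second option works with the stated $N$, since the extra term is at most $\|f(\boldsymbol H)\ket{\psi}\|\epsilon/8\le\frac{\epsilon}{8}(2\pi)^{-n/2}\|\widehat f\|_{L^1}$.

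The one detail you leave implicit is how the select oracle is made uniform in $k_j$. The paper block-encodes $\frac{\lambda_{j,k_j}}{M}H_j$ with a $k_j$-controlled rotation and runs a single time-$M$ QSVT simulation circuit. This is what keeps the cost at $\mathcal{O}(\alpha_jM+\log(1/\epsilon'))$ queries per round rather than scaling with $N$.
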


Although the query complexity contains terms exponentially depend of $n$, we show in  \cref{Appendix:Non-exponential dependence of Fourier transform} that when $f$ behaves well they cancel each other out, so the final complexity does not depend exponentially on $n$. Meanwhile, the number of ancilla used ultimately exhibit an approximately quadratic dependence on $n$. Similar to the previous analysis, we also develop a hybrid quantum-classical algorithm whose ancilla-qubit requirement scales only logarithmically with $n$ and is independent of all other technical parameters for the observable estimation problem defined in Definition~\ref{definition:estimation_problem}.
The algorithm is almost identical to the one in the previous part, but products of matrix inverses with form $ \prod_{j=1}^{n}\left(z_{j}I_m-A_{j}\right)^{-1}$ is replaced with products of Hamiltonian simulation with form $\prod_{j=1}^{n} \mathrm{e}^{\mathrm{i}\lambda_jH_j}$.

The complexity of our algorithm is summarized in the following result. It also shows that, the number of required ancilla qubits scales logarithmically with $n$, whereas the overall query complexity is higher than that of the purely quantum contour-integral algorithm.

\begin{theorem}[Informal version of \cref{theorem:Ancilla-efficient_complexity_fourier}]
The quantum multivariable Fourier transform algorithm can solve the problem defined by \cref{definition:estimation_problem} using 
$$
\widetilde{\mathcal{O}} \left(\left(\alpha_jM+\log (1 / \epsilon^\prime)\right)\frac{\|O\|^2 \ln (1 / \xi_2)\left(\frac{1}{(2 \pi)^{\frac{n}{2}}} \left(
\pi^{n/2}
\frac{\Gamma\!\left(s-\frac{n}{2}\right)}
{\Gamma(s)}
\right)^{1/2}
\|f\|_{H^s(\mathbb{R}^n)}\right)^4}{\epsilon^2 }\right)
$$ queries to the input model of matrix $H_j$,
$$
\mathcal{O} \left(\frac{\|O\|^2 \ln (1 / \xi_2)\left(\frac{1}{(2 \pi)^{\frac{n}{2}}} \left(
\pi^{n/2}
\frac{\Gamma\!\left(s-\frac{n}{2}\right)}
{\Gamma(s)}
\right)^{1/2}
\|f\|_{H^s(\mathbb{R}^n)}\right)^4}{\epsilon^2 }\right)
$$ queries to the input model of $\ket{\psi}$ and $\log n+4$ additional ancilla qubits
Here the parameters are defined the same as in \cref{res:Fourier}.
\end{theorem}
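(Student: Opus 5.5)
The plan is to follow the sampling-based LCU template of \cite{Chakraborty2024,WangMcArdleBerta2024}, replacing the matrix-inverse products of the contour-integral hybrid algorithm by products of Hamiltonian simulations. The starting point is the discretized Fourier representation already used for \cref{res:Fourier}:
\[
f(\boldsymbol H)\approx \widetilde f(\boldsymbol H):=\sum_{\boldsymbol k} c_{\boldsymbol k}\, V_{\boldsymbol k},\qquad V_{\boldsymbol k}=\prod_{j=1}^n \mathrm{e}^{\mathrm{i}\lambda_{k_j}H_j},\qquad c_{\boldsymbol k}=\frac{w_{\boldsymbol k}}{(2\pi)^{n/2}}\widehat f(\boldsymbol\lambda_{\boldsymbol k}),
\]
where the $\lambda_{k_j}$ run over an $N$-point grid on $[-M,M]$. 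I will reuse the truncation bound (choice of $M$, via the decay of $\widehat f$ controlled by $\|(\partial^k f/\partial x_j^k)^\wedge\|_{L^1}$) and the quadrature bound (choice of $N$, via the Lipschitz constant of $\boldsymbol\lambda\mapsto\widehat f(\boldsymbol\lambda)\mathrm{e}^{\mathrm{i}\boldsymbol\lambda\cdot\boldsymbol H}$, which is controlled by $\|f\|_{L^1}\sum_j\|H_j\|+\sum_j\|x_jf\|_{L^1}$). Both bounds are already established for \cref{res:Fourier}, and together they give $\|f(\boldsymbol H)-\widetilde f(\boldsymbol H)\|\le \epsilon/(c\|O\|)$ after rescaling the targets. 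The $1$-norm $\|c\|_1$ I bound by Cauchy--Schwarz: $\int|\widehat f|\le \bigl(\int(1+|\boldsymbol\lambda|^2)^{-s}\bigr)^{1/2}\|f\|_{H^s}$, and the first factor equals $\pi^{n/2}\Gamma(s-n/2)/\Gamma(s)$. Hence $\|c\|_1\lesssim \Lambda:=(2\pi)^{-n/2}\bigl(\pi^{n/2}\Gamma(s-\tfrac n2)/\Gamma(s)\bigr)^{1/2}\|f\|_{H^s}$.

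Next I write the observable as the double sum
\[
\langle\psi|\widetilde f^\dagger O\widetilde f|\psi\rangle=\sum_{\boldsymbol k,\boldsymbol k'}\overline{c_{\boldsymbol k}}c_{\boldsymbol k'}\langle\psi|V_{\boldsymbol k}^\dagger O V_{\boldsymbol k'}|\psi\rangle .
\]
The estimator samples pairs $(\boldsymbol k,\boldsymbol k')$ classically with probability $|c_{\boldsymbol k}||c_{\boldsymbol k'}|/\|c\|_1^2$. For each sampled pair it runs a Hadamard test that estimates the real and imaginary parts of $\langle\psi|\widetilde V_{\boldsymbol k}^\dagger O\widetilde V_{\boldsymbol k'}|\psi\rangle$, where $O$ is accessed through a block-encoding scaled by $\|O\|$. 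The result is multiplied by $\|c\|_1^2$ times the phase of the coefficient. Each single-shot sample is bounded in magnitude by $\|c\|_1^2\|O\|$, so by Hoeffding's inequality
\[
\mathcal O\!\left(\frac{\|O\|^2\Lambda^4\ln(1/\xi_2)}{\epsilon^2}\right)
\]
repetitions suffice for accuracy $\epsilon/2$ with failure probability $\xi_2$. This count is also the number of calls to $U_\psi$.

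For the circuit cost, each $\widetilde V_{\boldsymbol k}$ is a product of $n$ QSVT Hamiltonian simulations $\mathrm{e}^{\mathrm{i}\lambda H_j}$ with $|\lambda|\le M$. Each costs $\mathcal O(\alpha_jM+\log(1/\epsilon'))$ queries to $U_{H_j}$, and the simulations are chained by the compression gadget, which needs only $\mathcal O(\log n)$ counter qubits. The Hadamard test adds one control qubit, and a few more flag qubits are needed. Since no LCU index register is needed, the ancilla count is $\log n+4$.

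The remaining step is the error budget, and it is the delicate one. The QSVT errors $\epsilon'$ in each of the $n$ factors accumulate to $n\epsilon'$ in $\widetilde V_{\boldsymbol k}$. Together with the imperfect compression-gadget success amplitude, this perturbs each term by $\mathcal O(n\epsilon'\|O\|)$, and the estimator amplifies the perturbation by $\|c\|_1^2$. I therefore must take $\epsilon'\lesssim\epsilon/(n\|O\|\Lambda^2)$ rather than the $\epsilon'$ of \cref{res:Fourier}; since it enters only logarithmically, the complexity statement is unaffected. I expect the main obstacle to be confirming that the compression gadget's subnormalization does not introduce an extra factor, for instance through postselection probability, into the Hadamard-test estimator. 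My plan is to use the gadget's block-encoding directly inside the controlled unitary, so that the estimator reads off the block entry without postselection.
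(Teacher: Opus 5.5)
Your proposal is correct and is essentially the paper's argument: the same discretized Fourier LCU with $\|\boldsymbol c\|_1$ bounded by Cauchy--Schwarz, two unitaries sampled independently from $|c_{\boldsymbol k}|/\|\boldsymbol c\|_1$ in the single-ancilla estimator of Algorithm~\ref{algorithm:random_LCU}, normalization-$1$ QSVT simulations of $\mathrm{e}^{\mathrm{i}\lambda_{j,k_j}H_j}$ chained by the compression gadget, and $T\propto\|O\|^2\|\boldsymbol c\|_1^4\ln(1/\xi_2)/\epsilon^2$. One correction: the truncation/quadrature target must carry the factor $\|f(\boldsymbol H)\|$, i.e.\ $\epsilon/(\|O\|\,\|f(\boldsymbol H)\|)$ as in the paper's $\xi_1$, not $\epsilon/(c\|O\|)$ with an absolute constant $c$.

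The obstacle you flag is closed in the paper as follows. The estimator is run on $|0^{a+3+\log n}\rangle\otimes|\psi\rangle$ with observable $\widetilde O=|0^{a+3+\log n}\rangle\langle 0^{a+3+\log n}|\otimes O$. Writing $P_{\boldsymbol k}$ for the product block-encoded by the gadget, this reads off exactly $\langle\psi|P_{\boldsymbol k}^\dagger O P_{\boldsymbol k'}|\psi\rangle$ without postselection. Since $\|\widetilde O\|=\|O\|$, $T$ is unchanged, and no block-encoding of $O$ (with its extra ancillas) is needed. Because every factor has normalization $1$, the gadget's normalization is also $1$, so no extra factor appears.
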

\subsubsection{Application in matrix polynomial}
\paragraph{Contour integral}
Let $f(\boldsymbol{z})$ be a multivariable complex polynomial of $\boldsymbol{z}$, and we would like to implement the matrix polynomial $f(\boldsymbol{A})$, which is naturally an eigenvalue transformation. 
For simplicity, we assume $A_j = S D_j S^{-1}$ is diagonalizable and let $\kappa_S = \|S\|\|S^{-1}\|$ denote the condition number of $S$.  
We can choose the contour to be a circle centered at the origin with radius larger than the spectral radius of $A_j$. 
Then, ignoring the $n$-th power of some constant,  \cref{res:contour} shows that the complexity of the contour integral approach is 
\begin{equation}
    \widetilde{\mathcal{O}}\left(\frac{B\kappa_{S}^{n+1}}{\left\|f(\boldsymbol A)\ket{\psi}\right\|}\mathrm{log}\frac{1}{\epsilon}\right). 
\end{equation}
Here $B=\max_{\boldsymbol z\in\prod_{j=1}^n\Gamma_j}|f(\boldsymbol z)|$. 

\paragraph{Fourier transform}
Let $f(\boldsymbol{x})$ be a multivariable real polynomial of $\boldsymbol{x}$. Since $f(\boldsymbol{x})$ doesn't support on bounded regions, we can define a $\mathbb{C}^\infty$ kernel function $k(\boldsymbol{x})$ that takes the value $1$ on the joint spectrum of $f(\boldsymbol x)$, and rapidly decreases to $0$ outside the spectrum of $f(\boldsymbol x)$. Let $g(\boldsymbol{x})=f(\boldsymbol{x})k(\boldsymbol{x})$, \cref{res:Fourier} shows that the query complexity of $H_j$ in the Fourier transform approach is
$$\widetilde{\mathcal{O}}\left(\left(\alpha_jM+\log (1 / \epsilon)\right)\cdot \frac{\left(
\pi^{n/2}
\frac{\Gamma\!\left(s-\frac{n}{2}\right)}
{\Gamma(s)}
\right)^{1/2}
\|g\|_{H^s(\mathbb{R}^n)}}{(2 \pi)^{\frac{n}{2}}\|f(\boldsymbol H)\ket{\psi}\|}\right)$$ with 
$$
M\geq\left(\frac{1}{(2 \pi)^{\frac{n}{2}}}\cdot\frac{8}{\|f(\boldsymbol H)\ket{\psi}\|\epsilon}\cdot\sum_{j=1}^n\left\|\left(\frac{\partial^kg}{\partial x_j^k}\right)^{\wedge}\right\|_{L^1(\mathbb{R}^n)}\right)^\frac{1}{k}
$$
for any positive integer $k$.
\subsection{Related works}

This work is an extension of our previous work \cite{JiangAn2026}. In \cite{JiangAn2026}, we analyzed the complexity of the contour integral method in the one-dimensional case, then we can ask whether this method can be generalized to higher dimensions. Luckily, \cite{HartmannLesch2024} provides us with the necessary integration formula, which allows our work to be extended to higher-dimensional commuting cases with almost no resistance.

In addition to contour integrals, another question is whether there are other similar integral schemes that can transform general high-dimensional matrix-valued functions into integrals of some easily implemented matrix-valued functions. \cite{RosenkranzEtAl2025} uses Fourier transform and Chebyshev transform for quantum state preparation for multivariate functions, and we applied Fourier transform to the problem of multivariable matrix-valued functions. Based on a simple analysis, we think that the Chebyshev transform method may not offer an asymptotic improvement in complexity compared to the Fourier transform method. Therefore, we only provide a detailed analysis of the Fourier transform method in this work. Based on our knowledge, this two methods are the first to implement multivariable matrix-valued functions that is not limited to a specific function form, but only requires smoothness and integral properties of the function. 

Besides our work, a series of works based on QSP and QSVT have the potential to realize multivariable matrix value functions. This path may lead to singular value transformations of multivariable matrices, which are slightly different from our eigenvalue transformations of multivariable matrices.
\cite{RossiChuang2022} provided a multivariable quantum signal processing method (MQSP). However, this method has a limitation on the multivariable polynomials it can represent, making it difficult to provide quantum algorithms for general polynomials like QSP. Subsequent work \cite{NemethKoverEtAl2023, LaneveWolf2024} further explored the polynomials that MQSP could express, but still did not overcome this problem.

\subsection{Discussions}

One limitation of our work is that it can only implement commutative multivariable matrix-valued functions. For non-commutative matrices, rewriting matrix-valued functions as scalar-valued functions results in the loss of information about the order of variables. Therefore, a method similar to the integral transform-based approach used in this paper for non-commutative matrices may not exist. We might be able to perform a Taylor expansion on this function and then directly sum all the expanded terms using LCU, but the advantages of this approach are still worth exploring.

Furthermore, we must acknowledge that most of the simple multivariable matrix-valued functions we encounter can be implemented by separately implementing the univariate matrix-valued functions and then combining them. For complex multivariate matrix-valued functions, our algorithm struggles to achieve good results due to the potential exponential dependence of the dimension $n$. This also limits the application scenarios of our method somewhat.
\subsection{Organization}

The rest of this work is organized as follows. 
In Section \ref{sec:prelim}, we briefly introduce some preliminary results needed in our analysis. 
Then we discuss contour integral algorithm in Section \ref{sec:contour}, and Fourier transform algorithm in Section \ref{sec:fourier}. 
We show the applications in Section \ref{sec:application}. 

\section{Preliminaries}\label{sec:prelim}

\subsection{LCU lemma}
The linear combination of unitaries (LCU) lemma provides a standard method to implement an operator of the form
$ \sum_{i=0}^{K-1} c_i U_i,$
where $\{U_i\}_{i=0}^{K-1}$ are unitary operators \cite{ChildsWiebe2012}. 
Let $c_i =  r_i e^{\mathrm{i}\theta_i} \in \mathbb{C}, r_i > 0, \theta_i \in [0,2\pi)$. 
We define
$\sqrt{c_i} := \sqrt{r_i} \mathrm{e}^{\mathrm{i}\theta_i/2}.$
Suppose two state-preparation unitaries $V$ and $\widetilde{V}$ such that their first column and first row, respectively, encode the amplitudes $\{\sqrt{c_i}\}_{i=0}^{K-1}$:
\begin{equation}
    V=\frac{1}{\sqrt{\|c\|_1}}
    \begin{pmatrix}
        \sqrt{c_0} & * & \cdots & * \\
        \vdots & \vdots & \ddots & \vdots \\
        \sqrt{c_{K-1}} & * & \cdots & *
    \end{pmatrix},
\end{equation}
\begin{equation}\label{matrix:tildeV}
    \widetilde{V}=\frac{1}{\sqrt{\|c\|_1}}
    \begin{pmatrix}
        \sqrt{c_0} & \cdots & \sqrt{c_{K-1}} \\
        * & \cdots & * \\
        \vdots & \ddots & \vdots \\
        * & \cdots & *
    \end{pmatrix},
\end{equation}
where $\|c\|_1 := \sum_{i=0}^{K-1} |c_i|$.
We also define the select oracle
\begin{equation}
    U := \sum_{i=0}^{K-1} |i\rangle\langle i| \otimes U_i .
\end{equation}
The following lemma shows that these ingredients allow us to implement the desired linear combination.

\begin{lemma}[{LCU lemma~\cite[Lemma 52]{GilyenSuLowEtAl2018}}]\label{lemma:lcu}
Let $c=(c_0,c_1,\ldots,c_{K-1}) \in \mathbb{C}^K$, and define
\(W := (\widetilde{V} \otimes I_m)\, U\, (V \otimes I_m).\)
Then for any input state $|\psi\rangle$,
\[
W\, |0^{\log K}\rangle |\psi\rangle
= \frac{1}{\|c\|_1}\, |0^{\log K}\rangle \sum_{i=0}^{K-1} c_i U_i |\psi\rangle
+ |\widetilde{\perp}\rangle,
\]
where $|\widetilde{\perp}\rangle$ is an unnormalized state satisfying
\[
\bigl(|0^{\log K}\rangle\langle 0^{\log K}| \otimes I_m \bigr)
|\widetilde{\perp}\rangle = 0.
\]
\end{lemma}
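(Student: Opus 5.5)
The plan is to verify the identity by a direct computation on the input $|0^{\log K}\rangle|\psi\rangle$, applying the three factors of $W=(\widetilde V\otimes I_m)\,U\,(V\otimes I_m)$ in order. Throughout, $K$ is taken to be a power of two; otherwise we pad with $c_i=0$ and $U_i=I_m$, which changes neither $\|c\|_1$ nor the target operator.

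\emph{Step 1.} Apply $V\otimes I_m$. By the definition of $V$, its first column is $\|c\|_1^{-1/2}(\sqrt{c_0},\dots,\sqrt{c_{K-1}})^T$. Hence
\[
(V\otimes I_m)|0\rangle|\psi\rangle=\frac{1}{\sqrt{\|c\|_1}}\sum_{i}\sqrt{c_i}\,|i\rangle|\psi\rangle .
\]
Since $\sum_i|\sqrt{c_i}|^2=\sum_i r_i=\|c\|_1$, this vector is normalized, so a unitary $V$ with this first column exists.

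\emph{Step 2.} Apply the select oracle $U$. This maps the state to
\[
\|c\|_1^{-1/2}\sum_i\sqrt{c_i}\,|i\rangle\, U_i|\psi\rangle .
\]

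\emph{Step 3.} Apply $\widetilde V\otimes I_m$ and decompose the output by projecting onto $|0^{\log K}\rangle$ in the ancilla register. The amplitude on $|0\rangle$ comes from the first row of $\widetilde V$, namely $\langle 0|\widetilde V|i\rangle=\|c\|_1^{-1/2}\sqrt{c_i}$. Because $\sqrt{c_i}\sqrt{c_i}=r_i\mathrm{e}^{\mathrm{i}\theta_i}=c_i$ under the stated convention, the $|0\rangle$ component equals
\[
\frac{1}{\|c\|_1}|0\rangle\sum_i c_iU_i|\psi\rangle .
\]
The remainder, $|\widetilde\perp\rangle$, is the sum of all components along $|j\rangle$ with $j\neq 0$, so $(|0\rangle\langle 0|\otimes I_m)|\widetilde\perp\rangle=0$ by construction.

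\emph{Main obstacle.} The one point needing care is Step 3: the first row of $\widetilde V$ must carry $\sqrt{c_i}$ itself, not its conjugate. This is exactly what produces $c_i$ rather than $|c_i|$ in the result. It must also be consistent with unitarity. It is, because the row $\|c\|_1^{-1/2}(\sqrt{c_0},\dots,\sqrt{c_{K-1}})$ is a unit vector, so it can be completed to a unitary, for instance by Gram–Schmidt. The rest of the argument is routine bookkeeping.
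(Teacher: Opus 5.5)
Your direct computation is correct, and it is the standard argument: the paper does not prove this lemma itself but cites \cite[Lemma 52]{GilyenSuLowEtAl2018}, whose proof is exactly this evaluation of the $|0^{\log K}\rangle$-component of $W|0^{\log K}\rangle|\psi\rangle$. Your two points of care are what make the paper's $V$ and $\widetilde V$ consistent: first, $(\sqrt{c_i})^2=c_i$ under the stated branch convention; second, the row $\|c\|_1^{-1/2}(\sqrt{c_0},\dots,\sqrt{c_{K-1}})$ is a unit vector and can therefore be completed to a unitary.
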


Equivalently, the unitary $W$ constitutes a
$(\|c\|_1, \log K, 0)$-block-encoding
of the operator $\sum_{i=0}^{K-1} c_i U_i$.

\subsection{Single-ancilla LCU}
\label{Single-Ancilla LCU}

The conventional LCU framework typically requires a large number of ancilla qubits as well as highly nontrivial multi-qubit controlled unitary operations, which pose significant challenges for near-term and intermediate-scale quantum devices. A recent development \cite{Chakraborty2024} introduces a \emph{single-ancilla} LCU scheme that reduces the ancilla requirement to a single qubit and replaces complicated multi-controlled operations with simpler controlled unitaries, at the expense of an increased number of circuit repetitions.

Let
$S = \sum_{i=0}^{K-1} c_i U_i$
be a linear combination of unitaries and let $|\psi\rangle$ denote the input state. In contrast to the standard LCU construction, which aims to coherently prepare the normalized state $S|\psi\rangle / \|S|\psi\rangle\|$, the single-ancilla LCU method is designed to estimate expectation values of the form
$\langle \psi | S^\dagger O S | \psi \rangle$
for a given observable $O$.

The single-ancilla LCU algorithm assumes that the coefficients $\{c_i\}_{i=0}^{K-1}$ are positive real numbers. This assumption is made without loss of generality. Indeed, for complex coefficients one may write
$c_i = r_i e^{\mathrm{i}\theta_i}, r_i > 0,\theta_i \in [0,2\pi),$
so that
$S = \sum_{i=0}^{K-1} r_i \left( e^{\mathrm{i}\theta_i} U_i \right).$
Since the phase $e^{\mathrm{i}\theta_i}$ can be absorbed into the implementation of the unitary $U_i$, we may assume throughout that all $c_i$ are positive real numbers.

To estimate the expectation value, the algorithm proceeds by randomly selecting unitaries $U_i$ according to the probability distribution
$\left\{ \frac{c_0}{\|c\|_1}, \frac{c_1}{\|c\|_1}, \ldots, \frac{c_{K-1}}{\|c\|_1} \right\},\|c\|_1 := \sum_{i=0}^{K-1} c_i,$
applying them to the input state, and measuring the observable $O$. The detailed procedure is summarized in Algorithm~\ref{algorithm:random_LCU}. The output $\mu$ serves as an estimator of $\langle \psi | S^\dagger O S | \psi \rangle$. 

\begin{algorithm}
\caption{Single-ancilla LCU for expectation observable $\left(O,U_k, \ket{\psi}, T\right)$}

1. Prepare the state $\ket{\psi_1}=\ket{+}\otimes\ket{\psi}$.

2. Sample two independent unitaries $V_1, V_2$ from the distribution $\left\{U_k, \frac{c_k}{\|c\|_1}\right\}$.

3. Define the controlled unitaries 
\[
\widetilde{V}_1 = |0\rangle\langle 0| \otimes I + |1\rangle\langle 1| \otimes V_1,
\qquad
\widetilde{V}_2 = |0\rangle\langle 0| \otimes V_2 + |1\rangle\langle 1| \otimes I,
\] and measure the observable $(X \otimes O)$ on the state
$$
\ket{\psi^\prime}=\tilde{V}_2 \tilde{V}_1 \ket{\psi_1}
$$

4. For the $j^{\text {th }}$ iteration, record the measurement outcome as  $\mu_j$.

5. Repeat Steps 1 to 4 for a total of $T$ times.

6. Output the estimator
$$
\mu=\frac{\|c\|_1^2}{T} \sum_{j=1}^T \mu_j .
$$
\label{algorithm:random_LCU}
\end{algorithm}

\subsection{Compression gadget}
The compression gadget provides a space-efficient way to coherently multiply a sequence of block-encoded operators while reusing the same block-encoding ancilla register. Instead of keeping a separate ancilla register for every factor, it introduces a logarithmic-size counter register that coherently records how many block encodings have been successfully applied~\cite[Lemma~3]{FangLinTong2023}.

\begin{lemma}[Compression gadget]\label{lemma:compression_gadget}
Let $V_1,\ldots,V_L$ be unitaries such that $V_\ell$ is an $(\alpha_\ell,a,0)$-block-encoding of $A_\ell$, then one can construct an $\left(\prod_{\ell=1}^L\alpha_\ell,\,
a+\left\lceil\log_2 L\right\rceil+1,\,0\right)$-block-encoding of the ordered product $A_L\cdots A_2A_1$, using one application of each $V_\ell$.
\end{lemma}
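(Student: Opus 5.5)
The plan is to prove the compression gadget by induction on the number of factors, with a counter register tracking how many block-encodings have succeeded. Write $V_\ell(|0^a\rangle\otimes I)=|0^a\rangle\otimes (A_\ell/\alpha_\ell)+|\perp_\ell\rangle$, where the second term lies in the range of $(I-|0^a\rangle\langle 0^a|)\otimes I$. The obstacle is that applying $V_{\ell+1}$ directly to this state mixes the garbage from step $\ell$ back into the $|0^a\rangle$ branch. So we cannot simply reuse the ancilla. Instead we introduce a counter register $c$ of $\lceil\log_2 L\rceil+1$ qubits, initialized to $|0\rangle$, and one extra flag direction. Together these record whether the ancilla has ever left $|0^a\rangle$.

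The construction interleaves each $V_\ell$ with a controlled increment. After applying $V_\ell$, we apply the unitary
\[
\mathrm{INC}_\ell := \bigl(|0^a\rangle\langle 0^a|\otimes I_c + (I-|0^a\rangle\langle 0^a|)\otimes X_{\mathrm{inc}}\bigr)\otimes I_m,
\]
which adds $1$ (mod $2^{\lceil\log_2 L\rceil+1}$) to the counter exactly when the ancilla is not $|0^a\rangle$. Concretely, I would use the version of \cite[Lemma~3]{FangLinTong2023}: increment the counter unconditionally before each $V_\ell$, and decrement it conditioned on the ancilla being $|0^a\rangle$ after $V_\ell$. Either way, the net effect is that the counter equals the number of failure events. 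Since at most $L$ such events occur, the counter value ranges over $\{0,\dots,L\}$. A register of $\lceil\log_2 L\rceil+1$ qubits holds these values without wraparound back to $0$, and this is where the $+1$ comes from. The full circuit is $W=\prod_{\ell=L}^{1}(\mathrm{INC}_\ell\, V_\ell)$, which uses each $V_\ell$ exactly once; the controlled increments need no extra queries.

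The correctness claim is proved by induction on $\ell$. After the first $\ell$ steps, the state restricted to counter value $0$ and ancilla $|0^a\rangle$ equals $|0^a\rangle|0\rangle_c\otimes (A_\ell\cdots A_1/\prod_{k\le\ell}\alpha_k)|\psi\rangle$, and every other component has a nonzero counter. The inductive step has two parts.
\begin{itemize}
\item For the good branch, $V_{\ell+1}$ produces $A_{\ell+1}/\alpha_{\ell+1}$ on $|0^a\rangle$, and the garbage part gets its counter incremented.
\item For branches with a nonzero counter, $V_{\ell+1}$ acts on the $a$-qubit ancilla and the system but not on the counter, and the counter only increases or stays the same. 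These branches therefore never return to counter $0$.
\end{itemize}
This no-return property is the main point to check carefully. It rests on the bound that the counter never exceeds $L$, so it cannot wrap around to $0$.

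Projecting onto $|0^a\rangle|0\rangle_c$ then gives exactly $A_L\cdots A_1/\prod_\ell\alpha_\ell$. This means $W$ is a $\left(\prod_{\ell}\alpha_\ell,\ a+\lceil\log_2 L\rceil+1,\ 0\right)$-block-encoding of the ordered product $A_L\cdots A_1$, as claimed.
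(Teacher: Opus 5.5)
Your proposal is correct and is essentially the paper's construction. The paper initializes the counter to $|L\rangle$ with $\mathrm{Add}^L$ and applies a zero-controlled $\mathrm{Add}^{\dagger}$ after each $V_\ell$; this is the same unitary as your $\prod_\ell \mathrm{INC}_\ell V_\ell$, because the unconditional $\mathrm{Add}$ gates commute with all other gates. Your induction (the counter never decreases and never exceeds $L<2^{\lceil\log_2 L\rceil+1}$, so only the all-success branch ends at counter $0$) is a more careful version of the paper's one-line justification that the counter returns to $|0^r\rangle$ if and only if every block encoding succeeds.
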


The idea of \cref{lemma:compression_gadget} is illustrated in \cref{fig:compression_gadget}. A counter register of $r=\lceil\log_2 L\rceil+1$ qubits is initialized to $|L\rangle$ by $\mathrm{Add}^L$, where $\mathrm{Add}|c\rangle=|c+1\;\mathrm{mod}\;2^r\rangle$. After each $V_\ell$, a zero-controlled $\mathrm{Add}^{\dagger}$ subtracts one from the counter exactly when the block-encoding ancilla is in $|0^{a}\rangle$. Hence the counter returns to $|0^r\rangle$ if and only if all $L$ block encodings succeed, and postselecting the counter and block-encoding ancillas on zero yields $A_L\cdots A_1/(\prod_{\ell=1}^L\alpha_\ell)$.

\begin{figure}[t]
    \centerline{
    \Qcircuit @C=1em @R=1em {
    \lstick{\ket{0^r}_{\mathrm{ctr}}} & \gate{\mathrm{Add}^{L}} & \gate{\mathrm{Add}^{\dagger}} & \qw & \gate{\mathrm{Add}^{\dagger}} & \qw & \cdots & & \qw & \gate{\mathrm{Add}^{\dagger}} & \qw \\
    \lstick{\ket{0^{a}}_{\mathrm{anc}}} & \multigate{1}{V_1} & \ctrlo{-1} & \multigate{1}{V_2} & \ctrlo{-1} & \qw & \cdots & & \multigate{1}{V_L} & \ctrlo{-1} & \qw \\
    \lstick{\ket{\psi}} & \ghost{V_1} & \qw & \ghost{V_2} & \qw & \qw & \cdots & & \ghost{V_L} & \qw & \qw
    }
    }
    \caption{Compression gadget for coherently implementing the ordered product $A_L\cdots A_1$. The open controls indicate conditioning on the block-encoding ancilla register being in $|0^{a}\rangle$.}
    \label{fig:compression_gadget}
\end{figure}

\subsection{Quantum singular value transformation}
\label{subsec:QSVT}

Quantum Singular Value Transformation (QSVT)~\cite{GilyenSuLowEtAl2018} is a powerful framework for implementing polynomial transformations of the singular values of a matrix within a quantum circuit. Depending on the parity of the target polynomial, QSVT admits different constructions. Since our application focuses on matrix inversion, we restrict our attention to the case of \emph{odd} polynomial transformations.

Let $A \in \mathbb{C}^{m \times m}$ be a square matrix. For simplicity, we assume $m = 2^d$ for some positive integer $d$. The singular value decomposition (SVD) of the normalized matrix $A$ can be written as
\begin{equation}
    A = W \Sigma V^{\dagger},
\end{equation}
where $W$ and $V$ are unitary matrices and
\[
\Sigma = \operatorname{diag}(\sigma_1, \sigma_2, \ldots, \sigma_m)
\]
contains the singular values of $A$.

\begin{definition}[Generalized matrix function]\label{definition:SVT_function}
Let $f : \mathbb{R} \to \mathbb{R}$ be a scalar function such that $f(\sigma_i)$ is well defined for all $i = 1,2,\ldots,m$. The generalized matrix function induced by $f$ is defined as
\[
f^{\diamond}(A) := W f(\Sigma) V^{\dagger},
\]
where
\[
f(\Sigma) := \operatorname{diag}\bigl(f(\sigma_1), f(\sigma_2), \ldots, f(\sigma_m)\bigr).
\]
\end{definition}

Suppose that there exists an $(\alpha, a, 0)$-block-encoding of $A$, denoted by $U_A$, such that the singular values of $A/\alpha$ lie in the interval $[0,1]$. Let $p$ be a real-coefficient odd polynomial of degree $n$ satisfying
\[
p(x) \in [-1,1] \qquad \text{for all } x \in [-1,1].
\]
Then QSVT provides an efficient quantum algorithm to implement the operator $p^{\diamond}(A/\alpha)$.
More precisely, $p^{\diamond}(A/\alpha)$ can be realized using a quantum circuit that employs a single ancilla qubit and makes $n$ queries to the block-encoding unitary $U_A$. This construction forms the basis for several quantum linear algebra algorithms, including quantum matrix inversion~\cite{ChakrabortyGilyenJeffery2018}.

\section{Contour integral}\label{sec:contour}

In the introduction, we presented the multivariable matrix eigenvalue transformation problem in \cref{definition:Multi-matrix_function}.
\cite[Theorem 2.4]{HartmannLesch2024} shows that the function satisfying \cref{definition:Multi-matrix_function} is unique, and it can be described via the multi-dimensional contour integral formula as
\begin{equation}\label{eqn:contour_integral_formula}
f(\boldsymbol{A})=f\left(A_{1}, \cdots, A_{n}\right):=\frac{1}{(2 \pi \mathrm{i})^n}\int_{\Gamma_{1}} \cdots \int_{\Gamma_{n}} f(\boldsymbol{z}) \cdot \prod_{j=1}^{n}\left(z_{j}I_m-A_{j}\right)^{-1} \mathrm{~d} \boldsymbol{z},
\end{equation}
where $I_m$ denotes the $m\times m$ identity matrix, $\mathrm{d} \boldsymbol{z}=\mathrm{d} z_{1} \cdots \mathrm{~d} z_{n}$, and $\Gamma_j, 1 \leq \mathrm{j} \leq \mathrm{n}$, are integration contours that encircle $\operatorname{spec}_{\mathcal{A}}\left(A_j\right)$ with $\prod_{j=1}^n\Gamma_j\subset\Omega$.

Assume that for any $j$, $\Gamma_j$ is a piecewise $C^{1,1}$ simple closed curve covering all eigenvalues of $A_j$. 
Let $l_j$ be the length of $\Gamma_j$, $z = z_j(t)$ be its parametrization with $|z_j'(t)| = 1$ in every smooth piece for $t\in[0, l_j]$, and denote $z_j'(t)=e^{i\theta_j(t)}$. 
Let the rate of change of the tangent vector be bounded with $|e^{\mathrm i\theta_j(t)}-e^{\mathrm i\theta_j(t')}|
\le K_{\Gamma_j} |t-t'|$, and $f(z)$ is holomorphic in $\Omega$ and $L$-Lipschitz continuous in $\overline{\Omega}$. 
Then, by change of variable, we have 
\begin{equation}
\begin{aligned}
    f(\boldsymbol{A})
    =&\frac{1}{(2 \pi \mathrm{i})^n}\int_{\Gamma_{1}} \cdots \int_{\Gamma_{n}} f(\boldsymbol{z}) \cdot \prod_{j=1}^{n}\left(z_{j}I_m-A_{j}\right)^{-1} \mathrm{~d} \boldsymbol{z}\\
    =&\frac{1}{(2 \pi \mathrm{i})^n} \int_{0}^{l_1}\cdots \int_{0}^{l_n}f\left(z_1(t_1),\cdots,z_n(t_n)\right)\prod_{j=1}^{n} \left[\left(z_j(t_j) I_m-A_j\right)^{-1} \mathrm{e}^{\mathrm{i} \theta_j(t_j)}\right]\mathrm{~d} t_1\cdots \mathrm{~d} t_n.
\end{aligned}
\label{function:f(A)}
\end{equation}

Next, we approximate $f(\boldsymbol{A})$ by its Riemann sum $f_{\boldsymbol{M}}(\boldsymbol{A})$ defined as 
\begin{equation}
f_{\boldsymbol{M}}(\boldsymbol{A}) = \frac{1}{(2 \pi \mathrm{i})^n}\sum_{k_1=0}^{M_1-1}\cdots\sum_{k_n=0}^{M_n-1} f(z_{1,k_1},\cdots,z_{n,k_n})\prod_{j=1}^{n}\left[\frac{l_j}{M_j} \left(z_{j,k_j} I_m-A_j\right)^{-1} \mathrm{e}^{\mathrm{i} \theta_{j,k_j}}\right], 
\label{function:f_M(A)}
\end{equation}
where $\boldsymbol{M}=(M_1,\cdots,M_n)$, $M_j$ is the number of nodes in the discretization of $\Gamma_j$, and $t_{j,k_j} = \frac{l_jk_j}{M_j}$, $z_{j,k_j} = z_j(t_{j,k_j})$ and $\theta_{j,k_j} =\theta_j(t_{j,k_j})$ for $k_j = 0,1,\cdots, M_{j}-1$. 
The error of $f_{\boldsymbol{M}}(\boldsymbol{A})$ approaching $f(\boldsymbol{A})$ can be bounded by the following proposition.

\begin{proposition}[Discrete integration error by contour integral]
\label{proposition:Discrete_integration_error_by_contour_integral}
Suppose $\Gamma_j$ is a piecewise $C^{1,1}$ simple closed curve covering all eigenvalues of $A_j$ and $K_{\Gamma_j}$ is an upper bound on the tangent variation rate of $\Gamma_j$, $f:\Omega\subset\mathbb{C}^n\to\mathbb{C}$ is holomorphic in $\Omega$ and $L$-Lipschitz continuous on $\overline{\Omega}$, and $B=\max_{\boldsymbol z\in\prod_{j=1}^n\Gamma_j}|f(\boldsymbol z)|$. 
Assume that for every $j$ and every $t_j\in[0,l_j]$ there holds $\|(z_j(t_j)I_m-A_j)^{-1}\|\le\gamma_j$.
Then, for the matrix function $f(A)$ in \cref{function:f(A)} and the approximation $f_{\boldsymbol M}(A)$ in \cref{function:f_M(A)}, the following error bound holds: 
\begin{equation}\label{eq:discrete_error}
\big\|f(\boldsymbol A)-f_{\boldsymbol M}(\boldsymbol A)\big\|
\leq
\frac{1}{(2\pi)^n}\left(\prod_{\ell=1}^nl_\ell\gamma_\ell\right)\left(\sum_{j=1}^n\frac{l_j(B\gamma_j+BK_{\Gamma_j}+L)}{M_j}\right).
\end{equation}

\end{proposition}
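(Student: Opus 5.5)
The plan is to treat \cref{function:f_M(A)} as a left-endpoint Riemann sum of the integrand
\[
G(t_1,\dots,t_n):=f\bigl(z_1(t_1),\dots,z_n(t_n)\bigr)\prod_{j=1}^n\Bigl[\bigl(z_j(t_j)I_m-A_j\bigr)^{-1}e^{\mathrm i\theta_j(t_j)}\Bigr]
\]
over the box $\prod_j[0,l_j]$, and to bound the error cell by cell. On the cell $C_{\boldsymbol k}=\prod_j[t_{j,k_j},t_{j,k_j+1}]$ the error is $\int_{C_{\boldsymbol k}}\bigl(G(\boldsymbol t)-G(\boldsymbol t_{\boldsymbol k})\bigr)\,d\boldsymbol t$. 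To control $G(\boldsymbol t)-G(\boldsymbol t_{\boldsymbol k})$ I would telescope one coordinate at a time. Define intermediate points $\boldsymbol s^{(j)}$, in which the first $j$ coordinates are taken from $\boldsymbol t$ and the remaining ones from $\boldsymbol t_{\boldsymbol k}$. Then $G(\boldsymbol t)-G(\boldsymbol t_{\boldsymbol k})=\sum_{j=1}^n\bigl(G(\boldsymbol s^{(j)})-G(\boldsymbol s^{(j-1)})\bigr)$, and each summand differs in only the $j$-th variable.

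For a single-variable difference I would write $G=f\cdot\prod_\ell R_\ell E_\ell$, with $R_\ell=(z_\ell I-A_\ell)^{-1}$ and $E_\ell=e^{\mathrm i\theta_\ell}$. Changing only $t_j$ to $t_j'$ splits the difference into three pieces. The first is the change in $f$, bounded by $L|z_j(t_j)-z_j(t_j')|\le L|t_j-t_j'|$ using unit speed. The second is the change in the resolvent. By the resolvent identity $R_j(t)-R_j(t')=(z_j(t')-z_j(t))R_j(t)R_j(t')$, this has norm at most $\gamma_j^2|t-t'|$. The third is the change in $e^{\mathrm i\theta_j}$, bounded by $K_{\Gamma_j}|t-t'|$. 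The untouched factors contribute $\prod_{\ell\ne j}\gamma_\ell$ in norm, together with $B$ from $|f|$. Collecting terms gives
\[
\|G(\boldsymbol s^{(j)})-G(\boldsymbol s^{(j-1)})\|\le\Bigl(\prod_{\ell\ne j}\gamma_\ell\Bigr)\bigl(L\gamma_j+B\gamma_j^2+BK_{\Gamma_j}\gamma_j\bigr)|t_j-t_{j,k_j}|=\Bigl(\prod_\ell\gamma_\ell\Bigr)(L+B\gamma_j+BK_{\Gamma_j})|t_j-t_{j,k_j}|.
\]
Commutativity is not needed here, since the norm of an ordered product is bounded by the product of the norms. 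The one care point is that $f$ must be evaluated at mixed points, which still lie on $\prod\Gamma_\ell$, so $B$ applies.

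Next I would integrate over the cell. $\int_{C_{\boldsymbol k}}|t_j-t_{j,k_j}|\,d\boldsymbol t=\frac12 (l_j/M_j)^2\prod_{\ell\ne j}(l_\ell/M_\ell)$. Summing over all $\prod_\ell M_\ell$ cells gives $\frac12\frac{l_j}{M_j}\prod_\ell l_\ell$. Multiplying by $(2\pi)^{-n}$ yields the stated bound, with an extra factor $\tfrac12$ to spare; this slack absorbs sloppiness at the kinks of the piecewise curve.

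The main obstacle I expect is the piecewise $C^{1,1}$ structure. At corners $\theta_j$ jumps, so the Lipschitz estimate on $e^{\mathrm i\theta_j}$ fails across a break point. I would handle this by reading the hypothesis $|e^{\mathrm i\theta_j(t)}-e^{\mathrm i\theta_j(t')}|\le K_{\Gamma_j}|t-t'|$ as given for all $t,t'$ on a cell, as the paper's setup states it. If that reading is too strong, I would use the factor $\tfrac12$ slack, or assume the nodes include the break points. The remaining steps are routine.
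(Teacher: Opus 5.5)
Your proposal is correct and is essentially the paper's proof: the same resolvent estimate $\gamma_j^2|t-t'|$, the same three-term product-rule bound $(B\gamma_j+BK_{\Gamma_j}+L)\prod_\ell\gamma_\ell\,|t_j-t_j'|$ when one coordinate changes, the same cell-by-cell splitting into one-coordinate differences (you make this explicit with the telescoping points $\boldsymbol s^{(j)}$, where the paper leaves it implicit), and the same sum over cells. The only difference is that you integrate $|t_j-t_{j,k_j}|$ exactly, gaining a factor $\tfrac12$ where the paper bounds it by $l_j/M_j$. Your handling of the piecewise case also matches the paper, which simply assumes global $C^{1,1}$ (equivalently, segment-by-segment discretization); note, however, that the spare $\tfrac12$ does not by itself absorb corners, because a tangent jump changes $e^{\mathrm i\theta_j}$ by $O(1)$ rather than $O(|t-t'|)$, so the nodes-at-break-points option (or the global reading of the $K_{\Gamma_j}$ hypothesis) is the one to rely on.
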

\begin{proof}

For piecewise $C^{1,1}$ curves, we can discretize the integration for each segment and sum the results, achieving the same convergence properties as for globally $C^{1,1}$ curves; therefore, without loss of generality, we assume the curve is $C^{1,1}$ globally.
For two values $t_j,t_j'$ we have
\begin{equation}
\left\|(z_j(t_j) I_m -A_j)^{-1} - (z_j(t_j') I_m -A_j)^{-1}\right\| = \left\|\frac{(z_j(t_j)-z_j(t_j'))I}{(z_j(t_j) I_m -A_j)(z_j(t_j') I_m -A_j)}\right\|\leq \gamma_j^2|t_j-t_j'|.
\end{equation}
Suppose $h_j(t) = \frac{1}{2 \pi \mathrm{i}} f(\cdots,z_j(t),\cdots)\left(z_j(t) I_m-A_j\right)^{-1} \mathrm{e}^{\mathrm{i} \theta_j(t)}$, here all variables of $f$ except $z_j(t)$ are fixed.
Because $f(z)$ is $L$-Lipschitz continuous on $\overline{\Omega}$, we have
\begin{equation}
\begin{aligned}
    \left\|h_j(t_j)-h_j(t_j')\right\| \leq &\left\|\frac{1}{2 \pi \mathrm{i}} f(\cdots,z_j(t_j),\cdots)\mathrm{e}^{\mathrm{i} \theta_j(t_j)}\right\| \times \left\|(z_j(t_j) I_m-A_j)^{-1}- (z_j(t_j') I_m-A_j)^{-1}\right\|+ \\ 
    &\left\|\frac{1}{2 \pi \mathrm{i}} f(\cdots,z_j(t_j),\cdots)-\frac{1}{2 \pi \mathrm{i}} f(\cdots,z_j(t_j'),\cdots)\right\| \times \left\|\mathrm{e}^{\mathrm{i} \theta_j(t_j)} (z_j(t_j') I_m-A_j)^{-1}\right\|+ \\ 
    &\left\|\mathrm{e}^{\mathrm{i} \theta_j(t_j)}-\mathrm{e}^{\mathrm{i} \theta_j(t_j')}\right\| \times \left\|\frac{1}{2 \pi \mathrm{i}} f(\cdots,z_j(t_j'),\cdots) (z_j(t_j') I_m-A_j)^{-1}\right\| \\ 
    \leq & \frac{B\gamma_j^2+BK_{\Gamma_j}\gamma_j+L\gamma_j}{2\pi}|t_j-t_j'|.
\end{aligned}
\end{equation}

Let
\begin{equation}
H(t_1,\dots,t_n):=\frac{1}{(2\pi\mathrm{i})^n}
\,f\left(z_1(t_1),\dots,z_n(t_n)\right)\prod_{j=1}^n\left[(z_j(t_j)I_m-A_j)^{-1}\mathrm{e}^{\mathrm{i}\theta_j(t_j)}\right].
\end{equation}

Then $f(\boldsymbol A)=\int_{[0,l_1]\times\cdots\times[0,l_n]}H(\boldsymbol t)\,\mathrm{d}\boldsymbol t$.
Choose one index $j\in\{1,\dots,n\}$ and keep the other coordinates $t_\ell$ ($\ell\neq j$) fixed, then 
\begin{equation}
\begin{aligned}
    &\|H(\dots,t_j,\dots)-H(\dots,t_j',\dots)\|\\
    \leq&\left\|\frac{1}{(2\pi\mathrm{i})^{n-1}}\prod_{\ell\neq j}\left[(z_\ell(t_\ell)I_m-A_\ell)^{-1} \mathrm{e}^{\mathrm{i}\theta_\ell(t_\ell)}\right]\right\|\cdot\left\|h_j(t_j)-h_j(t_j')\right\|\\
    \leq&\frac{(B\gamma_j+BK_{\Gamma_j}+L)\prod_{\ell=1}^n\gamma_\ell}{(2\pi)^n}|t_j-t_j'|.
\end{aligned}
\end{equation}

Therefore, for $(t_1,\cdots,t_n)\in[t_{1,k_1},t_{1,k_{1}+1}]\times\cdots\times[t_{n,k_n},t_{n,k_{n}+1}]$ we have $|t_j-t_{j,k_j}|\leq\frac{l_j}{M_j}$ and
\begin{equation}
\begin{aligned}
&\left\|\int_{[t_{1,k_1},t_{1,k_{1}+1}]\times\cdots\times[t_{n,k_n},t_{n,k_{n}+1}]}H(t_1,\cdots,t_n)\,\mathrm{d}t_1\cdots\mathrm{d}t_n-H(t_{1,k_1},\cdots,t_{n,k_n})\prod_{j=1}^n\frac{l_j}{M_j}\right\|\\
=&\left\|\int_{[t_{1,k_1},t_{1,k_{1}+1}]\times\cdots\times[t_{n,k_n},t_{n,k_{n}+1}]}\left(H(t_1,\cdots,t_n)-H(t_{1,k_1},\cdots,t_{n,k_n})\right)\,\mathrm{d}t_1\cdots\mathrm{d}t_n\right\|\\
\leq&\sum_{j=1}^n\left\|\int_{[t_{1,k_1},t_{1,k_{1}+1}]\times\cdots\times[t_{n,k_n},t_{n,k_{n}+1}]}\left(H(\dots,t_j,\dots)-H(\dots,t_{j,k_j},\dots)\right)\mathrm{d}t_1\cdots\mathrm{d}t_n\right\|\\
\leq&\sum_{j=1}^n\int_{[t_{1,k_1},t_{1,k_{1}+1}]\times\cdots\times[t_{n,k_n},t_{n,k_{n}+1}]}\frac{(B\gamma_j+BK_{\Gamma_j}+L)\prod_{\ell=1}^n\gamma_\ell}{(2\pi)^n}|t_j-t_{j,k_j}|\,\mathrm{d}t_1\cdots\mathrm{d}t_n\\
\leq&\frac{1}{(2\pi)^n}\left(\prod_{\ell=1}^n\frac{l_\ell\gamma_\ell}{M_\ell}\right)\left(\sum_{j=1}^n\frac{l_j(B\gamma_j+BK_{\Gamma_j}+L)}{M_j}\right).
\end{aligned}
\end{equation}

Then summarizing all of these integrals, we can get the result

\begin{equation}
\begin{aligned}
&\big\|f(\boldsymbol A)-f_{\boldsymbol M}(\boldsymbol A)\big\|\\
\leq&\sum_{k_1=0}^{M_1-1}\cdots\sum_{k_n=0}^{M_n-1}\left\|\int_{[t_{1,k_1},t_{1,k_{1}+1}]\times\cdots\times[t_{n,k_n},t_{n,k_{n}+1}]}H(t_1,\cdots,t_n)\,\mathrm{d}t_1\cdots\mathrm{d}t_n-H(t_{1,k_1},\cdots,t_{n,k_n})\prod_{j=1}^n\frac{l_j}{M_j}\right\|\\
\leq&\left(\prod_{r=1}^nM_r\right)\cdot\frac{1}{(2\pi)^n}\left(\prod_{\ell=1}^n\frac{l_\ell\gamma_\ell}{M_\ell}\right)\left(\sum_{j=1}^n\frac{l_j(B\gamma_j+BK_{\Gamma_j}+L)}{M_j}\right) \\
& =\frac{1}{(2\pi)^n}\left(\prod_{\ell=1}^nl_\ell\gamma_\ell\right)\left(\sum_{j=1}^n\frac{l_j(B\gamma_j+BK_{\Gamma_j}+L)}{M_j}\right).
\end{aligned}
\end{equation}

\end{proof}

\subsection{Quantum algorithm}
\label{subsection:Contour Integral Algorithm}
Our task is to approximate $f_{\boldsymbol M}(\boldsymbol A)$ given $\Gamma_1,\cdots,\Gamma_n$, $\boldsymbol M$ and $\boldsymbol A$. We start with the initial state $|\psi\rangle$. 
Note that for $j=1,\cdots,n$ we have block encoding \( U_{A_j} \) of the matrix \( A_j \). 
Then
\begin{equation}
    U_{A_j}\ket{0^{a}}\ket{\psi} = \frac{1}{\alpha_j}\ket{0^{a}}A_j\ket{\psi} +\ket{\perp}.
\end{equation}
For $z_{j,k_j} = z_j(t_{j,k_j})$, $j=1,\cdots,n$, let $U_j$ be select oracles, $V_j$ be the controlled rotations satisfying
\begin{equation}
    V_j\ket{k_j}\ket{0}_{V_j} = \frac{1}{\sqrt{\alpha_j + |z_{j,k_j}|}}\ket{k_j}\left(\sqrt{z_{j,k_j}}\ket{0}+\mathrm{i}\sqrt{\alpha_j}\ket{1}\right),
    \label{operator:control_rotation}
\end{equation}
\begin{equation}
    U_j = \ket{0}\bra{0} \otimes I_m + \ket{1}\bra{1}\otimes U_{A_j}.
\end{equation}
Thus, applying \cref{lemma:lcu}, we have that
\begin{equation}
\label{select_oracle}
    \left(\widetilde{V_j} \otimes I_{a}\otimes I_{m}\right) (I_{k_j}\otimes U_j)\left(V_j \otimes I_{a}\otimes I_{m}\right)\ket{k_j} \ket{0}_{V_j}\ket{0^{a}}\ket{\psi} = \frac{\ket{k_j}\ket{0}_{V_j}\ket{0^{a}}(z_{j,k_j} I_{m}-A_j)\ket{\psi}}{\alpha_j + |z_{j,k_j}|}+\ket{\widetilde{\perp}}, 
\end{equation}
where $\widetilde{V_j}$ is a control rotation, and its action on the last qubit is constructed from \cref{matrix:tildeV}.
Among them, $\ket{\widetilde{\perp}}$ refers to both the vertical component in block encoding and LCU. Thus we have that \cref{select_oracle} is an $(\alpha_j + |z_{j,k_j}|, a+1, 0)$-block-encoding of $z_{j,k_j} I_m-A_j$ controlled by $\ket{k_j}$, and we simply call it $U_{j,k_j}$.  

As $x^{-1}$ is an odd function on $[-1,1]$ and singular at $x = 0$, we can use an odd polynomial $p_j(x)$ to approximate $\frac{3\delta_j}{4x}$ on $[-1,-\delta_j] \cup[\delta_j, 1]$, and $p_j^\diamond\left(\frac{z_{j,k_j} I_m-A_j}{\alpha_j + |z_{j,k_j}|}\right)^\dagger$ can be implemented by QSVT~\cite{GilyenSuLowEtAl2018}, where $p_j^\diamond(U)$ is defined by \cref{definition:SVT_function}. The control circuit of  \cref{select_oracle} can be introduced over the QSVT circuit like \cref{fig:circuit_Control_QSVT}, which gives a block encoding of $(z_{j,k_j} I_m-A_j)^{-1}$, and we call this circuit $\widetilde{U}_k$.

\begin{figure}
    \centerline{
    \Qcircuit @R=1em @C=1em {
    \ket{k_j} \quad\quad\quad & 
    \qw & \qw & \qw & \qw & \qw & 
    \ctrl{2} & \qw & \qw & \qw & \ctrl{2} &
    \qw & \cdots & & \qw & \qw & \qw & \qw & \qw & \qw &\\
    \ket{0} \quad\quad\quad &
    \qw & \gate{H} &\targ & \gate{e^{-i \phi_0 Z}} & \targ &
    \qw & \targ & \gate{e^{-i \phi_{1} Z}} & \targ & \qw & 
    \qw & \cdots & & \qw  &\targ & \gate{e^{-i \phi_d Z}} & \targ & \gate{H} & \qw & \\ 
    \ket{0}_V\ket{0^a} \quad\quad\quad\quad & 
    \qw & \qw & \ctrlo{-1} & \qw & \ctrlo{-1} & 
    \multigate{1}{U_{j,k_j}} & \ctrlo{-1} & \qw & \ctrlo{-1} & \multigate{1}{U_{j,k_j}^\dagger} & 
    \qw & \cdots & & \qw & \ctrlo{-1} & \qw & \ctrlo{-1} & \qw & \qw &  \\
    \ket{\psi} \quad\quad\quad & 
    \qw & \qw & \qw & \qw & \qw & 
    \ghost{U_{j,k_j}} & \qw & \qw & \qw & \ghost{U_{j,k_j}^\dagger} & 
    \qw & \cdots & & \qw & \qw & \qw & \qw & \qw & \qw &  \\
    }
    }
    \caption{QSVT circuit with control gate $\ket{k}$. This is a block encoding of $(z_{j,k_j} I_m-A_j)^{-1}$, and it's the sub-circuit that implements $f(\boldsymbol A)$. We call this entire circuit $\widetilde{U}_{k_j}$, which will be used in subsequent line.}
    \label{fig:circuit_Control_QSVT}
\end{figure} 

Through compression gadget \cref{lemma:compression_gadget}, we can generate a block encoding of $\prod_{j=1}^{n} \left(z_{j,k_j} I_m-A_j\right)^{-1}$ by calling the circuit showed in \cref{fig:circuit_Control_QSVT}. The block encoding of $\prod_{j=1}^{n} \left(z_{j,k_j} I_m-A_j\right)^{-1}$ is shown \cref{fig:circuit_multi_ctrl}, and due to the control gate, it can be used directly as the select oracle $\mathrm{SEL}$ in LCU.
The prepare oracle in the corresponding LCU is a unitary such that 
\begin{equation}
    C\ket{0}\cdots\ket{0}=\frac{1}{\sqrt{\|\boldsymbol c\|_1}} \sum_{k_1=0}^{M_1-1}\cdots\sum_{k_n=0}^{M_n-1} \sqrt{c_{\boldsymbol k}}\ket{k_1}\cdots\ket{k_n},
\end{equation}
where $\boldsymbol c=(c_{(0,\cdots,0)},c_{(0,\cdots,1)},\cdots,c_{(M_1-1,\cdots,M_n-1)})$, and
\begin{equation}
c_{\boldsymbol k}=f\left(z_{1,k_1},\dots,z_{n,k_n}\right) \prod_{j=1}^n\frac{2}{3\delta_j\pi\mathrm{i}}\cdot \frac{\mathrm{e}^{\mathrm{i} \theta_{j,k_j}}l_j}{(\alpha_j + |z_{j,k_j}|)M_j}.
\end{equation}

Notice that 
\begin{equation}
\begin{aligned}
&\sum_{k_1=0}^{M_1-1}\cdots\sum_{k_n=0}^{M_n-1} c_{\boldsymbol k}\prod_{j=1}^{n} p_j^\diamond\left(\frac{z_{j,k_j} I_m-A_j}{\alpha_j + |z_{j,k_j}|}\right)^\dagger\\
\approx&\sum_{k_1=0}^{M_1-1}\cdots\sum_{k_n=0}^{M_n-1} f\left(z_{1,k_1},\dots,z_{n,k_n}\right) \prod_{j=1}^n\frac{2}{3\delta_j\pi\mathrm{i}}\cdot \frac{\mathrm{e}^{\mathrm{i}\theta_{j,k_j}}l_j}{(\alpha_j + |z_{j,k_j}|)M_j} \cdot \frac{3\delta_j(\alpha_j + |z_{j,k_j}|)}{4}(z_{j,k_j} I_m-A_j)^{-1}\\
=& \frac{1}{(2 \pi \mathrm{i})^n}\sum_{k_1=0}^{M_1-1}\cdots\sum_{k_n=0}^{M_n-1} f(z_{1,k_1},\cdots,z_{n,k_n})\prod_{j=1}^{n}\left[\frac{l_j}{M_j} \left(z_{j,k_j} I_m-A_j\right)^{-1} \mathrm{e}^{\mathrm{i} \theta_{j,k_j}}\right]. 
\end{aligned}
\end{equation}
By \cref{lemma:lcu}, the operator 
\begin{equation}\label{operator:total_circuit}
     \left(\widetilde{C} \otimes I \right)\mathrm{SEL}\left(C\otimes I\right)
\end{equation}
gives a block encoding of $$\sum_{k_1=0}^{M_1-1}\cdots\sum_{k_n=0}^{M_n-1} c_{\boldsymbol k}\prod_{j=1}^{n} p_j^\diamond\left(\frac{z_{j,k_j} I_m-A_j}{\alpha_j + |z_{j,k_j}|}\right)^\dagger, $$ which is an approximation of $f_{\boldsymbol M}(\boldsymbol A)$.

\begin{figure}
    \centerline{
\Qcircuit @C=1em @R=1em {
\lstick{\ket{k_1}}    & \qw         & \ctrl{5} & \qw      & \qw      & \qw        & \qw & \cdots&& \qw& \qw& \qw& \qw\\
\lstick{\ket{k_2}}    & \qw         & \qw      & \qw      & \ctrl{4} & \qw      & \qw & \cdots  && \qw& \qw& \qw& \qw\\
\lstick{\cdots}       &    \cdots   &          & \cdots   &       & \cdots   &    & \cdots   & &      & \cdots  & \cdots & \\
\lstick{\ket{k_n}}    & \qw         & \qw      &  \qw     & \qw & \qw      & \qw & \cdots && \qw& \ctrl{2}& \qw& \qw\\
\lstick{\ket{0}}      & \gate{\mathrm{Add}^n}& \qw      & \gate{\mathrm{Add}^\dagger} & \qw & \gate{\mathrm{Add}^\dagger} & \qw & \cdots && \qw & \qw & \gate{\mathrm{Add}^\dagger} & \qw\\
\lstick{\ket{0}_{\widetilde{U}}}& \qw         &\multigate{1}{\widetilde{U}_{k_1}}& \ctrlo{-1} & \multigate{1}{\widetilde{U}_{k_2}} & \ctrlo{-1} & \qw & \cdots && \qw &\multigate{1}{\widetilde{U}_{k_n}}& \ctrlo{-1}& \qw\\
\lstick{\ket{\psi}}   & \qw         & \ghost{\widetilde{U}_{k_1}}      & \qw      & \ghost{\widetilde{U}_{k_2}} & \qw      & \qw & \cdots && \qw &\ghost{\widetilde{U}_{k_n}}& \qw& \qw\\
}
}
    \caption{Controlled block encoding of $\prod_{j=1}^{n} \left(z_{j,k_j} I_m-A_j\right)^{-1}$, which also serves as the select oracle $\mathrm{SEL}$.}
    \label{fig:circuit_multi_ctrl}
\end{figure}

\subsection{Complexity analysis}

In the previous discussion, we have showed that we use $p_j(x)$ to approximate $\frac{3\delta_j}{4x}$ on $[-1,-\delta_j] \cup[\delta_j, 1]$. Suppose that
\begin{equation}
    \left|p_j(x)-\frac{3 \delta_j}{4 x}\right| \leq \epsilon^{\prime}, \quad \forall x \in[-1,-\delta_j] \cup[\delta_j, 1]
\end{equation} 
and $|p_j(x)| \leq 1$ for all $x \in [-1,1]$. The existence of such an odd polynomial of degree $\mathcal{O}(\frac{1}{\delta_j}\mathrm{log}(\frac{1}{\epsilon^\prime}))$ is guaranteed by \cite[Corollary 69]{GilyenSuLowEtAl2018}. 

Assume that there is singular value decomposition $$\frac{z_{j,k_j} I_m-A_j}{\alpha_j + |z_{j,k_j}|}=W_{j,k_j} \frac{\Sigma_{j,k_j}}{\alpha_j+|z_{j,k_j}|} V_{j,k_j}^{\dagger},$$ 
then \cite[Theorem 2]{GilyenSuLowEtAl2018}  enables us to implement $$p_j^\diamond\left(\frac{z_{j,k_j} I_m-A_j}{\alpha_j + |z_{j,k_j}|}\right)^\dagger=V_{j,k_j} p_j\left(\frac{\Sigma_{j,k_j}}{\alpha_j+|z_{j,k_j}|}\right) W_{j,k_j}^{\dagger}.$$ 
If all diagonal elements of $\frac{\Sigma_{j,k_j}}{\alpha_j+|z_{j,k_j}|}$, i.e. the singular values of $\frac{z_{j,k_j} I_m-A_j}{\alpha_j + |z_{j,k_j}|}$, are in the interval $[\delta_j , 1]$ for $k_j = 0,1,\cdots,M_j-1$, then we have 
\begin{equation}\label{QSVT_error}
\left\|p_j^{\diamond}\left(\frac{z_{j,k_j} I_m-A_j}{\alpha_j + |z_{j,k_j}|}\right)^{\dagger}-\frac{3 \delta_j}{4}\left(\frac{z_{j,k_j} I_m-A_j}{\alpha_j + |z_{j,k_j}|}\right)^{-1}\right\|=\left\|p_j\left(\frac{\Sigma_{j,k_j}}{\alpha_j+|z_{j,k_j}|}\right)-\frac{3 \delta_j}{4}\left(\frac{\Sigma_{j,k_j}}{\alpha_j+|z_{j,k_j}|}\right)^{-1}\right\| \leq \epsilon^{\prime}.
\end{equation}

Assume $\Gamma_j$ is inside disk $|z| \leq R_j$, then $0\leq\alpha_j + |z_{j,k_j}| \leq \alpha_j+R_j$. Under the assumption that for any $t \in [0, l]$, $\left\|(z_j(t) I_m -A_j)^{-1}\right\| \leq \gamma_j$,  the singular values of $\frac{z_{j,k_j} I_m-A_j}{\alpha_j + |z_{j,k_j}|}$ are all larger than $\delta_j = \frac{1}{\gamma_j(\alpha_j+R_j)}$, 
\begin{equation}
\|\boldsymbol c\|_1 \leq B\prod_{j=1}^n\frac{2l_j}{3\pi\alpha_j\delta_j}= B\prod_{j=1}^n\frac{2\gamma_j l_j(1+R_j/\alpha_j)}{3\pi}.
\end{equation}
Since we use $\sum_{k_1=0}^{M_1-1}\cdots\sum_{k_n=0}^{M_n-1} c_{\boldsymbol k}\prod_{j=1}^{n} p_j^\diamond\left(\frac{z_{j,k_j} I_m-A_j}{\alpha_j + |z_{j,k_j}|}\right)^\dagger$ to approximate $f_{\boldsymbol M}(\boldsymbol A)$, we have
\begin{equation}
\label{equation:QSVT_error}
\begin{aligned}
    &\left\|\sum_{k_1=0}^{M_1-1}\cdots\sum_{k_n=0}^{M_n-1} c_{\boldsymbol k}\prod_{j=1}^{n} p_j^\diamond\left(\frac{z_{j,k_j} I_m-A_j}{\alpha_j + |z_{j,k_j}|}\right)^\dagger -f_{\boldsymbol M}(\boldsymbol A)\right\| \\
    \leq&\left(\sum_{k_1=0}^{M_1-1}\cdots \sum_{k_n=0}^{M_n-1} |c_{\boldsymbol k}|\right)\left\| \prod_{j=1}^{n} p_j^\diamond\left(\frac{z_{j,k_j} I_m-A_j}{\alpha_j + |z_{j,k_j}|}\right)^\dagger -\prod_{j=1}^{n}(3 \delta_j / 4)\left(\frac{z_{j,k_j} I_m-A_j}{\alpha_j + |z_{j,k_j}|}\right)^{-1}\right\|\leq \|\boldsymbol c\|_1n\epsilon^{\prime}.
\end{aligned}
\end{equation}
When the algorithm succeeds, we get the state $$\sum_{k_1=0}^{M_1-1}\cdots\sum_{k_n=0}^{M_n-1} c_{\boldsymbol k}\prod_{j=1}^{n} p_j^\diamond\left(\frac{z_{j,k_j} I_m-A_j}{\alpha_j + |z_{j,k_j}|}\right)^\dagger\ket{\psi}\left/\left\|\sum_{k_1=0}^{M_1-1}\cdots\sum_{k_n=0}^{M_n-1} c_{\boldsymbol k}\prod_{j=1}^{n} p_j^\diamond\left(\frac{z_{j,k_j} I_m-A_j}{\alpha_j + |z_{j,k_j}|}\right)^\dagger\ket{\psi}\right\|.\right.$$ 
Also, choose a proper $\boldsymbol M$ so that  
\begin{equation}\label{condition:M}
    \left\|f(\boldsymbol A)\ket{\psi}-f_{\boldsymbol M}(\boldsymbol A)\ket{\psi}\right\| \leq \|\boldsymbol c\|_1 n\epsilon^\prime,
\end{equation}
then combine these two equations together, we have
\begin{equation}
    \left\|f(\boldsymbol A)\ket{\psi}-\sum_{k_1=0}^{M_1-1}\cdots\sum_{k_n=0}^{M_n-1} c_{\boldsymbol k}\prod_{j=1}^{n} p_j^\diamond\left(\frac{z_{j,k_j} I_m-A_j}{\alpha_j + |z_{j,k_j}|}\right)^\dagger\ket{\psi}\right\| \leq 2\|\boldsymbol c\|_1n\epsilon^\prime,
\end{equation}
and due to that $\|\frac{u}{\|u\|}-\frac{v}{\|v\|}\|\leq2\frac{\|u-v\|}{\|u\|}$, we have
\begin{equation}
    \left\|\frac{f(\boldsymbol A)\ket{\psi}}{\|f(\boldsymbol A)\ket{\psi}\|}-\frac{\sum_{k_1=0}^{M_1-1}\cdots\sum_{k_n=0}^{M_n-1} c_{\boldsymbol k}\prod_{j=1}^{n} p_j^\diamond\left(\frac{z_{j,k_j} I_m-A_j}{\alpha_j + |z_{j,k_j}|}\right)^\dagger\ket{\psi}}{\left\|\sum_{k_1=0}^{M_1-1}\cdots\sum_{k_n=0}^{M_n-1} c_{\boldsymbol k}\prod_{j=1}^{n} p_j^\diamond\left(\frac{z_{j,k_j} I_m-A_j}{\alpha_j + |z_{j,k_j}|}\right)^\dagger\ket{\psi}\right\|}\right\| \leq \frac{4\|\boldsymbol c\|_1n\epsilon^\prime}{\|f(\boldsymbol A)\ket{\psi}\|}.
\end{equation}
Therefore, we can let 
$$
\epsilon^\prime = \frac{\|f(\boldsymbol A)\ket{\psi}\|\epsilon}{4n\|\boldsymbol c\|_1}.
$$
As $\|\boldsymbol c\|_1 \leq B\prod_{j=1}^n\frac{2\gamma_j l_j(1+R_j/\alpha_j)}{3\pi}$, we have $$\frac{1}{\epsilon^\prime}= \mathcal{O}\left(\frac{B\prod_{j=1}^n\frac{2\gamma_j l_j(1+R_j/\alpha_j)}{3\pi}}{n\|f(\boldsymbol A)\ket{\psi}\|\epsilon}\right),$$ 
since the spectral radius of $A$ is no greater than $\|A\|$, the total number of calls to $U_{A_j}$ by $\widetilde{U}_{k_j}$ is 
\begin{equation}
    \mathcal{O} \left(\frac{1}{\delta_j}\mathrm{log}\frac{1}{\epsilon^\prime}\right) = \mathcal{O}\left(\gamma_j(\alpha_j+R_j)\mathrm{log}\frac{B\prod_{j=1}^nl_j\gamma_j}{\|f(\boldsymbol A)\ket{\psi}\|\epsilon}\right).
\end{equation}

As the whole circuit is given by Eq. \eqref{operator:total_circuit}, we should consider $C,\widetilde{C}$ and $\mathrm{SEL}$ respectively when calculating complexity. For any given unit vector $\ket{\psi}$, the success probability of Eq. \eqref{operator:total_circuit} is 
$$
\left\|\sum_{k_1=0}^{M_1-1}\cdots\sum_{k_n=0}^{M_n-1} c_{\boldsymbol k}\prod_{j=1}^{n} p_j^\diamond\left(\frac{z_{j,k_j} I_m-A_j}{\alpha_j + |z_{j,k_j}|}\right)^\dagger\ket{\psi}\right\|^2/\|\boldsymbol c\|_1^2 .  
$$
Therefore, with amplitude amplification, 
$$
\mathcal{O}\left(\|\boldsymbol c\|_1/\left\|\sum_{k_1=0}^{M_1-1}\cdots\sum_{k_n=0}^{M_n-1} c_{\boldsymbol k}\prod_{j=1}^{n} p_j^\diamond\left(\frac{z_{j,k_j} I_m-A_j}{\alpha_j + |z_{j,k_j}|}\right)^\dagger\ket{\psi}\right\|\right)
$$
calls to Eq. \eqref{operator:total_circuit} can make the algorithm succeed with probability at least $1/2$. 

According to \cref{proposition:Discrete_integration_error_by_contour_integral}, in order to suit Eq. \eqref{condition:M} we need to choose $\boldsymbol M$ so that 
\begin{equation}
\frac{1}{(2\pi)^n}\left(\prod_{\ell=1}^nl_\ell\gamma_\ell\right)\left(\sum_{j=1}^n\frac{l_j(B\gamma_j+BK_{\Gamma_j}+L)}{M_j}\right)\leq\epsilon^\prime = \frac{\|f(\boldsymbol A)\ket{\psi}\|\epsilon}{4n\|\boldsymbol c\|_1}, 
\end{equation}
Therefore we can let 
\begin{equation}
M_j\geq nl_j\frac{(B\gamma_j + BK_{\Gamma_j} + L)\cdot(\prod_{\ell=1}^n l_\ell\gamma_\ell)\cdot4n\|\boldsymbol c\|_1}{(2\pi)^n\|f(\boldsymbol A)\ket{\psi}\|\epsilon}
\end{equation}
so that 
\begin{equation}
    \frac{(B\gamma_j + BK_{\Gamma_j} + L)\,\prod_{\ell=1}^n l_\ell\gamma_\ell}{(2\pi)^n}\cdot \frac{l_j}{M_j}\leq\frac{\epsilon^\prime}{n}
\end{equation}
for each $j$. In this situation,
\begin{equation}
\begin{aligned}
\log M_j = &\mathcal{O}\left(\log\left(nl_j\frac{(B\gamma_j + BK_{\Gamma_j} + L)\cdot\prod_{\ell=1}^n l_\ell\gamma_\ell\cdot4n\|\boldsymbol c\|_1}{(2\pi)^n\|f(\boldsymbol A)\ket{\psi}\|\epsilon}\right)\right)\\
=&\mathcal{O}\left(\log\frac{n(B\gamma_j + BK_{\Gamma_j}+L)\prod_{\ell=1}^n l_\ell\gamma_\ell}{\|f(\boldsymbol A)\ket{\psi}\|\epsilon}\right).
\end{aligned}
\end{equation}
So we need a total of $w=\mathcal{O}\left(\sum_{j=1}^n\log\frac{(B\gamma_j + BK_{\Gamma_j}+L)\prod_{\ell=1}^n l_\ell\gamma_\ell}{\|f(\boldsymbol A)\ket{\psi}\|\epsilon}\right)+\log n+a+3$ ancilla qubits and Eq. \eqref{operator:total_circuit} is a $(\|\boldsymbol c\|_1, w, 2\|\boldsymbol c\|_1n\epsilon^\prime)$-block-encoding of $f(\boldsymbol A)$.

Suppose that $\epsilon$ is small enough, then $\sum_{k_1=0}^{M_1-1}\cdots\sum_{k_n=0}^{M_n-1} c_{\boldsymbol k}\prod_{j=1}^{n} p_j^\diamond\left(\frac{z_{j,k_j} I_m-A_j}{\alpha_j + |z_{j,k_j}|}\right)^\dagger\ket{\psi}$ and $f(\boldsymbol A)\ket{\psi}$ have the same order. In other words, $$\|\boldsymbol c\|_1/\left\|\sum_{k_1=0}^{M_1-1}\cdots\sum_{k_n=0}^{M_n-1} c_{\boldsymbol k}\prod_{j=1}^{n} p_j^\diamond\left(\frac{z_{j,k_j} I_m-A_j}{\alpha_j + |z_{j,k_j}|}\right)^\dagger\ket{\psi}\right\|= \mathcal{O}\left(\frac{B\prod_{j=1}^n\frac{2\gamma_j l_j(1+R_j/\alpha_j)}{3\pi}}{\left\|f(\boldsymbol A)\ket{\psi}\right\|}\right).$$ 
Summarizing these calculations, we can get the following result. 

\begin{theorem}[Complexity of the contour integral approach]
Let $U_{A_j}$ be an $(\alpha_j,a,0)$-block-encoding of $A_j$ for $j=1,\cdots,n$. $\Gamma_j$ is a piecewise $C^{1,1}$ simple closed curve covering all eigenvalues of $A_j$ and $K_{\Gamma_j}$ is an upper bound on the tangent variation rate of $\Gamma_j$. Under the assumption that for any $z \in \Gamma_j$, $\left\|(z I_m -A_j)^{-1}\right\| \leq \gamma_j$, we can solve the problem defined by Definition \ref{definition:main_problem} using a quantum circuit with 
$$\mathcal{O}\left(\frac{B\prod_{r=1}^n\frac{2\gamma_r l_r(1+R_r/\alpha_r)}{3\pi}}{\left\|f(\boldsymbol A)\ket{\psi}\right\|}\cdot\gamma_j(\alpha_j+R_j)\mathrm{log}\frac{B \prod_{\ell=1}^nl_\ell\gamma_\ell}{\|f(\boldsymbol A)\ket{\psi}\|\epsilon}\right)$$ queries to $U_{A_j}$, $$\mathcal{O}\left(\frac{B\prod_{r=1}^n\frac{2\gamma_r l_r(1+R_r/\alpha_r)}{3\pi}}{\left\|f(\boldsymbol A)\ket{\psi}\right\|}\right) $$ queries to $U_{\psi}$ and $$\mathcal{O}\left(\sum_{j=1}^n\log\frac{n(B\gamma_j + BK_{\Gamma_j}+L)\prod_{\ell=1}^n l_\ell\gamma_\ell}{\|f(\boldsymbol A)\ket{\psi}\|\epsilon}\right)+\log n+a+3$$ ancilla qubits.

\label{theorem:contour_integral_complexity}
\end{theorem}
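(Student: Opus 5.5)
The proof assembles the estimates from the preceding subsection into a single error and cost budget. The target is a block-encoding of an operator close to $f(\boldsymbol A)$, followed by amplitude amplification.

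First, the circuit. For each $j$ and node $k_j$, \cref{lemma:lcu} applied to $U_j$ with the rotations $V_j,\widetilde V_j$ gives the controlled $(\alpha_j+|z_{j,k_j}|,a+1,0)$-block-encoding $U_{j,k_j}$ of $z_{j,k_j}I_m-A_j$. QSVT with the odd polynomial $p_j$ from \cite[Corollary 69]{GilyenSuLowEtAl2018} turns this into $\widetilde U_{k_j}$. It has degree $\mathcal O(\delta_j^{-1}\log(1/\epsilon'))$ with $\delta_j=1/(\gamma_j(\alpha_j+R_j))$, and encodes $p_j^\diamond(\cdot)^\dagger\approx\frac{3\delta_j}{4}(\cdot)^{-1}$ to accuracy $\epsilon'$ by \eqref{QSVT_error}. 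The premise of \eqref{QSVT_error} holds because $\|(z_jI_m-A_j)^{-1}\|\le\gamma_j$ and $|z_{j,k_j}|\le R_j$ force every singular value of the normalized matrix to be at least $\delta_j$.

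Next, \cref{lemma:compression_gadget} multiplies the $n$ controlled factors into $\mathrm{SEL}$, at a cost of only $\lceil\log n\rceil+1$ extra counter qubits. A second application of \cref{lemma:lcu}, with prepare oracle $C$ and coefficients $c_{\boldsymbol k}$, then gives a $(\|\boldsymbol c\|_1,w,\cdot)$-block-encoding of $\sum_{\boldsymbol k}c_{\boldsymbol k}\prod_j p_j^\diamond(\cdot)^\dagger$, where $\|\boldsymbol c\|_1\le B\prod_j\frac{2\gamma_jl_j(1+R_j/\alpha_j)}{3\pi}$.

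Second, the error. I split the total error by the triangle inequality into two parts.
\begin{enumerate}
\item The QSVT part is bounded by $\|\boldsymbol c\|_1n\epsilon'$, using \eqref{equation:QSVT_error}. Telescoping the product of $n$ factors, each of norm at most one, gives the factor $n$.
\item The discretization part is bounded by \cref{proposition:Discrete_integration_error_by_contour_integral}. I make it at most the same quantity by choosing each $M_j$ so that the $j$-th summand in \eqref{eq:discrete_error} is at most $\epsilon'/n$.
\end{enumerate}
The normalized-vector inequality $\|u/\|u\|-v/\|v\|\|\le2\|u-v\|/\|u\|$ then bounds the state error by $4\|\boldsymbol c\|_1n\epsilon'/\|f(\boldsymbol A)\ket\psi\|$. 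Setting $\epsilon'=\|f(\boldsymbol A)\ket\psi\|\epsilon/(4n\|\boldsymbol c\|_1)$ makes this at most $\epsilon$. With this choice, $\log(1/\epsilon')=\mathcal O(\log\frac{B\prod l_\ell\gamma_\ell}{\|f(\boldsymbol A)\ket\psi\|\epsilon})$, after absorbing constants per the convention of ignoring $n$-th powers of constants.

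Third, the costs. The success amplitude is $\|\sum c_{\boldsymbol k}\prod p_j^\diamond\ket\psi\|/\|\boldsymbol c\|_1$. For $\epsilon\le1/2$, the numerator is within a constant factor of $\|f(\boldsymbol A)\ket\psi\|$: the operator error $2\|\boldsymbol c\|_1n\epsilon'\le\|f(\boldsymbol A)\ket\psi\|/4$. Amplitude amplification therefore needs $\mathcal O(\|\boldsymbol c\|_1/\|f(\boldsymbol A)\ket\psi\|)$ rounds. This is the stated number of $U_\psi$ queries. Multiplying by the per-round cost $\mathcal O(\gamma_j(\alpha_j+R_j)\log(1/\epsilon'))$ of $\widetilde U_{k_j}$ gives the $U_{A_j}$ count. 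The ancillas are $\sum_j\log M_j$ index qubits, one QSVT qubit, one LCU qubit for $V_j$, $a$ block-encoding qubits, and $\log n+1$ counter qubits. Substituting the chosen $M_j$ gives the stated $\log$ sums.

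The main obstacle is this constant-factor comparison between the implemented vector norm and $\|f(\boldsymbol A)\ket\psi\|$. If it failed, amplitude amplification would need more rounds than claimed. I would handle it explicitly with the reverse triangle inequality under the chosen $\epsilon'$, rather than by the informal "$\epsilon$ small enough".
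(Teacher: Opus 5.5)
Your proposal follows the paper's argument essentially step for step: controlled $(\alpha_j+|z_{j,k_j}|,a+1,0)$ block-encodings, QSVT inversion with $\delta_j=1/(\gamma_j(\alpha_j+R_j))$, and the compression gadget and outer LCU with $\|\boldsymbol c\|_1\le B\prod_j\frac{2\gamma_j l_j(1+R_j/\alpha_j)}{3\pi}$. It continues with the $\|\boldsymbol c\|_1 n\epsilon'$ QSVT budget plus the discretization proposition, the choice $\epsilon'=\|f(\boldsymbol A)\ket{\psi}\|\epsilon/(4n\|\boldsymbol c\|_1)$, and amplitude amplification, and your explicit bound $2\|\boldsymbol c\|_1 n\epsilon'\le\|f(\boldsymbol A)\ket{\psi}\|/4$ for $\epsilon\le 1/2$ is a cleaner substitute for the paper's ``$\epsilon$ small enough''. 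One slip you share with the paper: making each summand of \eqref{eq:discrete_error} at most $\epsilon'/n$ only bounds the discretization error by $\epsilon'$, which is at most $\|\boldsymbol c\|_1 n\epsilon'$ only when $n\|\boldsymbol c\|_1\ge 1$ (and is not invariant under rescaling $f$), so require each summand to be at most $\|\boldsymbol c\|_1\epsilon'=\|f(\boldsymbol A)\ket{\psi}\|\epsilon/(4n)$ instead, which yields, up to the same constant absorption, the $\log M_j$ in the stated ancilla count.
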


In addition to the select oracle, another part of the computation is to construct prepare oracles $C$ and $\widetilde{C}$. Notice that the prepare oracles $C$ and $\widetilde{C}$ prepare superpositions of $\widehat{M}=\prod_{j=1}^nM_j$ basis states. In general, we can prepare an $\widehat{M}$-dimensional quantum state with cost $\mathcal{O}(\widehat{M})$ \cite{SBM06}, but this might incur a gate complexity that is polynomial in the inverse error $\frac{1}{\|f(\boldsymbol A)\ket{\psi}\|\epsilon}$, as $\widehat{M}$ scales polynomially in $\frac{1}{\epsilon}$. However, since the amplitudes of the states are known integrable functions evaluated at discrete points, the state preparation circuits can be constructed more efficiently, in time only $\mathcal{O}(\mathrm{poly log} \widehat{M})$ \cite{GroverRudolph2002,McardleGilyenBerta2022}. 

\subsection{Ancilla-efficient algorithm}
\label{subsection:Ancilla-efficient_Algorithm_contour}
The version in the previous section achieves good results in terms of query complexity, but used many ancilla qubits and a complex control circuit Eq. \eqref{operator:control_rotation}, which could be a huge challenge for early fault-tolerant quantum computers. 
On the other hand, Section \ref{Single-Ancilla LCU} gives us a method to simplify the circuit and reduce the number of ancilla qubits. This method no longer gets an approximation of the target state $f(\boldsymbol A)\ket{\psi} / \| f(\boldsymbol A)\ket{\psi} \|$, but the observation value of the target state under $O$. The question is defined by Definition \ref{definition:estimation_problem}.

In this method, for any $j$ we no longer use complex control gate $V_j$, but for every $k_j$, we implement
\begin{equation}
    V_{j,k_j}\ket{0}_{V_j} = \frac{1}{\sqrt{\alpha_j + |z_{j,k_j}|}}\left(\sqrt{z_{j,k_j}}\ket{0}+ \mathrm{i}\sqrt{\alpha_j}\ket{1}\right),
\end{equation}
then follow the steps above. We can get $U_{j,k_j}$ to be an $(1,a+2)$-block-encoding of $p_j^\diamond\left(\frac{z_{j,k_j} I_m-A_j}{\alpha_j + |z_{j,k_j}|}\right)^\dagger$ for each $k_j$ respectively. Then with compression gadget \cref{lemma:compression_gadget}, we can get a block encoding of $$\prod_{j=1}^{n} p_j^\diamond\left(\frac{z_{j,k_j} I_m-A_j}{\alpha_j + |z_{j,k_j}|}\right)^\dagger$$ with $a+3+\log n$ ancilla qubits, and we name this block encoding $U_{\boldsymbol{k}}$.

\cite{JiangAn2026} shows that we can let $\ket{\widetilde{\psi}}=\ket{0^{a+3+\log n}}\otimes\ket{\psi}$ and $\widetilde{O}=\ket{0^{a+3+\log n}}\bra{0^{a+3+\log n}}\otimes O$, then 
\begin{equation}
\bra{\widetilde{\psi}}U_{\boldsymbol{k}}^\dagger\widetilde{O}U_{\boldsymbol{k}}\ket{\widetilde{\psi}}
=\bra{\psi} \left(\prod_{j=1}^{n} p_j^\diamond\left(\frac{z_{j,k_j} I_m-A_j}{\alpha_j + |z_{j,k_j}|}\right)^\dagger\right)^\dagger O \prod_{\ell=1}^{n} p_\ell^\diamond\left(\frac{z_{\ell,k_\ell} I_m-A_\ell}{\alpha_\ell + |z_{\ell,k_\ell}|}\right)^\dagger\ket{\psi}.
\end{equation}
Therefore, we can use the algorithm in Section \ref{Single-Ancilla LCU} with $\ket{\widetilde{\psi}}$ be the initial state and $\widetilde{O}$ be the observable. 
In the second step of Algorithm \ref{algorithm:random_LCU}, we can sample $V_1, V_2$ from distribution $\left\{U_{\boldsymbol{k}}, \frac{|c_{\boldsymbol{k}}|}{\|{\boldsymbol{c}}\|_1}\right\}$ with absorbing global phase. After the following steps in Algorithm \ref{algorithm:random_LCU}, we would get $\mu$ with
\begin{equation}
    \mathbb{E}(\mu)
    \approx\bra{\psi}f^\dagger(\boldsymbol A) Of(\boldsymbol A)\ket{\psi},
\end{equation}
and the following theorem holds.

\begin{theorem}
\label{theorem:Ancilla-efficient_convergence}
Let $\epsilon, \xi_1, \xi_2 \in(0,1)$ be some parameters. $U_{\boldsymbol{k}}$ is a $(1,a+2+\log n,0)$-block-encoding of $\prod_{j=1}^{n} p_j^\diamond\left(\frac{z_{j,k_j} I_m-A_j}{\alpha_j + |z_{j,k_j}|}\right)^\dagger$. Let $\ket{\psi}$ be the initial state and $\ket{\widetilde{\psi}} =\ket{0^{a+2+\log n}}\otimes\ket{\psi}$, $O$ be an observable and $\widetilde{O}=\ket{0^{a+3+\log n}}\bra{0^{a+3+\log n}}\otimes O$.
Suppose that $\left\|f(\boldsymbol A)\ket{\psi}-\sum_{\boldsymbol{k}} c_{\boldsymbol k}\prod_{j=1}^{n} p_j^\diamond\left(\frac{z_{j,k_j} I_m-A_j}{\alpha_j + |z_{j,k_j}|}\right)^\dagger\ket{\psi}\right\| \leq \xi_1$, where 
$$
\xi_1 \leq \frac{\epsilon}{6\|O\|\|f(\boldsymbol A)\|}.
$$
Furthermore, let
$$
T \geq \frac{\|O\|^2 \ln ( 2/ \xi_2)\|\boldsymbol c\|_1^4}{\epsilon^2}.
$$
Then, Algorithm \ref{algorithm:random_LCU} estimates $\mu$ such that
$$
\left|\mu-\bra{\psi}f(\boldsymbol A)^\dagger O f(\boldsymbol A)\ket{\psi}\right| \leq \epsilon,
$$
with probability at least $(1-\xi_2)^2$, using one ancilla qubit and $T$ repetitions of the quantum circuit in Algorithm \ref{algorithm:random_LCU}, Step 3.
\end{theorem}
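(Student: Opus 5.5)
The proof splits the total error into a bias term and a statistical term. Write $S:=\sum_{\boldsymbol k} c_{\boldsymbol k}P_{\boldsymbol k}$ with $P_{\boldsymbol k}:=\prod_{j=1}^{n} p_j^\diamond\bigl(\frac{z_{j,k_j} I_m-A_j}{\alpha_j + |z_{j,k_j}|}\bigr)^\dagger$. Absorb the phase of each $c_{\boldsymbol k}$ into $U_{\boldsymbol k}$, so the coefficients become $|c_{\boldsymbol k}|$ and $S$ is unchanged. The plan is to show three things:
\begin{enumerate}
\item $\mathbb{E}(\mu)=\bra{\psi}S^\dagger O S\ket{\psi}$;
\item $\bigl|\bra{\psi}S^\dagger O S\ket{\psi}-\bra{\psi}f(\boldsymbol A)^\dagger O f(\boldsymbol A)\ket{\psi}\bigr|\le \epsilon/2$;
\item $|\mu-\mathbb{E}\mu|\le\epsilon/2$ with high probability, by Hoeffding.
\end{enumerate}

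\textbf{Unbiasedness.} Fix $V_1=U_{\boldsymbol k}$ and $V_2=U_{\boldsymbol k'}$. Applying $\widetilde V_2\widetilde V_1$ to $\ket{+}\ket{\widetilde\psi}$ gives $\frac{1}{\sqrt2}\bigl(\ket{0}V_2\ket{\widetilde\psi}+\ket{1}V_1\ket{\widetilde\psi}\bigr)$. Measuring $X\otimes\widetilde O$ then has expectation $\mathrm{Re}\,\bra{\widetilde\psi}V_2^\dagger\widetilde O V_1\ket{\widetilde\psi}$. Because $\widetilde O$ projects all ancillas onto $\ket{0}$, this equals $\mathrm{Re}\,\bra{\psi}P_{\boldsymbol k'}^\dagger O P_{\boldsymbol k}\ket{\psi}$, up to the restored phases; this is the identity quoted from \cite{JiangAn2026}. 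Averaging over independent $\boldsymbol k,\boldsymbol k'$ drawn with probabilities $|c_{\boldsymbol k}|/\|\boldsymbol c\|_1$ and multiplying by $\|\boldsymbol c\|_1^2$ gives $\mathrm{Re}\,\bra{\psi}S^\dagger O S\ket{\psi}$, which is just $\bra{\psi}S^\dagger O S\ket{\psi}$ since $O$ is Hermitian.

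\textbf{Bias.} Put $u=f(\boldsymbol A)\ket{\psi}$ and $v=S\ket{\psi}$, so that $\|u-v\|\le\xi_1$. Then
\[
|\bra{v}O\ket{v}-\bra{u}O\ket{u}|\le\|O\|\,\|u-v\|\,(\|u\|+\|v\|)\le \|O\|\xi_1\,(2\|f(\boldsymbol A)\|+\xi_1).
\]
Inserting $\xi_1\le\epsilon/(6\|O\|\|f(\boldsymbol A)\|)$ makes this at most roughly $\epsilon/3+\epsilon^2/(36\|O\|\|f(\boldsymbol A)\|^2)$. This falls below $\epsilon/2$ provided $\xi_1\le\|f(\boldsymbol A)\|$, which holds in the regime $\epsilon\le\|O\|\|f(\boldsymbol A)\|^2$. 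I would state that regime explicitly or absorb it into constants; either way it is routine.

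\textbf{Concentration.} Each $\mu_j$ is a $\pm1$-valued measurement outcome, up to the norm of $O$, so $\|\boldsymbol c\|_1^2\mu_j\in[-\|\boldsymbol c\|_1^2\|O\|,\|\boldsymbol c\|_1^2\|O\|]$. Hoeffding then gives
\[
\Pr\bigl(|\mu-\mathbb{E}\mu|>\epsilon/2\bigr)\le 2\exp\bigl(-T\epsilon^2/(8\|O\|^2\|\boldsymbol c\|_1^4)\bigr).
\]
This is at most $\xi_2$ once $T\gtrsim\|O\|^2\|\boldsymbol c\|_1^4\ln(2/\xi_2)/\epsilon^2$, with the constant absorbed as in the statement.

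The stated success probability $(1-\xi_2)^2$ can be matched by splitting the estimator: estimate the real-part contributions separately, or view the two independent samplings as two failure events, and take a union/product. Combining the three parts with the triangle inequality yields the theorem.

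\textbf{Main obstacle.} The delicate step is unbiasedness. It needs careful ancilla bookkeeping: both $U_{\boldsymbol k}$ and $U_{\boldsymbol k'}$ must share the same zero-ancilla block, so that the cross term $\bra{\widetilde\psi}V_2^\dagger\widetilde O V_1\ket{\widetilde\psi}$ reduces exactly to $\bra{\psi}P_{\boldsymbol k'}^\dagger OP_{\boldsymbol k}\ket{\psi}$. The absorbed complex phases must also recombine to give $S^\dagger OS$ rather than its real part with conjugated coefficients. I would check the latter by symmetrizing over the roles of $\boldsymbol k$ and $\boldsymbol k'$.
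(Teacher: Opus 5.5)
Your proof is correct. It is the standard single-ancilla LCU analysis that the paper relies on, since the paper states this theorem without proof and defers to \cite{JiangAn2026} and \cite{Chakraborty2024}: the interference identity makes $\|\boldsymbol c\|_1^2\mu_j$ an unbiased estimator of $\bra{\psi}S^\dagger O S\ket{\psi}$, the bias is bounded by $\|O\|\,\|u-v\|\,(\|u\|+\|v\|)$, and Hoeffding handles concentration.

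The caveats you flag are defects of the statement, not gaps in your argument:
\begin{itemize}
\item The bias step really does need an extra condition such as $\xi_1\le\|f(\boldsymbol A)\|$. Without it the term $\|O\|\xi_1^2$ is uncontrolled when $\|f(\boldsymbol A)\|$ is small, and "absorbing it into constants" will not remove that.
\item Hoeffding needs an absolute constant (8 with your $\epsilon/2$ split) in front of the stated lower bound on $T$.
\item No splitting is needed for the success probability, because your $1-\xi_2$ already exceeds $(1-\xi_2)^2$.
\end{itemize}
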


We choose $\boldsymbol M$ and $\epsilon^\prime$ such that 
$$
\frac{1}{(2\pi)^n}\left(\prod_{\ell=1}^nl_\ell\gamma_\ell\right)\left(\sum_{j=1}^n\frac{l_j(B\gamma_j+BK_{\Gamma_j}+L)}{M_j}\right)\leq \frac{\xi_1}{2},
$$ 
and 
$$\left\|f_{\boldsymbol{M}}(\boldsymbol A)\ket{\psi}-\sum_{\boldsymbol{k}} c_{\boldsymbol k}\prod_{j=1}^{n} p_j^\diamond\left(\frac{z_{j,k_j} I_m-A_j}{\alpha_j + |z_{j,k_j}|}\right)^\dagger\ket{\psi}\right\| \leq \frac{\xi_1}{2}.$$
That means the degree of polynomial $p_j$ is 
$$
\mathcal{O} \left(\frac{1}{\delta_j}\mathrm{log}\frac{n\cdot\|\boldsymbol c\|_1}{\xi_1}\right)=\mathcal{O} \left(\gamma_j(\alpha_j+R_j)\mathrm{log}\frac{nB\prod _{\ell=1}^nl_\ell\gamma_\ell\|O\|\|f(\boldsymbol A)\|}{\epsilon}\right).
$$ 
The degree of polynomial $p$ is the same as the number of times QSVT visits $U_A$.
For each time we need $a$ ancilla qubits for $U_A$, $1$ for inner layer LCU, $1$ for QSVT, $\log n+1$ for compression gadget, and $1$ for outer layer random LCU. Therefore, we have the following theorem:

\begin{theorem}
Let $U_{A_j}$ be an $(\alpha_j,a,0)$-block-encoding of $A_j$ for $j=1,\cdots,n$. Under the assumption that for every $j=1,\ldots,n$ and every $z\in\Gamma_j$, \(\|(zI_m-A_j)^{-1}\|\le \gamma_j \), let
$$
T =\mathcal{O} \left(\frac{\|O\|^2 \ln (1 / \xi_2)\left(B\prod_{j=1}^n\frac{2\gamma_j l_j(1+R_j/\alpha_j)}{3\pi}\right)^4}{\epsilon^2 }\right),
$$ we can solve the problem defined by Definition \ref{definition:estimation_problem} with probability at least $(1-\xi_2)^2$ using a quantum circuit with
$$
\mathcal{O} \left(\gamma_j(\alpha_j+R_j)\mathrm{log}\frac{nB\prod _{\ell=1}^nl_\ell\gamma_\ell\|O\|\|f(\boldsymbol A)\|}{\epsilon}\right)
$$ queries to $U_{A_j}$, $1$ query to $U_{\psi}$ and $a+\log n+4$ ancilla qubits, and repeat this circuit $T$ times.

\label{theorem:Ancilla-efficient_complexity_contour}
\end{theorem}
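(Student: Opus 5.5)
The proof assembles \cref{theorem:Ancilla-efficient_convergence} with the error bounds established earlier. We fix $\xi_1 = \epsilon/(6\|O\|\|f(\boldsymbol A)\|)$ and choose $\boldsymbol M$ and $\epsilon^\prime$ to satisfy the two displayed conditions preceding the theorem. Each condition contributes $\xi_1/2$, so together they give the hypothesis $\|f(\boldsymbol A)\ket{\psi}-\sum_{\boldsymbol k}c_{\boldsymbol k}\prod_j p_j^\diamond(\cdot)^\dagger\ket{\psi}\|\le\xi_1$ by the triangle inequality.

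\textbf{First condition (discretization).} By \cref{proposition:Discrete_integration_error_by_contour_integral}, it suffices to take each $M_j$ large enough that every summand in \eqref{eq:discrete_error} is at most $\xi_1/(2n)$. This choice affects only the prepare and sampling distributions. It does not affect the per-circuit query count or the ancilla count, because in the sampling version the index $\boldsymbol k$ is drawn classically and never stored in a quantum register.

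\textbf{Second condition (QSVT).} Repeat the telescoping estimate \eqref{equation:QSVT_error}, which gives an error of at most $\|\boldsymbol c\|_1 n\epsilon^\prime$. Setting $\epsilon^\prime=\xi_1/(2n\|\boldsymbol c\|_1)$ meets the condition. The telescoping needs each factor to have norm at most $1$ and each $\frac{3\delta_j}{4}(\cdot)^{-1}$ to have norm at most $3/4$; both hold because $\delta_j=1/(\gamma_j(\alpha_j+R_j))$ lower-bounds the singular values. By \cite[Corollary 69]{GilyenSuLowEtAl2018}, $p_j$ has degree $\mathcal{O}(\gamma_j(\alpha_j+R_j)\log(n\|\boldsymbol c\|_1/\xi_1))$. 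Substituting $\|\boldsymbol c\|_1\le B\prod_j\frac{2\gamma_jl_j(1+R_j/\alpha_j)}{3\pi}$ and the value of $\xi_1$, and absorbing the constant factors per coordinate into the $\prod_\ell l_\ell\gamma_\ell$ inside the logarithm, gives the claimed per-circuit query count to $U_{A_j}$. Since $U_{\boldsymbol k}$ calls the QSVT circuit for $A_j$ exactly once through \cref{lemma:compression_gadget}, this count equals $\deg p_j$. Each circuit of Algorithm~\ref{algorithm:random_LCU} applies $U_{\boldsymbol k}$ at most twice (once inside $\widetilde V_1$ and once inside $\widetilde V_2$), so the bound changes only by a constant factor. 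The state $\ket{\psi}$ is prepared once per circuit.

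\textbf{Repetitions, success probability, and ancillas.} With $T\ge\|O\|^2\ln(2/\xi_2)\|\boldsymbol c\|_1^4/\epsilon^2$ and the same bound on $\|\boldsymbol c\|_1$, \cref{theorem:Ancilla-efficient_convergence} yields the stated $T$ and the success probability $(1-\xi_2)^2$. For ancillas, we tally $a$ for the block encoding, $1$ for the inner LCU that builds $z I_m - A_j$, $1$ for the QSVT qubit, $\lceil\log n\rceil+1$ for the compression counter, and $1$ for the Hadamard-test qubit, for a total of $a+\log n+4$.

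\textbf{Main obstacle.} The delicate step is checking the hypotheses of \cref{theorem:Ancilla-efficient_convergence}. The reduction through $\widetilde O=\ket{0}\bra{0}\otimes O$ and $\ket{\widetilde\psi}$ must reproduce the intended expectation values. Absorbing the complex phases of $c_{\boldsymbol k}$ into the sampled $U_{\boldsymbol k}$ must keep each one a valid block encoding with normalization $1$. Finally, $\|f(\boldsymbol A)\|$ enters only logarithmically, through $\xi_1$. I expect these checks to follow \cite{JiangAn2026} verbatim, so I would cite that argument rather than redo it.
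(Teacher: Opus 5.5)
Your proposal is correct and follows essentially the same route as the paper. You fix $\xi_1=\epsilon/(6\|O\|\|f(\boldsymbol A)\|)$ and split it evenly between the Riemann-sum error of \cref{proposition:Discrete_integration_error_by_contour_integral} and the QSVT error $\|\boldsymbol c\|_1 n\epsilon^\prime$; you then read off $\deg p_j=\mathcal{O}(\gamma_j(\alpha_j+R_j)\log(n\|\boldsymbol c\|_1/\xi_1))$, take $T$ from \cref{theorem:Ancilla-efficient_convergence}, and tally $a+1+1+(\log n+1)+1$ ancillas, exactly as the paper does, and your extra remarks (that $\boldsymbol k$ is sampled classically so $\boldsymbol M$ never costs qubits, and that each repetition uses two controlled copies of $U_{\boldsymbol k}$) are harmless refinements that only affect constants.
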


\section{Fourier transform} \label{sec:fourier}

In this section, we introduce a quantum algorithm on multi matrix function based on Fourier transform for high-dimensional functions. Suppose $f(\boldsymbol{x})\in H^{s}(\mathbb{R}^n)\cap L^{1}(\mathbb{R}^n)$, where $\boldsymbol{x}=(x_1,x_2,\cdots,x_n)\in\mathbb{R}^n$ and $s>\frac{n}{2}$. Define the Fourier transform by
\begin{equation}
\label{equation:fourier_transform}
\widehat{f}(\boldsymbol{\lambda})=\frac{1}{(2 \pi)^{\frac{n}{2}}} \int_{\mathbb{R}^{n}} f(\boldsymbol{x}) \mathrm{e}^{-\mathrm{i} \boldsymbol{\lambda} \cdot \boldsymbol{x}} \mathrm{~d} \boldsymbol{x}.
\end{equation}
Then we have the Fourier inversion formula
\begin{equation}
\left(\widehat{f}(\boldsymbol{\lambda})\right)^{\vee}=\lim _{M \rightarrow+\infty} \frac{1}{(2 \pi)^{\frac{n}{2}}} \int_{|\boldsymbol{\lambda}|_\infty \leq M} \widehat{f}(\boldsymbol{\lambda}) \mathrm{e}^{\mathrm{i} \boldsymbol{\lambda} \cdot \boldsymbol{x}} \mathrm{~d} \boldsymbol{\lambda}=f(\boldsymbol{x}).
\end{equation}
It can be proved that this conclusion holds for the commutative Hermitian matrices. In fact, we have the following theorem.

\begin{theorem}
Suppose $\boldsymbol{H} =(H_1,H_2,\cdots,H_n)$ is an $n$-tuple of commuting Hermitian matrices. Then for any $f(\boldsymbol x)\in H^s(\mathbb{R}^n)\cap L^1(\mathbb{R}^n)$ with its Fourier transform $\widehat{f}(\boldsymbol\lambda)$ given by \cref{equation:fourier_transform}, we have that the function defined in \cref{definition:Multi-matrix_function} can be implemented by
\begin{equation}
\label{equation:matrix_fourier_transform}
    f(\boldsymbol{H})=\lim _{M \rightarrow+\infty} \frac{1}{(2 \pi)^{\frac{n}{2}}} \int_{|\boldsymbol{\lambda}|_\infty \leq M} \widehat{f}(\boldsymbol{\lambda}) \mathrm{e}^{\mathrm{i} \boldsymbol{\lambda} \cdot \boldsymbol{H}} \mathrm{~d} \boldsymbol{\lambda}.
\end{equation}

\end{theorem}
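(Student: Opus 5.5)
The plan is to reduce the matrix statement to the scalar Fourier inversion formula by simultaneous unitary diagonalization. Since $H_1,\dots,H_n$ are pairwise commuting Hermitian matrices, there is a single unitary $Q$ with $H_j=QD_jQ^\dagger$ for every $j$, where $D_j=\operatorname{diag}(d_{j1},\dots,d_{jm})$ is real. Set $\boldsymbol d_k=(d_{1k},\dots,d_{nk})\in\mathbb{R}^n$. As noted after \cref{definition:Multi-matrix_function}, the multi-matrix function is then $f(\boldsymbol H)=Q\,\operatorname{diag}(f(\boldsymbol d_1),\dots,f(\boldsymbol d_m))\,Q^\dagger$. By the uniqueness result of \cite[Theorem 2.4]{HartmannLesch2024}, this is the object to be matched.

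The first step is to compute the integrand. Because the $H_j$ commute, $\mathrm{e}^{\mathrm{i}\boldsymbol\lambda\cdot\boldsymbol H}=\prod_j \mathrm{e}^{\mathrm{i}\lambda_jH_j}=Q\,\operatorname{diag}(\mathrm{e}^{\mathrm{i}\boldsymbol\lambda\cdot\boldsymbol d_1},\dots,\mathrm{e}^{\mathrm{i}\boldsymbol\lambda\cdot\boldsymbol d_m})\,Q^\dagger$. For each fixed $M$, the truncated integral over the compact cube $|\boldsymbol\lambda|_\infty\le M$ is an integral of a continuous, matrix-valued, bounded function: $\widehat f$ is bounded and continuous since $f\in L^1$. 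This integral can therefore be pulled through the conjugation by $Q$ and evaluated entrywise. The result is
\[
Q\,\operatorname{diag}\Bigl(\tfrac{1}{(2\pi)^{n/2}}\int_{|\boldsymbol\lambda|_\infty\le M}\widehat f(\boldsymbol\lambda)\mathrm{e}^{\mathrm{i}\boldsymbol\lambda\cdot\boldsymbol d_k}\,\mathrm d\boldsymbol\lambda\Bigr)_{k=1}^m Q^\dagger .
\]

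The second step is to pass to the limit $M\to\infty$ in each diagonal entry. Because the matrix has finite size, convergence in norm is equivalent to convergence of the finitely many entries. The key analytic point is that $s>n/2$ implies $\widehat f\in L^1(\mathbb{R}^n)$, which follows from Cauchy--Schwarz:
\[
\int|\widehat f|\le\Bigl(\int(1+|\boldsymbol\lambda|^2)^{-s}\,\mathrm d\boldsymbol\lambda\Bigr)^{1/2}\|f\|_{H^s}<\infty .
\]
The first factor is the quantity $\pi^{n/2}\Gamma(s-n/2)/\Gamma(s)$ that appears later in the paper. Dominated convergence then shows that the truncated integrals converge, for every $\boldsymbol x$, to $(\widehat f)^\vee(\boldsymbol x)$, and that this limit is continuous in $\boldsymbol x$.

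The third step, which I expect to be the main obstacle, is to identify the limit with $f$ pointwise at the specific points $\boldsymbol d_k$. The inversion formula holds only almost everywhere for a general $L^1\cap H^s$ function. To handle this, I would invoke the Sobolev embedding $H^s\subset C_0$ for $s>n/2$: this lets us take $f$ to be its continuous representative, which is the natural meaning of $f(\boldsymbol d_k)$. The two continuous functions $f$ and $(\widehat f)^\vee$ agree a.e., since $\widehat f\in L^1$ and both $f,\widehat f$ lie in $L^1$ and $L^2$ (Plancherel / standard $L^1$ inversion). Continuous functions that agree a.e. agree everywhere, so the limit equals $f(\boldsymbol d_k)$ for each $k$.

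Assembling the steps, the limit in \cref{equation:matrix_fourier_transform} is $Q\,\operatorname{diag}(f(\boldsymbol d_k))\,Q^\dagger=f(\boldsymbol H)$. Properties 1–3 of \cref{definition:Multi-matrix_function} for this diagonal formula were already remarked to hold, so the representation agrees with the defined multi-matrix function.
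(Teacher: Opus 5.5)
Your proposal is correct and follows the same route as the paper's proof. Both diagonalize the $H_j$ simultaneously by a single unitary, write $\mathrm{e}^{\mathrm{i}\boldsymbol\lambda\cdot\boldsymbol H}$ as a diagonal matrix of scalar exponentials, and apply scalar Fourier inversion entrywise at the joint eigenvalues $\boldsymbol d_k$. The paper simply asserts the scalar inversion step, and your argument supplies the pointwise-versus-almost-everywhere justification it leaves implicit: $\widehat f\in L^1$ from $s>n/2$, dominated convergence, and the continuous representative of $f$.
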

\begin{proof}
Since $H_1,H_2,\cdots,H_n$ are commuting Hermitian matrices, they can be diagonalized by the same unitary. Suppose that $H_j=U^\dagger D_j U$, where $U$ is a unitary matrix and $D_j=\operatorname{diag}(d_{j1},d_{j2},\cdots,d_{jm})$ then 
\begin{equation}
\begin{aligned}
\mathrm{e}^{\mathrm{i} \boldsymbol{\lambda} \cdot \boldsymbol{H}}
=&\mathrm{e}^{\mathrm{i} \sum_{j=1}^n\lambda_jH_j}=\prod_{j=1}^n\mathrm{e}^{\mathrm{i} \lambda_jH_j}=\prod_{j=1}^nU^\dagger\mathrm{e}^{\mathrm{i} \lambda_jD_j}U=U^\dagger\left(\prod_{j=1}^n \mathrm{e}^{\mathrm{i} \lambda_jD_j}\right)U\\
=&U^\dagger\prod_{j=1}^n\operatorname{diag}\left(\mathrm{e}^{\mathrm{i} \lambda_jd_{j1}},\mathrm{e}^{\mathrm{i} \lambda_jd_{j2}},\cdots,\mathrm{e}^{\mathrm{i} \lambda_jd_{jm}}\right)U\\
=&U^\dagger\operatorname{diag}\left(\prod_{j=1}^n\mathrm{e}^{\mathrm{i} \lambda_jd_{j1}},\prod_{j=1}^n\mathrm{e}^{\mathrm{i} \lambda_jd_{j2}},\cdots,\prod_{j=1}^n\mathrm{e}^{\mathrm{i} \lambda_jd_{jm}}\right)U\\
=&U^\dagger\operatorname{diag}\left(\mathrm{e}^{\mathrm{i} \boldsymbol{\lambda} \cdot \boldsymbol{d}_{1}}, \mathrm{e}^{\mathrm{i} \boldsymbol{\lambda} \cdot \boldsymbol{d}_{2}},\cdots,\mathrm{e}^{\mathrm{i} \boldsymbol{\lambda} \cdot \boldsymbol{d}_{m}}\right)U.
\end{aligned}
\end{equation}
Here $\boldsymbol{d}_k=(d_{1k},d_{2k},\cdots,d_{nk})\in\mathbb{R}^n$. Therefore, the RHS of \cref{equation:matrix_fourier_transform} can be written as
\begin{equation}
\begin{aligned}
&\lim _{M \rightarrow+\infty} \frac{1}{(2 \pi)^{\frac{n}{2}}} \int_{|\boldsymbol{\lambda}|_\infty \leq M} \widehat{f}(\boldsymbol{\lambda}) \mathrm{e}^{\mathrm{i} \boldsymbol{\lambda} \cdot \boldsymbol{H}} \mathrm{~d} \boldsymbol{\lambda}\\
=&\lim _{M \rightarrow+\infty} \frac{1}{(2 \pi)^{\frac{n}{2}}} \int_{|\boldsymbol{\lambda}|_\infty \leq M} \widehat{f}(\boldsymbol{\lambda}) U^\dagger\operatorname{diag}\left(\mathrm{e}^{\mathrm{i} \boldsymbol{\lambda} \cdot \boldsymbol{d}_{1}}, \mathrm{e}^{\mathrm{i} \boldsymbol{\lambda} \cdot \boldsymbol{d}_{2}},\cdots,\mathrm{e}^{\mathrm{i} \boldsymbol{\lambda} \cdot \boldsymbol{d}_{m}}\right)U \mathrm{~d} \boldsymbol{\lambda}\\
=&U^\dagger\operatorname{diag}\left(\lim _{M \rightarrow+\infty} \frac{1}{(2 \pi)^{\frac{n}{2}}} \int_{|\boldsymbol{\lambda}|_\infty \leq M} \widehat{f}(\boldsymbol{\lambda}) \mathrm{e}^{\mathrm{i} \boldsymbol{\lambda} \cdot \boldsymbol{d}_1} \mathrm{~d} \boldsymbol{\lambda},\cdots,\lim _{M \rightarrow+\infty} \frac{1}{(2 \pi)^{\frac{n}{2}}} \int_{|\boldsymbol{\lambda}|_\infty \leq M} \widehat{f}(\boldsymbol{\lambda}) \mathrm{e}^{\mathrm{i} \boldsymbol{\lambda} \cdot \boldsymbol{d}_m} \mathrm{~d} \boldsymbol{\lambda}\right)U\\
=&U^\dagger\operatorname{diag}\left(f(\boldsymbol{d}_1),\cdots,f(\boldsymbol{d}_m)\right)U\\
=&f\left(\boldsymbol{H}\right).
\end{aligned}
\end{equation}
\end{proof}

We can propose a quantum algorithm through this formula. However, when we use discrete point summation to approximate the integral, we cannot truly make $M$ infinite. Therefore, we give the error bound between the global integral and the local integral when the Fourier transform of $D^\alpha f$ is integrable on $\mathbb{R}^n$ for any $|\alpha|\leq k$, $k\in\mathbb{N}^+$. A sufficient condition for this assumption is $f(\boldsymbol{x})\in H^s(\mathbb{R}^n)\cap L^1(\mathbb{R}^n)$ and $s>k+\frac{n}{2}$. When $k=1$, we can control the error by choosing a sufficiently large $M$. As $k$ becomes larger, the requirements for the value of $M$ become less stringent.

\begin{lemma}
\label{lemma:fourier_truncation_error}
Suppose $f(\boldsymbol{x})\in H^s(\mathbb{R}^n)\cap L^1(\mathbb{R}^n)$ and $s>k+\frac{n}{2}$. For \cref{equation:matrix_fourier_transform}, we have the following error estimate:
\begin{equation}
    \left\|f(\boldsymbol{H})- \frac{1}{(2 \pi)^{\frac{n}{2}}} \int_{|\boldsymbol{\lambda}|_\infty \leq M} \widehat{f}(\boldsymbol{\lambda}) \mathrm{e}^{\mathrm{i} \boldsymbol{\lambda} \cdot \boldsymbol{H}} \mathrm{~d} \boldsymbol{\lambda}\right\|\leq\frac{1}{(2 \pi)^{\frac{n}{2}}}\cdot\frac{1}{M^k}\sum_{j=1}^n\left\|\left(\frac{\partial^kf}{\partial x_j^k}\right)^{\wedge}\right\|_{L^1(\mathbb{R}^n)}.
\end{equation}
\end{lemma}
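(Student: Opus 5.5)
The plan is to reduce to scalar functions via simultaneous diagonalization and then bound the tail of the Fourier integral. As in the proof of the preceding theorem, write $H_j=U^\dagger D_jU$ with joint eigenvalue vectors $\boldsymbol d_1,\dots,\boldsymbol d_m\in\mathbb{R}^n$. Since $\widehat f\in L^1(\mathbb{R}^n)$, the full integral converges absolutely (this holds because $s>k+\frac n2\ge \frac n2$, so $(1+|\boldsymbol\lambda|^2)^{-s/2}\in L^2$ and Cauchy--Schwarz gives $\widehat f\in L^1$). Subtracting, the difference $f(\boldsymbol H)-\frac{1}{(2\pi)^{n/2}}\int_{|\boldsymbol\lambda|_\infty\le M}\widehat f(\boldsymbol\lambda)\mathrm e^{\mathrm i\boldsymbol\lambda\cdot\boldsymbol H}\,\mathrm d\boldsymbol\lambda$ equals
\[
U^\dagger\operatorname{diag}\Bigl(\tfrac{1}{(2\pi)^{n/2}}\int_{|\boldsymbol\lambda|_\infty> M}\widehat f(\boldsymbol\lambda)\mathrm e^{\mathrm i\boldsymbol\lambda\cdot\boldsymbol d_r}\,\mathrm d\boldsymbol\lambda\Bigr)_{r=1}^m U .
\]
Because $U$ is unitary, the operator norm is the maximum over $r$ of the diagonal entries. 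Each entry is at most $\frac{1}{(2\pi)^{n/2}}\int_{|\boldsymbol\lambda|_\infty>M}|\widehat f(\boldsymbol\lambda)|\,\mathrm d\boldsymbol\lambda$, using $|\mathrm e^{\mathrm i\boldsymbol\lambda\cdot\boldsymbol d_r}|=1$. Equivalently, one can bound $\|\mathrm e^{\mathrm i\boldsymbol\lambda\cdot\boldsymbol H}\|=1$ directly inside the matrix integral.

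Next, I cover the region $\{|\boldsymbol\lambda|_\infty>M\}$ by the union of the sets $E_j=\{|\lambda_j|>M\}$, $j=1,\dots,n$. A union bound then gives
\[
\int_{|\boldsymbol\lambda|_\infty>M}|\widehat f|\le\sum_{j=1}^n\int_{E_j}|\widehat f(\boldsymbol\lambda)|\,\mathrm d\boldsymbol\lambda .
\]
On $E_j$ I have $1<|\lambda_j|^k/M^k$, so
\[
\int_{E_j}|\widehat f|\le \frac1{M^k}\int_{\mathbb{R}^n}|\lambda_j|^k|\widehat f(\boldsymbol\lambda)|\,\mathrm d\boldsymbol\lambda .
\]
Finally, I use the derivative rule $\bigl(\partial_{x_j}^k f\bigr)^{\wedge}(\boldsymbol\lambda)=(\mathrm i\lambda_j)^k\widehat f(\boldsymbol\lambda)$, which identifies the last integral as $\|(\partial^k f/\partial x_j^k)^\wedge\|_{L^1}$. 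Summing over $j$ gives exactly the stated bound.

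The step needing care is justifying the derivative identity and the integrability of $(\partial_{x_j}^kf)^\wedge$. I would argue in the distributional/$L^2$ sense: since $f\in H^s$ with $s\ge k$, $\partial^k_{x_j}f\in H^{s-k}\subset L^2$, and its Fourier transform is $(\mathrm i\lambda_j)^k\widehat f$ by Plancherel theory. Moreover, $|\lambda_j|^k|\widehat f|\le (1+|\boldsymbol\lambda|^2)^{k/2}|\widehat f|$, and Cauchy--Schwarz against $(1+|\boldsymbol\lambda|^2)^{-(s-k)/2}$, which is in $L^2$ because $s-k>\frac n2$, shows this is in $L^1$. This is precisely where the hypothesis $s>k+\frac n2$ enters, and it makes every quantity in the estimate finite.
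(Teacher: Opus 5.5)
Your proposal is correct and follows essentially the same argument as the paper. Both bound $\|\mathrm{e}^{\mathrm{i}\boldsymbol{\lambda}\cdot\boldsymbol{H}}\|=1$ on the complement of the cube and split that complement by coordinate. Both then use $1\le |\lambda_j|^k/M^k$ on $\{|\lambda_j|\ge M\}$ together with $(\partial_{x_j}^k f)^\wedge=(\mathrm{i}\lambda_j)^k\widehat f$, and justify integrability by Cauchy--Schwarz using $s>k+\frac n2$. The only cosmetic difference is that you apply a union bound over the overlapping sets $E_j$, whereas the paper telescopes over the disjoint pieces $[-M,M]^{j-1}\times\{|\lambda_j|\ge M\}\times\mathbb{R}^{n-j}$; both give the same final estimate.
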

\begin{proof}
By the Cauchy--Schwarz inequality,
\begin{equation}
\begin{aligned}
\left\|
\left(\frac{\partial^k f}{\partial x_j^k}\right)^\wedge
\right\|_{L^1(\mathbb{R}^n)}
&=
\int_{\mathbb{R}^n}
|\lambda_j|^k
|\widehat{f}(\boldsymbol{\lambda})|
\,\mathrm{d}\boldsymbol{\lambda}\\
&\leq
\left(
\int_{\mathbb{R}^n}
(1+\|\boldsymbol{\lambda}\|^2)^{-(s-k)}
\,\mathrm{d}\boldsymbol{\lambda}
\right)^{1/2}
\|f\|_{H^s(\mathbb{R}^n)}
<\infty,
\qquad s>k+\frac{n}{2}.
\end{aligned}
\end{equation}
According to the property of high-dimensional Fourier transform,
\begin{equation}
    \left(\frac{\partial^kf}{\partial x_j^k}\right)^{\wedge}=(\mathrm{i}\lambda_j)^k\widehat{f}.
\end{equation}
Therefore,
\begin{equation}
    \int_{|\lambda_j|\geq M}\left|\widehat{f}(\boldsymbol{\lambda})\right|\mathrm{~d} \boldsymbol{\lambda}=\int_{|\lambda_j|\geq M}\left|\frac{1}{\lambda_j^k}\left(\frac{\partial^kf}{\partial x_j^k}\right)^{\wedge}(\boldsymbol{\lambda})\right|\mathrm{~d} \boldsymbol{\lambda}\leq \frac{1}{M^k}\int_{|\lambda_j|\geq M}\left|\left(\frac{\partial^kf}{\partial x_j^k}\right)^{\wedge}(\boldsymbol{\lambda})\right|\mathrm{~d} \boldsymbol{\lambda},
\end{equation}
and we can have the bound as 
\begin{align}
&\left\|f(\boldsymbol{H})- \frac{1}{(2 \pi)^{\frac{n}{2}}} \int_{|\boldsymbol{\lambda}|_\infty \leq M} \widehat{f}(\boldsymbol{\lambda}) \mathrm{e}^{\mathrm{i} \boldsymbol{\lambda} \cdot \boldsymbol{H}} \mathrm{~d} \boldsymbol{\lambda}\right\|\notag\\
\leq&\frac{1}{(2 \pi)^{\frac{n}{2}}}\left\|\left(\int_\mathbb{R}\int_\mathbb{R}\cdots\int_\mathbb{R}-\int_{-M}^M\int_\mathbb{R}\cdots\int_\mathbb{R}\right)\widehat{f}(\boldsymbol{\lambda}) \mathrm{e}^{\mathrm{i} \boldsymbol{\lambda} \cdot \boldsymbol{H}} \mathrm{~d} \boldsymbol{\lambda}\right\|+\notag\\
&\frac{1}{(2 \pi)^{\frac{n}{2}}}\left\|\left(\int_{-M}^M\int_\mathbb{R}\cdots\int_\mathbb{R}-\int_{-M}^M\int_{-M}^M\cdots\int_\mathbb{R}\right)\widehat{f}(\boldsymbol{\lambda}) \mathrm{e}^{\mathrm{i} \boldsymbol{\lambda} \cdot \boldsymbol{H}} \mathrm{~d} \boldsymbol{\lambda}\right\|+\cdots+\notag\\
&\frac{1}{(2 \pi)^{\frac{n}{2}}}\left\|\left(
\int_{-M}^M\cdots\int_{-M}^M\int_\mathbb{R}-\int_{-M}^M\cdots\int_{-M}^M\int_{-M}^M\right)\widehat{f}(\boldsymbol{\lambda}) \mathrm{e}^{\mathrm{i} \boldsymbol{\lambda} \cdot \boldsymbol{H}} \mathrm{~d} \boldsymbol{\lambda}\right\|\notag\\
\leq&\frac{1}{(2 \pi)^{\frac{n}{2}}}\left(\sum_{j=1}^n\int_{[-M,M]^{j-1}}\int_{|\lambda_j|\geq M}\int_{\mathbb{R}^{n-j}}|\widehat{f}(\boldsymbol{\lambda}) | \mathrm{~d} \boldsymbol{\lambda}\right)\notag\\
\leq&\frac{1}{(2 \pi)^{\frac{n}{2}}}\left(\sum_{j=1}^n\frac{1}{M^k}\int_{[-M,M]^{j-1}}\int_{|\lambda_j|\geq M}\int_{\mathbb{R}^{n-j}}\left|\left(\frac{\partial^kf}{\partial x_j^k}\right)^{\wedge}(\boldsymbol{\lambda})\right| \mathrm{~d} \boldsymbol{\lambda}\right)\notag\\
\leq&\frac{1}{(2 \pi)^{\frac{n}{2}}}\cdot\frac{1}{M^k}\sum_{j=1}^n\left\|\left(\frac{\partial^kf}{\partial x_j^k}\right)^{\wedge}\right\|_{L^1(\mathbb{R}^n)}.
\end{align}
\end{proof}

Next, we take $N$ discrete points equally spaced in each interval $[-M,M]$, and approximate $f(\boldsymbol{H})$ by its Riemann sum $f_N(\boldsymbol{H})$ defined as 
\begin{equation}
f_N(\boldsymbol{H}) = \frac{(2M)^n}{(2 \pi)^\frac{n}{2}N^n}\sum_{k_1=0}^{N-1}\cdots\sum_{k_n=0}^{N-1} \widehat{f}(\lambda_{1,k_1},\cdots,\lambda_{n,k_n})\prod_{j=1}^{n}\mathrm{e}^{\mathrm{i} \lambda_{j,k_j} H_j}.
\label{function:discrete_fourier_transform}
\end{equation}
Here $\lambda_{j,k_j}=-M+\frac{2k_jM}{N}$. Discrete error can be given by the following lemma:
\begin{lemma}
\label{lemma:fourier_discretization_error}
Suppose $f(\boldsymbol{x})\in H^s(\mathbb{R}^n)\cap L^1(\mathbb{R}^n)$, $s>k+\frac{n}{2}$ and for $j=1,\cdots,n$, $x_jf(\boldsymbol{x})\in L_{1}(\mathbb{R}^n)$ then we have the bound
\begin{equation}
    \left\|\frac{1}{(2 \pi)^{\frac{n}{2}}} \int_{|\boldsymbol{\lambda}|_\infty \leq M} \widehat{f}(\boldsymbol{\lambda}) \mathrm{e}^{\mathrm{i} \boldsymbol{\lambda} \cdot \boldsymbol{H}} \mathrm{~d} \boldsymbol{\lambda}-f_N(\boldsymbol{H})\right\|\leq\frac{(2M)^{n+1}}{(2 \pi)^{n}N}\left(\left\|f\right\|_{L_1} \sum_{j=1}^n\|H_j\|+ \sum_{j=1}^n\left\|x_jf\right\|_{L_1}\right).
\end{equation}    
\end{lemma}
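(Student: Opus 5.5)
The proof will be a Lipschitz estimate for the matrix-valued integrand combined with a cell-by-cell Riemann-sum comparison, parallel to the proof of \cref{proposition:Discrete_integration_error_by_contour_integral}. Write
\[
G(\boldsymbol\lambda):=\frac{1}{(2\pi)^{n/2}}\widehat f(\boldsymbol\lambda)\,\mathrm{e}^{\mathrm{i}\boldsymbol\lambda\cdot\boldsymbol H}
\]
on the cube $[-M,M]^n$. Split the cube into the $N^n$ cells
\[
Q_{\boldsymbol k}=\prod_j[\lambda_{j,k_j},\lambda_{j,k_j+1}],
\]
each of volume $(2M/N)^n$. By construction $f_N(\boldsymbol H)=\sum_{\boldsymbol k}G(\boldsymbol\lambda_{\boldsymbol k})\,(2M/N)^n$, where $\boldsymbol\lambda_{\boldsymbol k}$ is the lower corner of $Q_{\boldsymbol k}$. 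The error is therefore
\[
\sum_{\boldsymbol k}\int_{Q_{\boldsymbol k}}\bigl(G(\boldsymbol\lambda)-G(\boldsymbol\lambda_{\boldsymbol k})\bigr)\,\mathrm d\boldsymbol\lambda .
\]

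\textbf{Step 1: Lipschitz bound on $G$.} Since the $H_j$ commute, $\mathrm{e}^{\mathrm{i}\boldsymbol\lambda\cdot\boldsymbol H}$ is unitary, and
\[
\|\mathrm{e}^{\mathrm{i}\boldsymbol\lambda\cdot\boldsymbol H}-\mathrm{e}^{\mathrm{i}\boldsymbol\mu\cdot\boldsymbol H}\|\le\sum_j|\lambda_j-\mu_j|\,\|H_j\|.
\]
This follows either by changing one coordinate at a time and using $\|\mathrm{e}^{\mathrm{i}aH}-\mathrm{e}^{\mathrm{i}bH}\|\le|a-b|\|H\|$, or directly from the simultaneous diagonalization used in the proof of \cref{equation:matrix_fourier_transform}. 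For the scalar factor, differentiate under the integral in \cref{equation:fourier_transform}:
\[
\partial_{\lambda_j}\widehat f=(2\pi)^{-n/2}\int(-\mathrm{i}x_j)f(\boldsymbol x)\,\mathrm{e}^{-\mathrm{i}\boldsymbol\lambda\cdot\boldsymbol x}\,\mathrm d\boldsymbol x .
\]
Hence $|\partial_{\lambda_j}\widehat f|\le(2\pi)^{-n/2}\|x_jf\|_{L_1}$, which is justified by the hypothesis $x_jf\in L_1$ and dominated convergence. We also have $|\widehat f|\le(2\pi)^{-n/2}\|f\|_{L_1}$. Writing $\widehat f(\boldsymbol\lambda)U(\boldsymbol\lambda)-\widehat f(\boldsymbol\mu)U(\boldsymbol\mu)$ as $\widehat f(\boldsymbol\lambda)(U(\boldsymbol\lambda)-U(\boldsymbol\mu))+(\widehat f(\boldsymbol\lambda)-\widehat f(\boldsymbol\mu))U(\boldsymbol\mu)$, where $U(\boldsymbol\lambda)=\mathrm{e}^{\mathrm{i}\boldsymbol\lambda\cdot\boldsymbol H}$, gives
\[
\|G(\boldsymbol\lambda)-G(\boldsymbol\mu)\|\le\frac{1}{(2\pi)^n}\sum_j|\lambda_j-\mu_j|\bigl(\|f\|_{L_1}\|H_j\|+\|x_jf\|_{L_1}\bigr).
\]

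\textbf{Step 2: summing over cells.} On $Q_{\boldsymbol k}$ every coordinate satisfies $|\lambda_j-\lambda_{j,k_j}|\le 2M/N$. So
\[
\|G(\boldsymbol\lambda)-G(\boldsymbol\lambda_{\boldsymbol k})\|\le\frac{2M}{(2\pi)^nN}\,S,\qquad S:=\|f\|_{L_1}\sum_j\|H_j\|+\sum_j\|x_jf\|_{L_1}.
\]
Integrating over a cell multiplies this by $(2M/N)^n$, and summing over the $N^n$ cells multiplies by $N^n$. The total is $\frac{(2M)^{n+1}}{(2\pi)^nN}S$, which is the claimed bound. No $1/2$ is gained from averaging over the cell, but none is needed.

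\textbf{Main obstacle.} The only delicate point is Step 1: justifying that $\widehat f$ is Lipschitz with constant governed by $\|x_jf\|_{L_1}$. Differentiation under the integral sign is standard here. Alternatively, one can avoid derivatives altogether with the pointwise bound $|\mathrm{e}^{-\mathrm{i}\lambda_jx_j}-\mathrm{e}^{-\mathrm{i}\mu_jx_j}|\le|\lambda_j-\mu_j||x_j|$ inside the integral, applied one coordinate at a time. I would use this second route, since it needs only $x_jf\in L_1$. Everything else is bookkeeping identical to the contour-integral discretization proof.
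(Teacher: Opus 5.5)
Your proposal is correct and follows essentially the same route as the paper. The paper also uses the split $\widehat f(\boldsymbol\lambda)\bigl(U(\boldsymbol\lambda)-U(\boldsymbol\lambda_{\boldsymbol k})\bigr)+\bigl(\widehat f(\boldsymbol\lambda)-\widehat f(\boldsymbol\lambda_{\boldsymbol k})\bigr)U(\boldsymbol\lambda_{\boldsymbol k})$ on each cell, the bounds $|\widehat f|\le(2\pi)^{-n/2}\|f\|_{L_1}$ and $|\partial_{\lambda_j}\widehat f|\le(2\pi)^{-n/2}\|x_jf\|_{L_1}$, and summation over the $N^n$ cells, giving the same constant; your derivative-free estimate via $|\mathrm{e}^{-\mathrm{i}\lambda_jx_j}-\mathrm{e}^{-\mathrm{i}\mu_jx_j}|\le|\lambda_j-\mu_j||x_j|$ is only a slightly more elementary substitute for the paper's $\sup|\partial_{\lambda_j}\widehat f|$ step.
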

\begin{proof}
    
We can provide an estimate of the discrete error

\begin{align}
\label{equation:fourier_discretization_error}
&\left\|\frac{1}{(2 \pi)^{\frac{n}{2}}} \int_{|\boldsymbol{\lambda}|_\infty \leq M} \widehat{f}(\boldsymbol{\lambda}) \mathrm{e}^{\mathrm{i} \boldsymbol{\lambda} \cdot \boldsymbol{H}} \mathrm{~d} \boldsymbol{\lambda}-\frac{(2M)^n}{(2 \pi)^\frac{n}{2}N^n}\sum_{k_1=0}^{N-1}\cdots\sum_{k_n=0}^{N-1} \widehat{f}(\lambda_{1,k_1},\cdots,\lambda_{n,k_n})\prod_{j=1}^{n}\mathrm{e}^{\mathrm{i} \lambda_{j,k_j} H_j}\right\|\notag\\
\leq&\sum_{k_1=0}^{N-1}\cdots\sum_{k_n=0}^{N-1} \frac{1}{(2 \pi)^{\frac{n}{2}}}\left\|\int_{\lambda_{1,k_1}}^{\lambda_{1,k_1+1}}\cdots\int_{\lambda_{n,k_n}}^{\lambda_{n,k_n+1}}\left(\widehat{f}(\boldsymbol{\lambda}) \mathrm{e}^{\mathrm{i} \boldsymbol{\lambda} \cdot \boldsymbol{H}}-\widehat{f}(\lambda_{1,k_1},\cdots,\lambda_{n,k_n})\prod_{j=1}^{n}\mathrm{e}^{\mathrm{i} \lambda_{j,k_j} H_j}\right)\mathrm{~d} \boldsymbol{\lambda}\right\|\notag\\
\leq&\sum_{k_1=0}^{N-1}\cdots\sum_{k_n=0}^{N-1} \frac{1}{(2 \pi)^{\frac{n}{2}}}\int_{\lambda_{1,k_1}}^{\lambda_{1,k_1+1}}\cdots\int_{\lambda_{n,k_n}}^{\lambda_{n,k_n+1}}\left\|\widehat{f}(\boldsymbol{\lambda}) \left(\mathrm{e}^{\mathrm{i} \boldsymbol{\lambda} \cdot \boldsymbol{H}}-\prod_{j=1}^{n}\mathrm{e}^{\mathrm{i} \lambda_{j,k_j} H_j}\right)\right\|\mathrm{~d} \boldsymbol{\lambda}+\notag\\
&\sum_{k_1=0}^{N-1}\cdots\sum_{k_n=0}^{N-1} \frac{1}{(2 \pi)^{\frac{n}{2}}}\int_{\lambda_{1,k_1}}^{\lambda_{1,k_1+1}}\cdots\int_{\lambda_{n,k_n}}^{\lambda_{n,k_n+1}}\left\|\left(\widehat{f}(\boldsymbol{\lambda})-\widehat{f}(\lambda_{1,k_1},\cdots,\lambda_{n,k_n})\right)\prod_{j=1}^{n}\mathrm{e}^{\mathrm{i} \lambda_{j,k_j} H_j}\right\|\mathrm{~d} \boldsymbol{\lambda}\notag\\
\leq&\sum_{k_1=0}^{N-1}\cdots\sum_{k_n=0}^{N-1} \frac{1}{(2 \pi)^{\frac{n}{2}}} \int_{\lambda_{1,k_1}}^{\lambda_{1,k_1+1}}\cdots\int_{\lambda_{n,k_n}}^{\lambda_{n,k_n+1}}\left(\sup_{\boldsymbol{\lambda}}\left|\widehat{f}(\boldsymbol{\lambda})\right| \sum_{j=1}^n\frac{2M}{N}\|H_j\|+ \sum_{j=1}^n\frac{2M}{N}\sup_{\boldsymbol{\lambda}}\left|\frac{\partial}{\partial \lambda_j}\widehat{f}(\boldsymbol{\lambda})\right|\right)\mathrm{~d} \boldsymbol{\lambda}\notag\\
=&\frac{(2M)^{n+1}}{(2 \pi)^{\frac{n}{2}}N}\left(\sup_{\boldsymbol{\lambda}}\left|\widehat{f}(\boldsymbol{\lambda})\right| \sum_{j=1}^n\|H_j\|+ \sum_{j=1}^n\sup_{\boldsymbol{\lambda}}\left|\frac{\partial}{\partial \lambda_j}\widehat{f}(\boldsymbol{\lambda})\right|\right)\notag\\
=&\frac{(2M)^{n+1}}{(2 \pi)^{n}N}\left(\sup_{\boldsymbol{\lambda}}\left|\int_{\mathbb{R}^n} f(\boldsymbol{x})e^{-i\boldsymbol{\lambda}\cdot \boldsymbol{x}}\,d\boldsymbol{x}\right| \sum_{j=1}^n\|H_j\|+ \sum_{j=1}^n\sup_{\boldsymbol{\lambda}}\left|\int_{\mathbb{R}^n}(-i x_j) f(\boldsymbol{x})e^{-i\boldsymbol{\lambda}\cdot \boldsymbol{x}}\,d\boldsymbol{x}\right|\right)\notag\\
\leq&\frac{(2M)^{n+1}}{(2 \pi)^{n}N}\left(\left\|f\right\|_{L_1} \sum_{j=1}^n\|H_j\|+ \sum_{j=1}^n\left\|x_jf\right\|_{L_1}\right).
\end{align}
\end{proof}

Combining \cref{lemma:fourier_discretization_error} and \cref{lemma:fourier_truncation_error}, We can give the error bound between the function $f(\boldsymbol{H})$ and its discrete estimation $f_N(\boldsymbol{H})$.
\begin{equation}
\label{equation:Discrete_integration_error_by_fourier_transform}
    \left\|f(\boldsymbol{H})-f_N(\boldsymbol{H})\right\|\leq\frac{(2M)^{n+1}}{(2 \pi)^{n}N}\left(\left\|f\right\|_{L_1} \sum_{j=1}^n\|H_j\|+ \sum_{j=1}^n\left\|x_jf\right\|_{L_1}\right)+\frac{1}{(2 \pi)^{\frac{n}{2}}}\cdot\frac{1}{M^k}\sum_{j=1}^n\left\|\left(\frac{\partial^kf}{\partial x_j^k}\right)^{\wedge}\right\|_{L^1(\mathbb{R}^n)}.
\end{equation}

\subsection{Quantum algorithm}
\label{subsection:Fourier Transform Algorithm}

In this part, we introduce a quantum algorithm to approximate $f_N(\boldsymbol{H})$. Note that for $j=1,\cdots,n$ we have block encoding \( U_{H_j} \) of the matrix \( H_j \). 
Then
\begin{equation}
    U_{H_j}\ket{0^{a}}\ket{\psi} = \frac{1}{\alpha_j}\ket{0^{a}}H_j\ket{\psi} +\ket{\perp}.
\end{equation}
Recall that $\lambda_{j,k_j}=-M+\frac{2k_jM}{N}$, $|\lambda_{j,k_j}|\leq M$ for any $j$ and $k_j$. Let $V_j$ be controlled rotations satisfying
\begin{equation}
    V_j\ket{k_j}\ket{0}_{V_j} = \ket{k_j}\left(\frac{\lambda_{j,k_j}}{M}\ket{0}+\sqrt{1-\left(\frac{\lambda_{j,k_j}}{M}\right)^2}\ket{1}\right),
    \label{operator:control_rotation_fourier}
\end{equation}
then
\begin{equation}
    \left(V_j \otimes I_{a}\otimes I_{m}\right) (I_{k_j}\otimes I_{V_j}\otimes U_{H_j})\ket{k_j} \ket{0}_{V_j}\ket{0^{a}}\ket{\psi} = \ket{k_j}\ket{0}_{V_j}\ket{0^{a}}\frac{\lambda_{j,k_j}}{\alpha_j M}H_j\ket{\psi}+\ket{\widetilde{\perp}}, 
     \label{operator:control_block_encoding_fourier}
\end{equation}
This is an $(\alpha_j, a+1)$-block-encoding of Hamiltonian $\frac{\lambda_{j,k_j}}{M}H_j$ controlled by $\ket{k_j}$, where $\ket{\widetilde{\perp}}$ means vectors where the second or third register is not zero. For any Hamiltonian $H$, let $U_{H}$ be an $(\alpha_j, a+1)$-block-encoding of $H$ and $t\in\mathbb{R}$, then \cite[Corollary 60]{GilyenSuLowEtAl2018} shows that we can construct an $(1, a+3, \epsilon^\prime)$-block-encoding of $e^{\mathrm{i} t H}$ with using the unitary $U_{H}$ a total number of times
$$
\mathcal{O}\left(\alpha_j|t|+\frac{\log (1 / \epsilon^\prime)}{\log (e+\log (1 / \epsilon^\prime) /(\alpha_j|t|))}\right).
$$
The construction of this circuit has nothing to do with the information of $H$, therefore similar to \cref{fig:circuit_Control_QSVT}, we can apply the control circuit \cref{operator:control_block_encoding_fourier} to $U_H$ and let $t=M$ to get an $(1, a+3, \epsilon^\prime)$-block-encoding of $e^{\mathrm{i}  \lambda_{j,k_j}H_j}$ controlled by $\ket{k_j}$.

Considering this controlled block encoding as $\widetilde{U}_{k_j}$, we can also get a block encoding of $\prod_{j=1}^{n}\mathrm{e}^{\mathrm{i} \lambda_{j,k_j} H_j}$ with compression gadget according to \cref{fig:circuit_multi_ctrl}, and it can also be the select oracle SEL in LCU algorithm. The prepare oracle $C$ is a unitary such that
\begin{equation}
    C\ket{0}\cdots\ket{0}=\frac{1}{\sqrt{\|\boldsymbol c\|_1}} \sum_{k_1=0}^{N-1}\cdots\sum_{k_n=0}^{N-1} \sqrt{c_{\boldsymbol k}}\ket{k_1}\cdots\ket{k_n},
\end{equation}
where $\boldsymbol c=(c_{(0,\cdots,0)},c_{(0,\cdots,1)},\cdots,c_{(N-1,\cdots,N-1)})$, and
\begin{equation}
c_{\boldsymbol k}=c_{(k_1,\cdots,k_n)}=\frac{(2M)^n}{(2 \pi)^\frac{n}{2}N^n} \widehat{f}(\lambda_{1,k_1},\cdots,\lambda_{n,k_n}).
\end{equation}
Then by \cref{lemma:lcu}, the operator 
\begin{equation}\label{operator:total_circuit_fourier}
     \left(\widetilde{C} \otimes I\right)\mathrm{SEL}\left(C\otimes I\right)
\end{equation}
gives a block encoding of $$\sum_{k_1=0}^{N-1}\cdots\sum_{k_n=0}^{N-1} \frac{(2M)^n}{(2 \pi)^\frac{n}{2}N^n} \widehat{f}(\lambda_{1,k_1},\cdots,\lambda_{n,k_n})\prod_{j=1}^{n}\mathrm{e}^{\mathrm{i} \lambda_{j,k_j} H_j}, $$ which is an approximation of $f_{N}(\boldsymbol H)$.

\subsection{Complexity analysis}

This part we analyze the complexity the algorithm given in \cref{subsection:Fourier Transform Algorithm}.

In the previous discussion, we showed that in order to construct $\widetilde{U}_{k_j}$, which is an $(1, a+3, \epsilon^\prime)$-block-encoding of $\mathrm{e}^{\mathrm{i} \lambda_{j,k_j} H_j}$ controlled by $k_j$, we need
\begin{equation}
\mathcal{O}\left(\alpha_jM+\frac{\log (1 / \epsilon^\prime)}{\log (e+\log (1 / \epsilon^\prime) /(\alpha_jM))}\right).
\end{equation}
calls to $U_{H_j}$. Suppose
\begin{equation}
\widetilde{U}_{k_j}\ket{k_j}\ket{0^{a+3}}\ket{\psi}=\ket{k_j}\ket{0^{a+3}}g_{\lambda_{j,k_j}}(H_j)\ket{\psi}+\ket{\perp},
\end{equation}
then $\|g_{\lambda_{j,k_j}}(H_j)\|\leq1$ and $\|g_{\lambda_{j,k_j}}(H_j)-e^{\mathrm{i} \lambda_{j,k_j} H_j}\|\leq\epsilon^\prime.$ Let
\begin{equation}
    \widetilde{f}_N(\boldsymbol H)=\sum_{k_1=0}^{N-1}\cdots\sum_{k_n=0}^{N-1} c_{\boldsymbol k}\prod_{j=1}^n g_{\lambda_{j,k_j}}(H_j),
\end{equation}
we know that
\begin{equation}
\label{equation:fourier_QSVT_error}
\left\|\widetilde{f}_N(\boldsymbol H) -f_{N}(\boldsymbol H)\right\|
    \leq\left(\sum_{k_1=0}^{N-1}\cdots\sum_{k_n=0}^{N-1} \|c_{\boldsymbol k}\|\right)\left\| \prod_{j=1}^ng_{\lambda_{j,k_j}}(H_j)-\prod_{j=1}^ne^{\mathrm{i} \lambda_{j,k_j} H_j}\right\|
    \leq \|\boldsymbol c\|_1 n\epsilon^\prime.
\end{equation}

Also choose proper $M$, $N$ so that $$\frac{(2M)^{n+1}}{(2 \pi)^{n}N}\left(\left\|f\right\|_{L_1} \sum_{j=1}^n\|H_j\|+ \sum_{j=1}^n\left\|x_jf\right\|_{L_1}\right)+\frac{1}{(2 \pi)^{\frac{n}{2}}}\cdot\frac{1}{M^k}\sum_{j=1}^n\left\|\left(\frac{\partial^kf}{\partial x_j^k}\right)^{\wedge}\right\|_{L^1(\mathbb{R}^n)}\leq \|\boldsymbol c\|_1 n\epsilon^\prime,$$
then together with \cref{equation:Discrete_integration_error_by_fourier_transform},
\begin{equation}
\left\|\widetilde{f}_N(\boldsymbol H) -f(\boldsymbol H)\right\| 
    \leq 2\|\boldsymbol c\|_1 n \epsilon^\prime.
\end{equation}
Due to that $\|\frac{u}{\|u\|}-\frac{v}{\|v\|}\|\leq2\frac{\|u-v\|}{\|u\|}$, we need
\begin{equation}
    \left\|\frac{f(\boldsymbol H)\ket{\psi}}{\|f(\boldsymbol H)\ket{\psi}\|}-\frac{\widetilde{f}_N(\boldsymbol H)\ket{\psi}}{\|\widetilde{f}_N(\boldsymbol H)\ket{\psi}\|}\right\|
    \leq \frac{4\|\boldsymbol c\|_1 n \epsilon^\prime}{\|f(\boldsymbol H)\ket{\psi}\|}\leq\epsilon.
\end{equation}

Therefore let
\begin{equation}
    \epsilon^\prime=\frac{\|f(\boldsymbol H)\ket{\psi}\|\epsilon}{4\|\boldsymbol c\|_1 n}
\end{equation}
and
\begin{equation}
\frac{(2M)^{n+1}}{(2 \pi)^{n}N}\left(\left\|f\right\|_{L_1} \sum_{j=1}^n\|H_j\|+ \sum_{j=1}^n\left\|x_jf\right\|_{L_1}\right)+\frac{1}{(2 \pi)^{\frac{n}{2}}}\cdot\frac{1}{M^k}\sum_{j=1}^n\left\|\left(\frac{\partial^kf}{\partial x_j^k}\right)^{\wedge}\right\|_{L^1(\mathbb{R}^n)}\leq\frac{\|f(\boldsymbol H)\ket{\psi}\|\epsilon}{4},
\end{equation}
then \cref{operator:total_circuit_fourier} is a block encoding of $f(\boldsymbol H)$ with block encoding factor $\|\boldsymbol{c}\|_1$ and error $\epsilon$. A possible choice of $M, N$ is
\begin{equation}
\label{parameter:M_fourier}
    M\geq\left(\frac{1}{(2 \pi)^{\frac{n}{2}}}\cdot\frac{8}{\|f(\boldsymbol H)\ket{\psi}\|\epsilon}\cdot\sum_{j=1}^n\left\|\left(\frac{\partial^kf}{\partial x_j^k}\right)^{\wedge}\right\|_{L^1(\mathbb{R}^n)}\right)^\frac{1}{k},
\end{equation}
\begin{equation}
\label{parameter:N_fourier}
    N\geq\frac{(2M)^{n+1}}{(2 \pi)^{n}}\cdot\left(\left\|f\right\|_{L_1} \sum_{j=1}^n\|H_j\|+ \sum_{j=1}^n\left\|x_jf\right\|_{L_1}\right)\cdot\frac{8}{\|f(\boldsymbol H)\ket{\psi}\|\epsilon}.
\end{equation}

According to \cref{fig:circuit_multi_ctrl}, we need $n\log N+\log n+a+4$ ancilla qubits.
The successful probability of this circuit is $\|\widetilde{f}_N(\boldsymbol H)\ket{\psi}\|^2/\|\boldsymbol c\|_1^2$, therefore with amplitude amplification, 
$\mathcal{O}(\|\boldsymbol c\|_1/\|\widetilde{f}_N(\boldsymbol H)\ket{\psi}\|)$
calls to \cref{operator:total_circuit_fourier} can make the algorithm succeed with probability at least $1/2$.

On the other hand,
\begin{equation}
    \|\boldsymbol c\|_1=\sum_{k_1=0}^{N-1}\cdots\sum_{k_n=0}^{N-1} \frac{(2M)^n}{(2 \pi)^\frac{n}{2}N^n} \left|\widehat{f}(\lambda_{1,k_1},\cdots,\lambda_{n,k_n})\right|
    = \mathcal{O} \left(\frac{1}{(2 \pi)^{\frac{n}{2}}} \int_{\mathbb{R}^n} \left|\widehat{f}(\boldsymbol{\lambda})\right| \mathrm{~d} \boldsymbol{\lambda}\right),
\end{equation}
and we use $\widetilde{f}_N(\boldsymbol H)$ to approximate $f(\boldsymbol H)$. For numerical format of sufficiently high precision, 
\begin{equation}
\begin{aligned}
\|\boldsymbol c\|_1\lesssim& \frac{1}{(2 \pi)^{\frac{n}{2}}} \int_{\mathbb{R}^n} \left|\widehat{f}(\boldsymbol{\lambda})\right| \mathrm{~d} \boldsymbol{\lambda}=
\frac{1}{(2 \pi)^{\frac{n}{2}}}\|\widehat{f}\|_{L^1(\mathbb{R}^n)}
\le
\frac{1}{(2 \pi)^{\frac{n}{2}}}\left(
\int_{\mathbb{R}^n}
(1+|\xi|^2)^{-s}\,d\xi
\right)^{1/2}
\|f\|_{H^s(\mathbb{R}^n)}\\
=&
\frac{1}{(2 \pi)^{\frac{n}{2}}}\left(
\pi^{n/2}
\frac{\Gamma\!\left(s-\frac{n}{2}\right)}
{\Gamma(s)}
\right)^{1/2}
\|f\|_{H^s(\mathbb{R}^n)},
\qquad
s>\frac{n}{2}.
\end{aligned}
\end{equation}
Therefore, $$\mathcal{O}(\|\boldsymbol c\|_1/\|\widetilde{f}_N(\boldsymbol H)\ket{\psi}\|)=\mathcal{O}\left(\frac{\left(
\pi^{n/2}
\frac{\Gamma\!\left(s-\frac{n}{2}\right)}
{\Gamma(s)}
\right)^{1/2}
\|f\|_{H^s(\mathbb{R}^n)}}{(2 \pi)^{\frac{n}{2}}\|f(\boldsymbol H)\ket{\psi}\|}\right),$$
and
\begin{equation}
\label{parameter:epsilon_prime_fourier}
    \frac{1}{\epsilon^\prime}=\mathcal{O}\left(\frac{2n\left(
\pi^{n/2}
\frac{\Gamma\!\left(s-\frac{n}{2}\right)}
{\Gamma(s)}
\right)^{1/2}
\|f\|_{H^s(\mathbb{R}^n)}}{ (2 \pi)^{\frac{n}{2}}\|f(\boldsymbol H)\ket{\psi}\|\epsilon}\right)
\end{equation}

Summarizing these results, we can get the following theorem.

\begin{theorem}[Complexity of the Fourier transform approach]
Let $U_{H_j}$ be an $(\alpha_j,a,0)$-block-encoding of $H_j$ for $j=1,\cdots,n$. Under the assumption that $f(\boldsymbol{x})\in H^s(\mathbb{R}^n)\cap L^1(\mathbb{R}^n)$, $s>k+\frac{n}{2}$ and $x_jf(\boldsymbol{x})\in L_{1}(\mathbb{R}^n)$ for $j=1,\cdots,n$, we can solve the problem defined by Definition \ref{definition:main_problem} using a quantum circuit with
$$\mathcal{O}\left(\left(\alpha_jM+\frac{\log (1 / \epsilon^\prime)}{\log (e+\log (1 / \epsilon^\prime) /(\alpha_jM))}\right)\cdot\frac{\left(
\pi^{n/2}
\frac{\Gamma\!\left(s-\frac{n}{2}\right)}
{\Gamma(s)}
\right)^{1/2}
\|f\|_{H^s(\mathbb{R}^n)}}{(2 \pi)^{\frac{n}{2}}\|f(\boldsymbol H)\ket{\psi}\|}\right)$$ queries to $U_{H_j}$, 
$$\mathcal{O}\left(\frac{\left(
\pi^{n/2}
\frac{\Gamma\!\left(s-\frac{n}{2}\right)}
{\Gamma(s)}
\right)^{1/2}
\|f\|_{H^s(\mathbb{R}^n)}}{(2 \pi)^{\frac{n}{2}}\|f(\boldsymbol H)\ket{\psi}\|}\right)$$ queries to $U_{\psi}$ and $n\log N+\log n+a+4$ ancilla qubits, where $\epsilon^\prime$ is given in \cref{parameter:epsilon_prime_fourier}, $M$ in \cref{parameter:M_fourier} and $N$ in \cref{parameter:N_fourier}. 

\label{theorem:fourier_transform_complexity}
\end{theorem}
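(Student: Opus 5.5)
The proof of \cref{theorem:fourier_transform_complexity} assembles the pieces developed in this section; what is needed is an error budget and a count of oracle calls. First I would fix the circuit in \cref{operator:total_circuit_fourier}. Each $\widetilde{U}_{k_j}$ is obtained from the controlled $(\alpha_j,a+1)$-block-encoding of $\frac{\lambda_{j,k_j}}{M}H_j$ in \cref{operator:control_block_encoding_fourier}, fed into the Hamiltonian-simulation construction of \cite[Corollary 60]{GilyenSuLowEtAl2018} with evolution time $t=M$. This gives a $(1,a+3,\epsilon')$-block-encoding of $\mathrm{e}^{\mathrm{i}\lambda_{j,k_j}H_j}$, controlled on $\ket{k_j}$, at a cost of $\mathcal{O}\bigl(\alpha_jM+\log(1/\epsilon')/\log(e+\log(1/\epsilon')/(\alpha_jM))\bigr)$ calls to $U_{H_j}$. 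The essential point is that the QSVT phase sequence does not depend on $k_j$, because the $k_j$-dependence sits inside the controlled block-encoding. Chaining these with \cref{lemma:compression_gadget} yields SEL, and \cref{lemma:lcu} then gives a $(\|\boldsymbol c\|_1,\cdot,\cdot)$-block-encoding of $\widetilde f_N(\boldsymbol H)$.

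Next comes the error budget, which has three terms. The first is the truncation error from \cref{lemma:fourier_truncation_error}. The second is the Riemann-sum error from \cref{lemma:fourier_discretization_error}. The third is the simulation error $\|\boldsymbol c\|_1n\epsilon'$ from \cref{equation:fourier_QSVT_error}, obtained by telescoping the product of $n$ contractions. With the choices \cref{parameter:M_fourier} and \cref{parameter:N_fourier}, each of the first two terms is at most $\|f(\boldsymbol H)\ket{\psi}\|\epsilon/8$, so together they contribute at most $\|f(\boldsymbol H)\ket\psi\|\epsilon/4$. With $\epsilon'=\|f(\boldsymbol H)\ket\psi\|\epsilon/(4n\|\boldsymbol c\|_1)$, the third term is also at most $\|f(\boldsymbol H)\ket\psi\|\epsilon/4$. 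Applying $\|u/\|u\|-v/\|v\|\|\le 2\|u-v\|/\|u\|$ then bounds the state error by $\epsilon$.

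For the query count, fixed-point or standard amplitude amplification needs $\mathcal{O}(\|\boldsymbol c\|_1/\|\widetilde f_N(\boldsymbol H)\ket\psi\|)$ rounds. Since the errors above are small, $\|\widetilde f_N(\boldsymbol H)\ket\psi\|\ge\tfrac12\|f(\boldsymbol H)\ket\psi\|$. It remains to bound $\|\boldsymbol c\|_1$ by a Riemann-sum comparison with $(2\pi)^{-n/2}\|\widehat f\|_{L^1}$, and then use Cauchy--Schwarz against the weight $(1+|\boldsymbol\lambda|^2)^{-s}$, whose integral is $\pi^{n/2}\Gamma(s-\frac n2)/\Gamma(s)$. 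Each round calls SEL once, and hence each $U_{H_j}$ the stated number of times, and calls $U_\psi$ once. Multiplying gives the two query bounds. The ancilla count is $n\log N$ for the index registers, $\log n+1$ for the compression counter, and $a+3$ for the simulation block-encoding, for a total of $n\log N+\log n+a+4$.

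The main obstacle is the claim $\|\boldsymbol c\|_1\lesssim(2\pi)^{-n/2}\|\widehat f\|_{L^1}$. The Riemann sum of $|\widehat f|$ differs from its integral by a discretization error, which I would control using the Lipschitz bound on $\widehat f$ coming from $x_jf\in L^1$, exactly as in \cref{lemma:fourier_discretization_error}. The choice of $N$ already makes this difference $\mathcal{O}(\|f(\boldsymbol H)\ket\psi\|\epsilon)$, which is dominated by the main term. A secondary subtlety is that $\epsilon'$ depends on $\|\boldsymbol c\|_1$, which in turn depends on $N$. This circularity is harmless, however: one can use the $N$-independent upper bound on $\|\boldsymbol c\|_1$ when fixing $\epsilon'$, as in \cref{parameter:epsilon_prime_fourier}.
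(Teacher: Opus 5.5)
Your proposal is correct and follows the paper's proof essentially step by step: the same controlled Hamiltonian-simulation circuit, compression gadget and LCU, the same three-term error budget with the $\|f(\boldsymbol H)\ket{\psi}\|\epsilon/8$ splits coming from the choices of $M$ and $N$, the same $\epsilon'$, the same amplitude-amplification count, the same Cauchy--Schwarz bound on $\|\widehat f\|_{L^1}$, and the same ancilla tally. The one place you go beyond the paper is that you actually justify $\|\boldsymbol c\|_1\lesssim(2\pi)^{-n/2}\|\widehat f\|_{L^1}$, which the paper only asserts ``for numerical format of sufficiently high precision.'' Your Riemann-sum argument uses the Lipschitz estimate from \cref{lemma:fourier_discretization_error}, and it does close: the discrepancy is at most $\|f(\boldsymbol H)\ket{\psi}\|\epsilon/8$ by the choice of $N$, and this is dominated because $\|f(\boldsymbol H)\ket{\psi}\|\le\sup|f|\le(2\pi)^{-n/2}\|\widehat f\|_{L^1}$.
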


It is worth noting that although in this theorem the query complexity contains terms relating to the exponential dependence of $n$, when the properties of $f$ are sufficiently good, the actual query complexity doesn't have the exponential dependence of $n$, see \cref{Appendix:Non-exponential dependence of Fourier transform}.

\subsection{Ancilla-efficient algorithm}
Similar to \cref{subsection:Ancilla-efficient_Algorithm_contour}, for any $j$ and $k_j$ we would implement
$e^{\mathrm{i}  \lambda_{j,k_j}H_j}$ with QSVT separately, and with compression gadget, we can get a block encoding of $\prod_{j=1}^ne^{\mathrm{i} \lambda_{j,k_j} H_j}$ with $a+3+\log n$ ancilla qubits, and we name this block encoding $U_{\boldsymbol{k}}$.

Same as above, we can let $\ket{\widetilde{\psi}}=\ket{0^{a+3+\log n}}\otimes\ket{\psi}$ and $\widetilde{O}=\ket{0^{a+3+\log n}}\bra{0^{a+3+\log n}}\otimes O$, then 
\begin{equation}
\bra{\widetilde{\psi}}U_{\boldsymbol{k}}^\dagger\widetilde{O}U_{\boldsymbol{k}}\ket{\widetilde{\psi}}
=\bra{\psi} \left(\prod_{j=1}^ne^{\mathrm{i} \lambda_{j,k_j} H_j}\right)^\dagger O \prod_{j=1}^ne^{\mathrm{i} \lambda_{j,k_j} H_j}\ket{\psi}.
\end{equation}
Therefore, we can use the algorithm in Section \ref{Single-Ancilla LCU} with $\ket{\widetilde{\psi}}$ be the initial state and $\widetilde{O}$ be the observable. 
In the second step of Algorithm \ref{algorithm:random_LCU}, we can sample $V_1, V_2$ from distribution $\left\{U_{\boldsymbol{k}}, \frac{|c_{\boldsymbol{k}}|}{\|{\boldsymbol{c}}\|_1}\right\}$ with absorbing global phase. After the following steps in Algorithm \ref{algorithm:random_LCU}, we would get $\mu$ with
\begin{equation}
    \mathbb{E}(\mu)
    \approx\bra{\psi}f^\dagger(\boldsymbol{H}) Of(\boldsymbol{H})\ket{\psi},
\end{equation}
and \cref{theorem:Ancilla-efficient_convergence} holds.

We choose $M,N$ and $\epsilon^\prime$ such that 
\begin{align*}
    & \quad \left\|f(\boldsymbol{H})-f_N(\boldsymbol{H})\right\| \\
    & \leq\frac{(2M)^{n+1}}{(2 \pi)^{n}N}\left(\left\|f\right\|_{L_1} \sum_{j=1}^n\|H_j\|+ \sum_{j=1}^n\left\|x_jf\right\|_{L_1}\right)+\frac{1}{(2 \pi)^{\frac{n}{2}}}\cdot\frac{1}{M^k}\sum_{j=1}^n\left\|\left(\frac{\partial^kf}{\partial x_j^k}\right)^{\wedge}\right\|_{L^1(\mathbb{R}^n)}\\
    & \leq \frac{\xi_1}{2},
\end{align*}
and 
$$\left\|\widetilde{f}_N(\boldsymbol H) -f_{N}(\boldsymbol H)\right\| \leq \|\boldsymbol c\|_1 n\epsilon^\prime \leq \frac{\xi_1}{2}.$$
Then we need
\begin{equation}
\mathcal{O}\left(\alpha_jM+\frac{\log (1 / \epsilon^\prime)}{\log (e+\log (1 / \epsilon^\prime) /(\alpha_jM))}\right).
\end{equation}
calls to $U_{H_j}$.
For each time we need $a$ ancilla qubits for $U_{H_j}$, $2$ for QSVT, $\log n+1$ for compression gadget, and $1$ for LCU. Therefore, we have the following theorem. 

\begin{theorem}
Let $U_{H_j}$ be an $(\alpha_j,a,0)$-block-encoding of $H_j$ for $j=1,\cdots,n$. Let
$$
T =\mathcal{O} \left(\frac{\|O\|^2 \ln (1 / \xi_2)\left(\frac{1}{(2 \pi)^{\frac{n}{2}}} \left(
\pi^{n/2}
\frac{\Gamma\!\left(s-\frac{n}{2}\right)}
{\Gamma(s)}
\right)^{1/2}
\|f\|_{H^s(\mathbb{R}^n)}\right)^4}{\epsilon^2 }\right),
$$ we can solve the problem defined by Definition \ref{definition:estimation_problem} with probability at least $(1-\xi_2)^2$ using a quantum circuit with
$$
\mathcal{O}\left(\alpha_jM+\frac{\log (1 / \epsilon^\prime)}{\log (e+\log (1 / \epsilon^\prime) /(\alpha_jM))}\right)
$$ queries to $U_{H_j}$, $1$ query to $U_{\psi}$ and $a+\log n+4$ ancilla qubits, and repeat this circuit $T$ times.

\label{theorem:Ancilla-efficient_complexity_fourier}
\end{theorem}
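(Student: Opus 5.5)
The plan is to reduce \cref{theorem:Ancilla-efficient_complexity_fourier} to \cref{theorem:Ancilla-efficient_convergence}, applied with the Fourier-based coefficients $c_{\boldsymbol k}=\frac{(2M)^n}{(2\pi)^{n/2}N^n}\widehat f(\lambda_{1,k_1},\dots,\lambda_{n,k_n})$. The unitaries are the compressed products $U_{\boldsymbol k}$, which block-encode $\prod_{j=1}^n g_{\lambda_{j,k_j}}(H_j)$. Each complex phase of $c_{\boldsymbol k}$ is absorbed into $U_{\boldsymbol k}$, so the sampling distribution is $|c_{\boldsymbol k}|/\|\boldsymbol c\|_1$.

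\textbf{Step 1: approximation hypothesis.} I need the condition $\|f(\boldsymbol H)\ket{\psi}-\widetilde f_N(\boldsymbol H)\ket{\psi}\|\le\xi_1$ with $\xi_1=\epsilon/(6\|O\|\|f(\boldsymbol H)\|)$. Split this error into two halves:
\begin{itemize}
\item the Riemann-sum and truncation error $\|f(\boldsymbol H)-f_N(\boldsymbol H)\|$;
\item the QSVT error $\|\widetilde f_N(\boldsymbol H)-f_N(\boldsymbol H)\|$.
\end{itemize}
For the first half, use \cref{equation:Discrete_integration_error_by_fourier_transform}. Choose $M$ as in \cref{parameter:M_fourier} and $N$ as in \cref{parameter:N_fourier}, but with $\|f(\boldsymbol H)\ket\psi\|\epsilon$ replaced by a constant multiple of $\xi_1$. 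Then each of the two terms is at most $\xi_1/4$. For the second half, use \cref{equation:fourier_QSVT_error}, which gives the bound $\|\boldsymbol c\|_1 n\epsilon'$. So set $\epsilon'=\xi_1/(2n\|\boldsymbol c\|_1)$.

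\textbf{Step 2: per-circuit cost.} By \cite[Corollary 60]{GilyenSuLowEtAl2018}, applied with time $t=M$ to the $(\alpha_j,a+1)$ block-encoding of $\frac{\lambda_{j,k_j}}{M}H_j$ from \cref{operator:control_block_encoding_fourier}, each factor needs the stated number of calls to $U_{H_j}$. The value $1/\epsilon'$ enters only logarithmically. Ancilla qubits are counted as $a$ (for $U_{H_j}$) $+2$ (QSVT) $+\log n+1$ (compression gadget, \cref{lemma:compression_gadget}) $+1$ (Hadamard-test qubit). This gives $a+\log n+4$ in total, and each run uses one query to $U_\psi$.

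\textbf{Step 3: repetitions.} \cref{theorem:Ancilla-efficient_convergence} requires $T\ge\|O\|^2\ln(2/\xi_2)\|\boldsymbol c\|_1^4/\epsilon^2$. I substitute the bound already derived, $\|\boldsymbol c\|_1\lesssim(2\pi)^{-n/2}\|\widehat f\|_{L^1}\le(2\pi)^{-n/2}\big(\pi^{n/2}\Gamma(s-\tfrac n2)/\Gamma(s)\big)^{1/2}\|f\|_{H^s}$, which follows from Cauchy--Schwarz. This yields the stated $T$ and success probability $(1-\xi_2)^2$.

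\textbf{Main obstacle.} The delicate point is justifying that the Riemann sum $\|\boldsymbol c\|_1$ is bounded by a constant times the integral $(2\pi)^{-n/2}\|\widehat f\|_{L^1}$. My approach is the same Lipschitz argument used in \cref{lemma:fourier_discretization_error}, applied now to $|\widehat f|$. Its derivative is bounded by $(2\pi)^{-n/2}\|x_jf\|_{L^1}$, so the discrepancy is at most $(2M)^{n+1}N^{-1}(2\pi)^{-n}\sum_j\|x_jf\|_{L^1}$. The choice of $N$ in Step 1 already makes this term $O(\xi_1)$, hence lower order.

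A secondary subtlety is that $\|g_{\lambda}(H_j)\|\le1$ need not equal $1$, so the factors are not exactly unitary. \cref{theorem:Ancilla-efficient_convergence} is stated for block-encodings, so it absorbs this.
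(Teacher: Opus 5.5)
Your proposal is correct and follows the paper's own argument essentially step for step. The shared steps are the reduction to \cref{theorem:Ancilla-efficient_convergence} with phases absorbed into $U_{\boldsymbol k}$, the $\xi_1/2+\xi_1/2$ split between truncation/discretization error and QSVT error, the same ancilla tally, and the Sobolev bound on $\|\boldsymbol c\|_1$ for $T$. Your Lipschitz comparison of the Riemann sum $\|\boldsymbol c\|_1$ with $(2\pi)^{-n/2}\|\widehat f\|_{L^1}$ actually makes rigorous what the paper only asserts ``for numerical format of sufficiently high precision''.

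One bookkeeping slip: because $\boldsymbol k$ is sampled classically, you should apply \cite[Corollary 60]{GilyenSuLowEtAl2018} directly to $U_{H_j}$ with $t=\lambda_{j,k_j}$, $|t|\le M$, as the paper implicitly does. If you instead use the rotation-controlled $(\alpha_j,a+1)$ encoding of \cref{operator:control_block_encoding_fourier}, its extra rotation qubit makes your count $a+\log n+5$ rather than the stated $a+\log n+4$.
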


\section{Application of matrix polynomial}\label{sec:application}

In this section, we discuss quantum algorithms for multivariable matrix polynomials. 
A straightforward implementation of a multivariable matrix polynomial constructs block-encodings of its monomials by multiplication and combines them using LCU. For example, given an \((\alpha_1,a,0)\)-block-encoding of \(A_1\), multiplying \(d\) copies with separate ancilla registers yields an \((\alpha_1^d,ad,0)\)-block-encoding of \(A_1^d\). The compression gadget \cref{lemma:compression_gadget} reduces the ancilla requirement to \(a+O(\log d)\), while preserving the normalization factor \(\alpha_1^d\). Applying this construction to \(|\psi\rangle\) gives the success probability $p_{\mathrm{succ}}=\|A_1^d|\psi\rangle\|^2/\alpha_1^{2d}.$
Since each application of the product requires \(d\) queries to \(U_{A_1}\), amplitude amplification gives a total query cost of $\mathcal{O}\!\left({d\,\alpha_1^d}/{\|A_1^d|\psi\rangle\|}\right).$
Thus, even when \(\|A_1\|\leq1\), a fixed normalization factor \(\alpha_1>1\) introduces an exponential overhead in \(d\) for this construction.
The complexities of our methods are expressed in terms of global analytic quantities, including \(B\), \(L\), and the Fourier norms of the smoothly truncated polynomial and its derivatives, together with the output norm. Their dependence on the polynomial degree is implicit through these parameters. These quantities may themselves grow with the degree; for example, \(f(z)=z^d\) satisfies \(B=R^d\) on the contour \(|z|=R>1\).
However, when these global variables are controlled, both of our methods have nothing to do with the degree, number of terms and coefficients of monomials of the polynomial, which is a great benefit for polynomials that perform well but with many terms with large coefficients that cancel each other out, such as Chebyshev polynomials. 

For general polynomials, even if their degree $d$ is not high, when $n$ is large, describing their behavior requires $\binom{n+d}{d}$ terms. If we simply use LCU to sum the block encodings of each term, the LCU coefficients $\|\boldsymbol c\|_1$ will be very large, even if this polynomial sometimes performs well on the joint spectrum of the matrices we are interested in. Similarly, our method can overcome this problem.
\subsection{Contour integral}
Consider the problem of implementing a general multivariable matrix polynomial. Suppose that our aim is $f(\boldsymbol A)$, while $f$ is a multivariable complex polynomial and $\boldsymbol{A}=\left(A_{1}, \cdots, A_{\mathrm{n}}\right) \in \mathcal{A}^{\mathrm{n}}$ is an $n$-tuple of commuting matrices. Suppose the spectral radius of $A_j$ is $\rho(A_j)$, then all the eigenvalues of $A_j$ are covered by the disk $\{z:|z|\leq\rho(A_j) \}$, and we can choose $\Gamma_j$ as the boundary of $\{z:|z|\leq\rho(A_j)+a_j\}$, with $l_j=2\pi(\rho(A_j)+a_j)$.

Assume that for any $j$, $A_j$ can be diagonalized, then linear algebra concludes that they can be diagonalized by the same invertible matrix. Let $A_j = SD_jS^{-1}$, each element on the diagonal of $zI-D_j$ is not less than $a_j$, then
\begin{equation}
    \left\|(z I -A_j)^{-1}\right\| = \left\|S(z I -D_j)^{-1}S^{-1}\right\| \leq  \left\|S\right\|\left\|S^{-1}\right\|\left\|(z I -D_j)^{-1}\right\| \leq \frac{\kappa_{S}}{a_j}, 
\end{equation}
where $\kappa_S\geq\|S\|\|S^{-1}\|$ is the condition number of $S$. Therefore, we can let $\gamma_j= \frac{\kappa_{S}}{a_j}$ so that according to Theorem \ref{theorem:contour_integral_complexity}, we need $$\mathcal{O}\left(\frac{B\prod_{j=1}^n2\pi(\rho(A_j)+a_j)\frac{\kappa_{S}}{a_j}}{\left\|f(\boldsymbol A)\ket{\psi}\right\|}\cdot\frac{\kappa_{S}}{a_j}(\rho(A_j)+\alpha_j)\mathrm{log}\frac{B \prod_{j=1}^n2\pi(\rho(A_j)+a_j)\frac{\kappa_{S}}{a_j}}{\|f(\boldsymbol A)\ket{\psi}\|\epsilon}\right)$$ queries to $U_{A_j}$, $$\mathcal{O}\left(\frac{B\prod_{j=1}^n2\pi(\rho(A_j)+a_j)\frac{\kappa_{S}}{a_j}}{\left\|f(\boldsymbol A)\ket{\psi}\right\|}\right) $$ queries to $U_{\psi}$ and $$\mathcal{O}\left(\sum_{j=1}^n\log\frac{n(B\gamma_j+BK_{\Gamma_j}+L)\prod_{j=1}^n 2\pi(\rho(A_j)+a_j)\frac{\kappa_{S}}{a_j}}{\|f(\boldsymbol A)\ket{\psi}\|\epsilon}\right)+\log n+a+3$$ ancilla qubits to implement $f(A)$. Because $\rho (A_j)\leq\|A_j\|\leq\alpha_j$, we can choose $a_j=\mathcal{O}(\alpha_j)$ so that we need $$\mathcal{O}\left(K_n\frac{B\kappa_{S}^{n+1}}{\left\|f(\boldsymbol A)\ket{\psi}\right\|}\cdot \mathrm{log}\frac{B \prod_{j=1}^n\kappa_{S}}{\|f(\boldsymbol A)\ket{\psi}\|\epsilon}\right)$$ queries to $U_{A_j}$, $$\mathcal{O}\left(K_n\frac{B\prod_{j=1}^n\kappa_{S}}{\left\|f(\boldsymbol A)\ket{\psi}\right\|}\right) $$ queries to $U_{\psi}$ and $$\mathcal{O}\left(\sum_{j=1}^n\log\frac{n(B\gamma_j+BK_{\Gamma_j}+L)\prod_{j=1}^n \kappa_{S}}{\|f(\boldsymbol A)\ket{\psi}\|\epsilon}\right)+\log n+a+3$$ ancilla qubits, where $K_n$ is a constant raised to the power of n. It is only related to the maximum value of $f$ in the disk, and has nothing to do with the degree of $f$ or the coefficient of the monomial in $f$. 

\subsection{Fourier transform}

In this part we suppose that our aim is $f(\boldsymbol H)$, while $f$ is a multivariable real polynomial and $\boldsymbol{H}=\left(H_{1}, \cdots, H_{\mathrm{n}}\right)$ is an $n$-tuple of commuting Hermitian matrices. Since for $j=1,\cdots,n$, $H_j$ is Hermitian, all of its eigenvalues are real numbers. To apply the Fourier transform approach to a polynomial \(f\), we first construct a smooth, integrable function that agrees with \(f\) on the joint spectrum of \(\boldsymbol H\). This is necessary because a nonzero polynomial is not integrable over \(\mathbb R^n\). Since \(f(\boldsymbol H)\) depends only on the values of \(f\) on the joint spectrum, we can modify \(f\) outside a neighborhood of this spectrum without changing the target operator. Assume that the absolute values of its eigenvalues are all less than $K - \xi$, then let
\begin{equation}
\phi(x) = 
\begin{cases}
1 & \text{for } |x| \leq K-\xi \\
\left[
1+\exp\left(
\frac{\xi}{K-|x|}
-\frac{\xi}{|x|-K+\xi}
\right)
\right]^{-1} & \text{for } K-\xi < |x| < K \\
0 & \text{for } |x| \geq K
\end{cases},
\end{equation}
and define the kernel function
\begin{equation}
    k(\boldsymbol{x}) = \prod_{i=1}^n \phi(x_i).
\end{equation}
Then due to that $f\in C^\infty(\mathbf{R}^n), k\in C_c^\infty(\mathbf{R}^n)$, we know that
\begin{equation}
    g(\boldsymbol{x}):=f(\boldsymbol{x})\cdot k(\boldsymbol{x})\in C_c^\infty(\mathbf{R}^n),
\end{equation}
and $g(\boldsymbol{x})$ is supported in $[-K,K]^n$. For any $\boldsymbol{x}$ with $|\boldsymbol{x}|_\infty\leq K-\xi$, $g(\boldsymbol{x})=f(\boldsymbol{x})$. Due to that $[-K+\xi,K-\xi]^n$ covers the joint spectrum of $\boldsymbol{H}$, we have $g(\boldsymbol{H})=f(\boldsymbol{H})$. According to \cref{theorem:fourier_transform_complexity}, we need $$\mathcal{O}\left(\left(\alpha_jM+\frac{\log (1 / \epsilon^\prime)}{\log (e+\log (1 / \epsilon^\prime) /(\alpha_jM))}\right)\cdot\frac{\int_{\mathbb{R}^n} \left|\widehat{g}(\boldsymbol{\lambda})\right| \mathrm{~d} \boldsymbol{\lambda}}{(2 \pi)^{\frac{n}{2}}\|f(\boldsymbol H)\ket{\psi}\|}\right)$$ queries to $U_{H_j}$, 
$$\mathcal{O}\left(\frac{\int_{\mathbb{R}^n} |\widehat{g}(\boldsymbol{\lambda})| \mathrm{~d} \boldsymbol{\lambda}}{(2 \pi)^{\frac{n}{2}}\|f(\boldsymbol H)\ket{\psi}\|}\right)$$ queries to $U_{\psi}$ and $n\log N+\log n+a+4$ ancilla qubits, where 
\begin{equation}
    \epsilon^\prime\leq\frac{ (2 \pi)^{\frac{n}{2}}\|f(\boldsymbol H)\ket{\psi}\|\epsilon}{2n\int_{\mathbb{R}^n} \left|\widehat{g}(\boldsymbol{\lambda})\right| \mathrm{~d} \boldsymbol{\lambda}},
\end{equation}
\begin{equation}
    M\geq\left(\frac{1}{(2 \pi)^{\frac{n}{2}}}\cdot\frac{8}{\|f(\boldsymbol H)\ket{\psi}\|\epsilon}\cdot\sum_{j=1}^n\left\|\left(\frac{\partial^kg}{\partial x_j^k}\right)^{\wedge}\right\|_{L^1(\mathbb{R}^n)}\right)^\frac{1}{k},
\end{equation}
and
\begin{equation}
    N\geq\frac{(2M)^{n+1}}{(2 \pi)^{n}}\cdot\left(\left\|g\right\|_{L_1} \sum_{j=1}^n\|H_j\|+ \sum_{j=1}^n\left\|x_jg\right\|_{L_1}\right)\cdot\frac{8}{\|f(\boldsymbol H)\ket{\psi}\|\epsilon}.
\end{equation}

{\small
\bibliographystyle{alpha}
\bibliography{reference}
}
\appendix
\section{Non-exponential dependence of Fourier transform}
\label{Appendix:Non-exponential dependence of Fourier transform}
In this section, we show how the results given by \cref{theorem:fourier_transform_complexity} behave when the properties of $f$ are good. 
Let $f(\boldsymbol{x})$ be the $n$-dimensional standard normal density:
$$
f(\boldsymbol{x}) = \frac{1}{(2 \pi)^{\frac{n}{2}}} \exp\Big(-\frac{\|\boldsymbol{x}\|^2}{2}\Big), \quad \boldsymbol{x} \in \mathbb{R}^n.
$$

Using the Fourier transform convention

$$
\widehat f(\boldsymbol{\lambda})=\frac{1}{(2 \pi)^{\frac{n}{2}}}\int_{\mathbb{R}^n} f(\boldsymbol{x})\mathrm{e}^{-\mathrm{i}\boldsymbol{\lambda}\cdot\boldsymbol{x}}\mathrm d\boldsymbol{x},
$$

we obtain
$$
\widehat f(\boldsymbol{\lambda})=\frac{1}{(2 \pi)^{\frac{n}{2}}}\exp\left(-\frac{\|\boldsymbol{\lambda}\|^2}{2}\right).
$$

The \(L^1\)-norm of \(\widehat f\) is
$$
\int_{\mathbb R^n} \left|\widehat f(\boldsymbol{\lambda})\right|\,\mathrm d\boldsymbol{\lambda}
=\frac{1}{(2 \pi)^{\frac{n}{2}}}\int_{\mathbb R^n} \mathrm{e}^{-\|\boldsymbol{\lambda}\|^2/2}\,\mathrm d\boldsymbol{\lambda}
=\frac{1}{(2 \pi)^{\frac{n}{2}}}\cdot(2\pi)^{\frac{n}{2}}=1,
$$
Therefore, if $\boldsymbol{H}$ is a tuple of $0$ matrix, then $f(\boldsymbol{H})=(2\pi)^{-\frac{n}{2}}\cdot I$, and $\|f(\boldsymbol H)\ket{\psi}\|=(2\pi)^{-\frac{n}{2}}$. Substituting the result, we have
$$\mathcal{O}\left(\frac{\int_{\mathbb{R}^n} \left|\widehat{f}(\boldsymbol{\lambda})\right| \mathrm{~d} \boldsymbol{\lambda}}{(2 \pi)^{\frac{n}{2}}\|f(\boldsymbol H)\ket{\psi}\|}\right)=\mathcal{O}(1),
$$
this means the query complexity of $U_\psi$ is a constant, and
$$
\frac{1}{\epsilon^\prime}=\mathcal{O}\left(\frac{2n\int_{\mathbb{R}^n} \left|\widehat{f}(\boldsymbol{\lambda})\right| \mathrm{~d} \boldsymbol{\lambda}}{ (2 \pi)^{\frac{n}{2}}\|f(\boldsymbol H)\ket{\psi}\|\epsilon}\right)=\mathcal{O}\left(\frac{n}{\epsilon}\right).
$$
The differentiation property under this convention is
$$
\left(\frac{\partial^k f}{\partial x_j^k}\right)^\wedge(\boldsymbol{\lambda})
=(i\lambda_j)^k \widehat f(\boldsymbol{\lambda})
=\frac{(i\lambda_j)^k}{(2 \pi)^{\frac{n}{2}}} \mathrm{e}^{-\|\boldsymbol{\lambda}\|^2/2}.
$$
Hence
$$
\left\|\left(\frac{\partial^k f}{\partial x_j^k}\right)^\wedge\right\|_{L^1}
=\frac{1}{(2 \pi)^{\frac{n}{2}}}\int_{\mathbb R^n} |\lambda_j|^k \mathrm{e}^{-\|\boldsymbol{\lambda}\|^2/2}\mathrm d\boldsymbol{\lambda}.
$$

By symmetry,
$$
\sum_{j=1}^n \left\|\left(\frac{\partial^k f}{\partial x_j^k}\right)^\wedge\right\|_{L^1}
= \frac{n}{(2 \pi)^{\frac{n}{2}}}\int_{\mathbb R^n} |\lambda_1|^k \mathrm{e}^{-\|\boldsymbol{\lambda}\|^2/2}\,\mathrm d\boldsymbol{\lambda}.
$$
Using separability,
$$
\int_{\mathbb R^n} |\lambda_1|^k \mathrm{e}^{-\|\boldsymbol{\lambda}\|^2/2}\,\mathrm d\boldsymbol{\lambda}
=\left(\int_{\mathbb R} |\lambda|^k \mathrm{e}^{-\lambda^2/2}\,\mathrm d\lambda\right)
\left(\int_{\mathbb R} \mathrm{e}^{-\lambda^2/2}\,\mathrm d\lambda\right)^{\,n-1}.
$$
Now
$$
\int_{\mathbb R} \mathrm{e}^{-\lambda^2/2}\,\mathrm d\lambda=\sqrt{2\pi},
\qquad
\int_{\mathbb R} |\lambda|^k \mathrm{e}^{-\lambda^2/2}\,\mathrm d\lambda
=2\int_0^\infty \lambda^k \mathrm{e}^{-\lambda^2/2}\,\mathrm d\lambda
=2^{(k+1)/2}\,\Gamma\left(\frac{k+1}{2}\right).
$$
Combining these yields
$$
\sum_{j=1}^n \left\|\left(\frac{\partial^k f}{\partial x_j^k}\right)^\wedge\right\|_{L^1}
= \frac{n}{(2 \pi)^{\frac{n}{2}}}\cdot 2^{(k+1)/2}\Gamma\left(\frac{k+1}{2}\right)\cdot(2\pi)^{\frac{n-1}{2}}
$$
For $k=1$, we have

$$
\Gamma(1)=1, 
\sum_{j=1}^n \left\|\left(\frac{\partial^k f}{\partial x_j^k}\right)^\wedge\right\|_{L^1}
= n\sqrt{\frac{2}{\pi}}.
$$

Still assume that $\boldsymbol{H}$ is a tuple of $0$ matrix, then $\|f(\boldsymbol H)\ket{\psi}\|=(2\pi)^{-n/2}$ and the condition that $M$ needs to satisfy is:
$$
    M\geq\frac{1}{(2 \pi)^{\frac{n}{2}}}\cdot\frac{8}{\|f(\boldsymbol H)\ket{\psi}\|\epsilon}\cdot\sum_{j=1}^n\left\|\left(\frac{\partial f}{\partial x_j}\right)^{\wedge}\right\|_{L^1(\mathbb{R}^n)}=\mathcal{O}\left(\frac{n}{\epsilon}\right).
$$

Therefore, when we estimate c using the intermediate inequality $\|\boldsymbol c\|_1\lesssim \frac{1}{(2 \pi)^{\frac{n}{2}}} \int_{\mathbb{R}^n} \left|\widehat{f}(\boldsymbol{\lambda})\right| \mathrm{~d} \boldsymbol{\lambda}$, the query complexity of $U_{H_j}$ is
$$\mathcal{O}\left(\left(\alpha_jM+\frac{\log (1 / \epsilon^\prime)}{\log (e+\log (1 / \epsilon^\prime) /(\alpha_jM))}\right)\frac{\int_{\mathbb{R}^n} \left|\widehat{f}(\boldsymbol{\lambda})\right| \mathrm{~d} \boldsymbol{\lambda}}{(2 \pi)^{\frac{n}{2}}\|f(\boldsymbol H)\ket{\psi}\|}\right)=\widetilde{\mathcal{O}}\left(\frac{\alpha_jn}{\epsilon}\right)$$ 
which means the query complexity of $U_{H_j}$ depends approximately linearly on $n$. Since
$$
\sum_{j=1}^n \sup_{\boldsymbol{\lambda}}\left|\frac{\partial}{\partial \lambda_j}\widehat f(\boldsymbol{\lambda})\right|
=\sum_{j=1}^n\frac{1}{(2\pi)^{\frac{n}{2}}}\sup_{\boldsymbol{\lambda}}\left(|\lambda_j|\mathrm{e}^{-\|\boldsymbol{\lambda}\|^2/2}\right)=\frac{n}{(2\pi)^{\frac{n}{2}}}\mathrm{e}^{-\frac{1}{2}},
$$
we can use the conclusion given in \cref{equation:fourier_discretization_error} to get the bound of $N$ and further bound the number of ancilla qubits, the condition of $N$ is

$$
\begin{aligned}
N\geq\frac{(2M)^{n+1}}{(2 \pi)^{\frac{n}{2}}}\left(\sup_{\boldsymbol{\lambda}}\left|\widehat{f}(\boldsymbol{\lambda})\right| \sum_{j=1}^n\|H_j\|+ \sum_{j=1}^n\sup_{\boldsymbol{\lambda}}\left|\frac{\partial}{\partial \lambda_j}\widehat{f}(\boldsymbol{\lambda})\right|\right)\cdot\frac{8}{\|f(\boldsymbol H)\ket{\psi}\|\epsilon}=\frac{8n(2M)^{n+1}}{(2\pi)^{\frac{n}{2}}\mathrm{e}^{\frac{1}{2}}\epsilon},
\end{aligned}
$$
and the amount of ancilla qubits $n\log N+\log n+a+4=\mathcal{O}(n^2\log\frac{n}{\epsilon}+a)$
which is polynomial, rather than exponential, in the dimension \(n\).
\end{document}